\long\def\symbolfootnote[#1]#2{\begingroup%
\def\thefootnote{\fnsymbol{footnote}}\footnote[#1]{#2}\endgroup}
\renewcommand{\baselinestretch}{1.2}
\renewcommand{\vec}[1]{\mathbf{#1}}
\renewcommand{\hat}[1]{\oldhat{\mathbf{#1}}}
\newcommand\ddfrac[2]{\frac{\displaystyle #1}{\displaystyle #2}}
\theoremstyle{definition}
\newtheorem{theorem}{Theorem}
\newtheorem{remark}{Remark}
\newtheorem{assum}{Assumption}
\DeclareMathOperator{\sgn}{sgn}
\DeclareMathOperator{\sat}{sat}
\DeclareMathOperator{\diag}{diag}
\DeclareSymbolFont{newfont}{OML}{cmm}{m}{it}
\DeclareMathSymbol{\Varrho}{3}{newfont}{37}
\begin{document}
\pagenumbering{roman}

\thispagestyle{empty}
\begin{center}
\vspace*{1.5cm}
{\Large \bf Robust Trajectory Tracking and Payload Delivery of a Quadrotor Under Multiple State Constraints}

\vspace*{3.75cm}
{\large Thesis submitted in partial fulfillment\\}
{\large  of the requirements for the degree of \\}

\vspace*{1cm}
{\it {\large Master of Science} \\
{\large in\\}
{\large Electronics and Communication Engineering by Research\\}}

\vspace*{1cm}
{\large by}

\vspace*{5mm}
{\large SOURISH GANGULY\\}
{\large 2019702015\\
{\small \tt sourish.ganguly@research.iiit.ac.in}}

\vspace*{4.0cm}
\includegraphics[width=14mm]{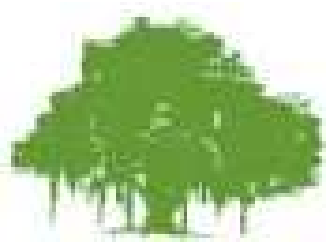}\\
{\large ROBOTICS RESEARCH CENTER \\}
{\large International Institute of Information Technology\\}
{\large Hyderabad - 500 032, INDIA\\}
{\large OCTOBER 2021\\}
\end{center}

\newpage
\thispagestyle{empty}
\renewcommand{\thesisdedication}{{\large Copyright \copyright~~Sourish Ganguly, 2021\\}{\large All Rights Reserved\\}}
\thesisdedicationpage

\newpage
\thispagestyle{empty}
\vspace*{1.5cm}
\begin{center}
{\Large International Institute of Information Technology\\}
{\Large Hyderabad, India\\}
\vspace*{3cm}
{\Large \bf CERTIFICATE\\}
\vspace*{1cm}
\noindent
\end{center}
It is certified that the work contained in this thesis, titled ``Robust Trajectory Tracking and Payload Delivery of a Quadrotor Under Multiple State Constraints'' by Sourish Ganguly, has been carried out under
my supervision and is not submitted elsewhere for a degree.

\vspace*{3cm}
\begin{tabular}{cc}
\underline{\makebox[1in]{}} & \hspace*{5cm} \underline{\makebox[2.5in]{}} \\
Date & \hspace*{5cm} Advisor: Dr. Spandan Roy
\end{tabular}
\oneandhalfspace

\newpage
\thispagestyle{empty}
\renewcommand{\thesisdedication}{\large To MY PARENTS}
\thesisdedicationpage

\mastersthesis
\renewcommand{\baselinestretch}{1.5}

\chapter*{Acknowledgments}
\label{ch:ack}
First and foremost, I wish to express my sincerest gratitude to my advisor, Dr. Spandan Roy, for his endless support throughout my years of research at IIIT, Hyderabad. Because of his guidance, I have been able to vastly improve my outlook towards tackling challenging problems. He has always been extremely kind and approachable whenever I had any doubts regarding the subject matter. In this two years, I have greatly enjoyed learning and re-learning intricacies of control theory, as he has always mentored me in such a way where I could have space to think independently and confidently. My progress and growth as a researcher is because of his constant encouragement.

Secondly, I want to thank the Robotics Research Center (RRC). I wish to thank every senior and every research scholar who never made me forget how exciting research life is. Their positive attitude towards tackling problem statements constantly gave me strength to progress. 

I feel immensely privileged to have been a part of IIIT Hyderabad. I have made great friends here who have always been a source of indispensable support. The seniors have always been extremely kind, and I have learnt a great deal from them. I have thoroughly enjoyed every second of my time here, and I will try my best to carry its core values and principles for the rest of my life.

I wish to thank all my professors throughout my journey as an engineering student who always been extremely supportive of my decision to pursue a career in research, and my school teachers who, even to this day, encourage me to keep pushing towards my goals.

I cannot thank my family enough for supporting my dreams. My journey to research would have never reached its destination had it not been for the unfathomable amount of faith my family has in me. Throughout my ups and downs, my mother has always been my strongest pillar, reminding me of my strength and my potential. I have always been inspired by my father's determination and focus, and he has greatly helped me in developing self-discipline and composure.

Lastly, I want to thank all my friends who have been there for me through countless years of us growing up together. I have always found their presence during times of difficulty, and, whenever I have felt unsure, they have cheered me on and helped me restore faith in myself. I am extremely fortunate to have them by my side. 

\chapter*{Abstract}
\label{ch:abstract}
With quadrotors becoming immensely popular in applications such as relief operations, infrastructure maintenance etc., a key control design challenge arises when the quadrotor has to manoeuvre through constrained spaces during various operational scenarios: for example, inspecting a pipeline within predefined velocity and space, dropping relief material at a precise location under tight spaces etc., under the face of parametric uncertainties and external disturbances. To tackle such scenarios, a controller needs to \textit{ensure a predefined tracking accuracy so as not to violate the constraints while simultaneously tackling uncertainties and disturbances}. However, state-of-the-art controllers dealing with constrained system motion are either not applicable for an underactuated system like quadrotor, or cannot tackle system uncertainties under full state constraints. This work attempts to fill such a gap in literature by designing Barrier Lyapunov Function (BLF) based robust controllers to satisfy multiple state-constraints while simultaneously negotiating parametric uncertainties and external disturbances. The superiority of the BLF control method over a typical unconstrained controller is demonstrated, followed by a robust control design to satisfy position and orientation constraints on quadrotor dynamics. Finally, full state-constraints on a quadrotor(i.e., constraints on the position, orientation, linear velocity and angular velocity) are satisfied with robust control. For each control design, the closed-loop system stability is studied analytically and the efficacy of the design is validated extensively either via realistic simulation scenarios or via experiments performed on a real quadrotor.

\tableofcontents
\listoffigures
\listoftables

\chapter{Introduction}
\label{ch:intro}

\renewcommand{\vec}[1]{\mathbf{#1}}
\renewcommand{\hat}[1]{\oldhat{\mathbf{#1}}}

Quadrotors have become increasingly popular in the research and the commercial domain, owing to their simple mechanical structure, vertical take-off and landing ability, and energy efficiency. They are relatively inexpensive and simple in implementation in the domain of Unmanned Aerial Vehicles (UAV), and so, the recent years saw a rapid growth and applicability in various domains. For example, quadrotors are extensively used nowadays in agricultural applications \cite{zhang2018agriculture, rao2019pesticide}, surveillance operations \cite{park2019surveillance, fari2019addressing}, wildlife and forest monitoring \cite{Torabi2018wildlife, wang2020problem}, search and rescue operations \cite{ghazali2016maritime, invernizzi2019dynamic, invernizzi2019integral}, etc. However, the inherent underactuated nature of a quadrotor poses considerable challenge to the Control researchers \cite{Ahmad2008underactuated, roy2021adaptive, baldi2020towards}. Additionally, quadrotors have fast-changing dynamics which imposes practical constraints on the control design in constrained spaces and under parametric uncertainties \cite{Tal2020Aggressive, roy2019role}. Therefore, a desirable objective is to develop control framework to tackle uncertainties while allowing a quadrotor to manoeuvre in various operational scenarios.

In the following sections of this introductory chapter, the motivation for this thesis is discussed, followed by a brief overview of the relevant fundamental principles and methods involved in constructing this work. Then, a brief section on the contribution of this thesis is provided, followed by the overall organization of the thesis chapters.

\section{Motivation} 
\label{sec12:Motivation}

\subsection{Quadrotors in rescue operations and payload transportation}

Out of the several applications of UAVs, the chief motivation behind this thesis is the increasing usage of quadrotors in critical missions such as rescue and relief operations and efficient package delivery during disaster management where it has to face several key challenges during autonomous manoeuvring. To illustrate, consider the following scenario:

\begin{figure*}[ht!]
    \includegraphics[width=1\textwidth, height=3.5in]{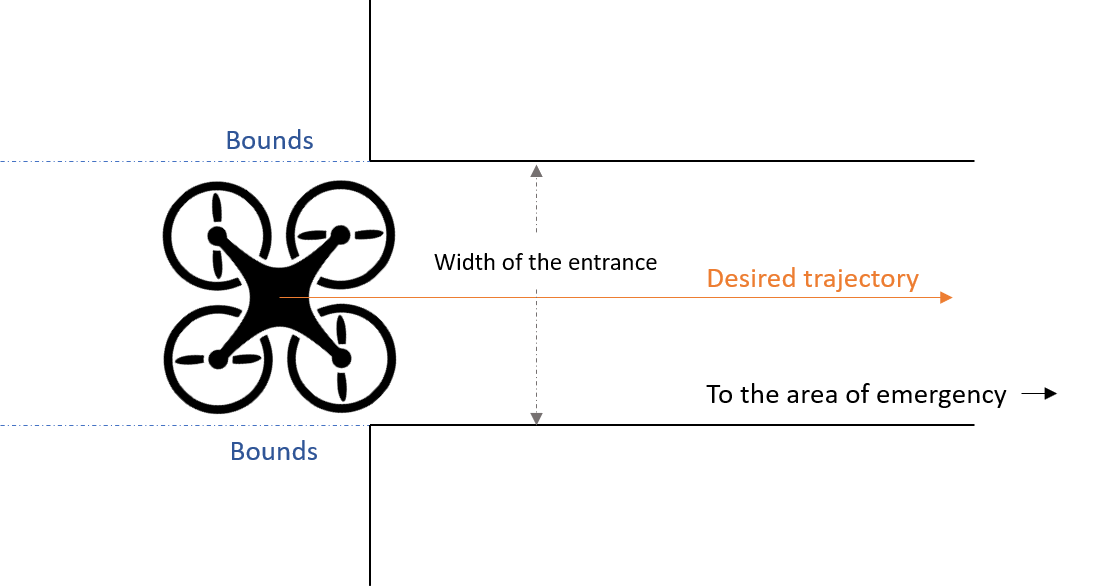}
    \centering
    \caption{A scenario where the quadrotor has to autonomously manoeuvre through a narrow opening in an emergency operation}
     \label{fig:quad_scenario}
\end{figure*}
Fig. \ref{fig:quad_scenario} describes a scenario where a quadrotor (top view) is commanded to follow a trajectory that proceeds through a narrow opening that leads to an emergency area. In this situation:

\begin{itemize}
\item The quadrotor has to successfully manoeuvre through the narrow space. It might also encounter other physically constrained areas such as open windows, pipelines, etc.
\item If the quadrotor is carrying a payload whose exact mass is not completely known (for example: ration supply), it has to deliver it to the designated spot with high precision.
\end{itemize}

This scenario demands the quadrotor to operate under strict physical constraints which limit exercising free movement at any instant. To illustrate, the quadrotor in Fig. \ref{fig:quad_scenario} passing through the narrow opening cannot deviate more than the distance between the boundary wall and the edge of a quadrotor frame, measured when the centre of the quadrotor frame lies on the desired trajectory. Similar constraints in the other directions also apply (which are not decipherable from the diagram) where the quadrotor needs to always maintain a tracking error less than what would result in collision with the boundary walls. Additionally, while dropping the payload at the target, the quadrotor cannot deviate from the allowable bounds of the target. To summarize, constraint handling is critical not only to the payload delivery operation, but to the safety of the quadrotor itself.

These critical operations have been explored in various works: vision based SLAM technology is employed in \cite{Bera2019Narrow} for autonomous navigation of a quadrotor; the work \cite{Su2019SwingConstraint} details optimal trajectory generation of a quadrotor where the payload has a constraint on its swing angle. 
Controlling the quadrotor under such critical scenarios having \textit{space constraints} requires precise tracking with \textit{predefined accuracy}, which becomes even more challenging if the knowledge on the system dynamic parameters are limited (such as the variability in the payload mass and center-of-mass (COM)) and there are significant external disturbances (such as wind gust, etc). In the following section, related works and contributions regarding the above issues are explored.


\subsection{Related works and challenges}
\label{subsection: Related works}

It is known that the states of a plant undergo transients. Some of the key parameters of these transients include the rise-time, settling time, overshoot, etc. These parameters are usually controlled via tuning of the user-defined control gains. In purely theoretical systems, where the system parameters are completely known, tuning is a definitive approach to mitigate undesirable transients. Research has progressed to robust and adaptive control solutions which rely on tuning to ensure user-specific desirable performance of a stable system. However, it is shown in \cite{KPTee2009BLF, dhar2021indirect} that, in absence of any dedicated design parameter that can explicitly impose a restriction on the maximum deviation of the state from its desired quantity, conventional robust or adaptive control solutions require extensive tuning to approach towards any user-specified performance limit. In other words, tuning requires extensive trial runs in scenarios where physical constraints become key challenges to the tracking problem. This defeats the purpose of the quadrotor having a relatively quicker response to an emergency situation. This even becomes more challenging in the face of various uncertainties, such as limited knowledge on dynamic parameters and payload. Therefore, to ensure manoeuvring of a quadrotor under limited space with a predefined accuracy, one has to look beyond the state-of-the-art robust \cite{xu2008sliding, sanchez2012continuous, derafa2012super, madani2007sliding, roy2019new} and adaptive \cite{nicol2011robust, bialy2013lyapunov, roy2020robust, dydek2012adaptive, tran2018adaptive, roy2019simultaneous,  tian2019adaptive} control solutions. 

Usually, there are two popular approaches towards tackling such constraint handling problems:

\begin{itemize}
    \item Barrier Lyapunov function (BLF) based solutions  \cite{KPTee2009BLF, liu2016barrier,dasgupta2019singularity,sachan2019output}, and
    \item Model Predictive Control (MPC) method \cite{dhar2021indirect,berberich2020data,allgower2012nonlinear}
\end{itemize}

However, being computationally intensive,  MPC is more suitable for slowly-varying dynamics; whereas, a quadrotor has fast-changing dynamics. Further, MPC is mostly applied via system discretization, while a  {continuous-time system} representation is more practical  \cite{dhar2021indirect}. To illustrate, it is shown in \cite{CHIKASHA2017AMPC} that the Adaptive MPC requires the Recursive Least Square method for parameter estimation on a discrete state-space model. This requires high computational complexity, compared to the computationally light approach in this thesis.

Then, the question arises whether the existing BLF-based solutions can be applied to quadrotor. Most of the BLF-based controllers have been carried out on fully actuated systems (cf. \cite{KPTee2009BLF, liu2016barrier, sachan2019output}); a few recent works on quadrotors using BLF cannot deal with system uncertainty and external disturbances (cf. \cite{dasgupta2019singularity,kumar2020barrier}). Furthermore, as previously discussed, a quadrotor system is underactuated by nature, which indeed poses a great challenge to impose multiple state-constraints through the BLF method.

In view of the above discussions and to the best of the author's knowledge, a control solution for uncertain quadrotor systems under single or multiple state-constraints is still missing at large.

\section{Contribution of this thesis}

Based on the issues faced as described in the above discussion, this thesis attempts to fill a gap in literature by providing major contributions toward the following directions:

\begin{itemize}

    \item A BLF-based robust controller for a quadrotor system is formulated which obeys state-constraints on position and orientation on all six degrees-of-freedom (DoFs). The controller design imposes user-specific constraints on the position and the orientation, and eventually, on all the controlled states (i.e., position, orientation, linear and angular velocity). Additionally, the controller can tackle parametric uncertainties and external disturbances. Compared to \cite{dasgupta2019singularity}, constraints are imposed on all the available DoFs, thereby making the control problem more practical. 
    \item The closed-loop stability is verified analytically, and the performance of the formulated designs are extensively validated via various realistic simulation scenarios, and via experiments performed on a real quadrotor. The results obtained are compared with the state-of-the-art, to validate the efficacy of the proposed design.

\end{itemize}

\section{Background} 
\label{sec11:Background}

\subsection{Choosing Euler-Lagrange Systems}

Before proceeding towards a controller design, it is essential to model the dynamics of the plant. Since controllable plants are diverse, a challenge is to choose a representation of the dynamics that, to a great degree of accuracy, serves as an appropriate model to formulate a controller from. In other words, choosing a generalized representation of system dynamics for a large class of real-life systems eliminates the concern of having to search for different classes of plants for each designed controller. This is where the Euler-Lagrange (EL) representation comes to play. The EL representation is an extremely popular method for system modeling as it covers an enormous range of real-world system dynamics, including underwater vehicles, robotic manipulators, electrical circuits, aerial vehicles and marine vehicle \cite{spong2008robot, roy2019reduced, ye2020switching, roy2020vanishing, rayguru2021output, shukla2021robust}. This work chooses the EL representation because it appropriately models the dynamics of a quadrotor (discussed later in Chapter 3, Section 3.2). A brief overview of the Euler-Lagrange representation is given below, followed by the generalized EL dynamics that have been followed in the thesis.

The EL equations can be defined as a set of second order Ordinary Differential Equations which are solved by minimizing the integral of a smooth function $\mathcal{L}(t,\mathbf{q}(t), \dot{\mathbf{q}}(t))\in\mathbb{R}$, known as the Lagrangian, where $\mathbf q(t), \dot{\mathbf q} (t)\in\mathbb{R}^n$ are the generalized coordinates and the corresponding velocities in a defined configuration. The EL equations evolve by implementing Hamilton's Principle and following fundamental principles in calculus of variations \cite{cassel_2013}.

Let a scalar, known as the action-functional, be defined as:
\begin{align}
    \Gamma &\triangleq \int_{t_i}^{t_f}\mathcal{L}(t,\mathbf{q}(t), \dot{\mathbf{q}}(t))dt
\end{align}
where $\mathbf{q}(t_i)$ and $\mathbf{q}(t_f)$ are the states at the start-point and the end-point respectively of the path taken by the physical system defined by the Lagrangian. Hamilton's Principle states that the path must have stationary action \cite{cassel_2013}, which ultimately leads to the popularly known form of the Euler Lagrange equations:

\begin{align}
    \frac{\partial \mathcal{L}(t,\mathbf{q}(t), \dot{\mathbf{q}}(t))}{\partial \mathbf{q}(t)} - \frac{d}{dt}\left(\frac{\partial \mathcal{L}(t,\mathbf{q}(t), \dot{\mathbf{q}}(t))}{\partial \mathbf{\dot{q}}(t)}\right) &=0 \label{EL_conservative}
\end{align}

In mechanical systems, the introduction of non-conservative forces like control input, friction, drag, and other external disturbances modifies \eqref{EL_conservative} to the following form, as elaborated in \cite{roy2019adaptive_book}:

\begin{align}
    \frac{d}{dt}\left(\frac{\partial \mathcal{L}}{\partial \mathbf{\dot{q}}}\right)-\frac{\partial \mathcal{L}}{\partial \mathbf{q}}&=\boldsymbol{\tau}+\mathbf{d_s} - \mathbf{h}(\mathbf{q},{\dot{\mathbf q}}) \label{EL_non_conservative}
\end{align}
where $\boldsymbol{\tau} \in \mathbb{R}^n$ is the control input, $\mathbf{d_s} \in \mathbb{R}^n$ consist of the external disturbances, and $\mathbf{h(q,} \dot{\mathbf q}) \in \mathbb{R}^n$ represent frictional forces.

Eventually, for the purpose of this thesis, the following generalized Euler-Lagrange dynamics of $n$ DoFs is considered:
\begin{align}
    \mathbf{M(q)}\Ddot{\mathbf{q}} + \mathbf{C}(\mathbf{ q},\Dot{\mathbf{q}})\Dot{\mathbf{q}} + \mathbf{g(q)} + \mathbf{f}(\Dot{\mathbf{q}}) + \mathbf{d_{s}} &= \boldsymbol{\tau} \label{eq:EL_eqn}
\end{align}

where $\mathbf{M(q)} \in \mathbb{R}^{n \times n}$ the inertia matrix, $\mathbf{C(q}, \Dot{\mathbf{q}})\Dot{\mathbf{q}} \in \mathbb{R}^{n \times n}$ consists of Coriolis and centripetal terms, $\mathbf{g(q)} \in \mathbb{R}^{n}$ consists of gravitational terms, and $\mathbf{f}(\Dot{\mathbf{q}}) \in \mathbb{R}^{n}$ consists of frictional terms. The rest of the symbols have been already defined as they carry the same meanings.

The above equation has some essential properties (cf. \cite{roy2019adaptive_book}) which are heavily exploited in control design:

\begin{itemize}
    \item The generalized inertia matrix, $\mathbf{M(q)}$, is uniformly positive definite
    \item The term, $(\dot{\mathbf{M}}\mathbf{(q)}-2\mathbf{C(q},\dot{\mathbf{q}}))$ is skew-symmetric.
\end{itemize}
Other properties, such as the boundedness of the gravitational term, the disturbances, the inertial and the Coriolis matrices are discussed in detail in Chapters 3 and 4.

\subsection{The Barrier Lyapunov Function Approach}

In state-space analysis of a linear system, the stability is analysed generally via analysing the properties of the state-transition matrix \cite{OgataLinearControl}. However, for highly nonlinear systems where linearization techniques are not appropriate for stability analysis, the Lyapunov Direct Method \cite{khalil2002nonlinear}  is used. It is a powerful tool for stability analysis of a non-linear control system where one chooses a Lyapunov functional and utilizes it to analyse the behavior of the system at the equilibrium.

To illustrate, a generalized non-linear autonomous system is considered:
\begin{align}
    \dot{\mathbf{x}}(t) = \mathbf{f}(\mathbf{x}(t)) \label{eq: non-linear auto sys}
\end{align}
where $\mathbf{x} \in \mathbb{R}^n,~\mathbf{f}:\mathbb{R}^n \xrightarrow{} \mathbb{R}^n$.
Without losing generality, let the equilibrium be at $\mathbf{x}=\mathbf{0}$ (i.e., $\mathbf{f(0)=0}~\forall t\geq t_0,~\text{where}~t_0~\text{is the time at the initial condition of the states})$.

To proceed with the stability analysis, a generalized energy-like scalar functional $V=V(\mathbf{x})$, called the Lyapunov functional, is chosen. According to the Lyapunov stability criterion, the system is stable at $\mathbf{x=0}$ if:
\begin{itemize}
    \item $V(\mathbf{x})$ is positive-definite; i.e., $V>0 ~\forall \mathbf{x}\neq 0~\text{and}~V(\mathbf{0})=0$; and
    \item $\dot{V}= \nabla V(\mathbf{x})^\mathsf{T}\mathbf{f}(\mathbf{x})$ is negative semi-definite. For asymptotic stability, $\dot{V}$ must be negative definite.
\end{itemize}
\begin{figure*}[ht!]
    \includegraphics[width=4in, height=3.5in]{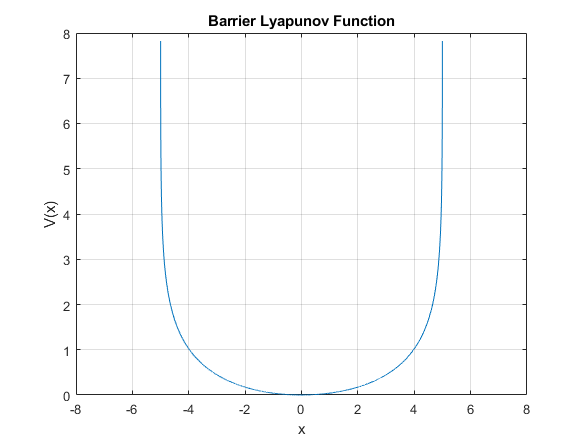}
    \centering
    \caption{A symmetrical Barrier Lyapunov Function pertaining to state $x\in\mathbb{R}$, with $k_{b}=5$}
     \label{fig:BLF_graph}
\end{figure*}
Barrier Lyapunov Functions \cite{KPTee2009BLF} are a special class of Lyapunov functions, where $V(\mathbf{x})$ is defined on the region $\mathcal{D}$ which contains the origin, such that the function is continuous and differentiable at all points in $\mathcal{D}$, and as $V \rightarrow \infty$, $x_i$ approach the boundary of $\mathcal{D}$, provided that $x_i(0) \in \mathcal{D}$; where $i=1,~2,~3,...,~n$.

Following the above definition, the following Barrier Lyapunov function on the system \eqref{eq: non-linear auto sys} with $n=1$ can be constructed:

\begin{align}
    V = \ddfrac{1}{2}\log{\left(\ddfrac{k_{b}^{2}}{k_{b}^{2} - x^{2}}\right)} \label{eq: BLF_1}
\end{align}

where the scalar $k_{b} \in \mathbb{R}^{+}$. It can be inferred from the  above equation that the domain of $V$ is positive definite for $x \in (-k_{b},~k_{b})$. It will be shown in the subsequent chapters that if $|x(0)|<|k_{b}|$, $x$ will never reach $k_{b}$, and $-k_{b}<x(t)<k_{b}~\forall t\geq0$. The nature of the function is illustrated in Fig. \ref{fig:BLF_graph}, which shows a symmetrical Barrier Lyapunov Function (i.e., the constraints are symmetrical about the y-axis), where  $k_{b}=5$. These properties make it suitable for constraint handling control problems, and hence, its properties have been extensively utilized in this thesis.






\section{Thesis Organization} 
\label{Organization}

The thesis is organized as follows:

\begin{itemize}
    \item Chapter 2 elaborates on the implementation and the effectiveness of the Barrier Lyapunov Function on multi-DoF EL dynamics to provide an explanation as to why it is superior to conventional `unconstrained' control schemes which depend on tuning to control overshoot in the transient response. The control design is discussed, followed by the stability analysis and a simulation scenario of a two DoF robotic manipulator.
    
    \item Chapter 3 introduces the robust control design formulated for a quadrotor under spatial constraints. The dynamics of the quadrotor are provided, followed by the control solution. This is followed by the stability analysis. Eventually, two simulation scenarios are provided, and the results are discussed.
    
    \item Chapter 4 extends the work of Chapter 3 by implementing full state-constraints on the quadrotor along with robust control. Similar to Chapter 3, the control solution is followed by the stability analysis. Finally, comparative experimental results are provided, followed by discussion of the results.
    
    \item Chapter 5 concludes the thesis and briefly discusses possible future work. 
\end{itemize}


\chapter{Barrier Lyapunov Function (BLF) based Control Design: an Overview }
\label{ch:chap2}
\section{Introduction}

The imposition of user-specified constraints on the output states impacts the transient performance of the system under control, by specifying the maximum allowable deviation of the constrained states from its desirable quantity. This chapter focuses on the design of a Barrier Lyapunov Function based controller (cf. \cite{KPTee2009BLF}) on an n-DoF Euler-Lagrange system, and validates the effectiveness of the control design through simulation of a 2-DoF robotic manipulator. The results are compared with conventional PID control to emphasize why extensive tuning is required for specified performance limits, whereas the BLF controller provides explicit control parameters that bar the states from violating the imposed constraints. This chapter serves as an overview of the efficacy of constrained trajectory tracking control, and serves as an inspiration for the control designs formulated for a quadrotor in Chapters 3 and 4.

The rest of this chapter is organized in the following manner: Section 2.2 provides the system dynamics for the control design, Section 2.3 provides the control problem, Section 2.4 shows the control solution, Section 2.5 elaborates on the stability analysis for the symmetrical BLF scenario, Section 2.6 elaborates on the stability analysis for asymmetrical BLF scenario, Section 2.7 contains the simulation results and analysis, and finally, Section 2.8 concludes the chapter.

This chapter uses the following notations: $\diag{\{\cdot \cdot \cdot\}}$ and $\mathbf{I}$ represent a diagonal and an identity matrix respectively.

\section{System Dynamics}

This chapter follows the EL dynamics equation \eqref{eq:EL_eqn} defined in Chapter 1.
Rewriting the equation, we get:
\begin{align}
    \mathbf{M(q)}\ddot{\mathbf{q}} + \mathbf{H}(\mathbf{q},\dot{\mathbf{q}}) = \boldsymbol{\tau} \label{eq:EL_eqn_shortened}
\end{align}
where $\mathbf{H}(\mathbf{q},\dot{\mathbf{q}})\triangleq \mathbf{C}(\mathbf{q},\dot{\mathbf{q}})\dot{\mathbf{q}} + \mathbf{g(q)} + \mathbf{f}(\dot{\mathbf{q}}) + \mathbf{d_{s}}$. The symbols and their meanings are already defined in Chapter 1.

Defining the states $\mathbf{x_{1} = q}$, and $\mathbf{x_{2}} = \dot{\mathbf{q}}$, the equation \eqref{eq:EL_eqn_shortened} is rewritten as:
 \begin{align}
 \dot{\mathbf{x}}_{\mathbf{1}}&=\mathbf{x_{2}}\\
 \dot{\mathbf{x}}_\mathbf{2} &= \mathbf{-M}^{-1}\mathbf{(x_{1})H(x_{1},x_{2})} + \mathbf{M}^{-1}\mathbf{(x_{1})}\boldsymbol{\tau} \label{eq: EL_eqn_alt}
 \end{align}
 
\section{Control problem}

Under the assumption that the desired trajectories to be tracked, $\mathbf{q_d}$, are sufficiently smooth and bounded, the control objective is defined as tracking under position constraints. In this chapter, both symmetrical and asymmetrical constraints are imposed on the system. Mathematically, if \\ $\mathbf{z_1}=[z_{11},~z_{12},...~z_{1n}]$ are position tracking errors, the objective is to design a control method that ensures that:
\begin{itemize}
    \item $|z_{1i}(t)|<|k_{b_i}|$ $\forall i$ (where $k_{b_i}$ are user-defined scalars) for the symmetrical constraint scenario, and
    \item $-k_{a_i}<z_{1i}(t)<k_{b_i}$ $\forall i$ (where $k_{a_i}$ and $k_{b_i}$ are user-defined scalars) for the asymmetrical constraint scenario.
\end{itemize}

\section{Control solution}
The following terms are defined:
\begin{align}
\mathbf{z_{1}} &= \mathbf{x_{1}-q_{d}}\\
\mathbf{z_{2}} &= \mathbf{x_{2}-\boldsymbol{\alpha}} \label{eq:trajectory}
\end{align}
where $\mathbf{z_1}$ is the tracking error, $\mathbf{z_2}$ is an auxiliary error variable, and $\boldsymbol{\alpha}$ is an auxiliary control variable defined subsequently.

The control input for the symmetrical constraint scenario is given as:
\begin{subequations}\label{eq:control_input_symmetrical}
\begin{align}
\boldsymbol{\tau}&=\mathbf{Mu_{sym}+H} \\
\boldsymbol{\alpha_1}&= -\mathbf{D_{1}}^{-1}\mathbf{K_{1}z_{1}} + \dot{\mathbf{q}}_\mathbf{d} \\
\mathbf{u_{sym}}& = \dot{\boldsymbol{\alpha}}_\mathbf{1} - \mathbf{K_{2}z_2} - \mathbf{D_{1}z_{1}}
\end{align}
\end{subequations}
where $\mathbf{K_2}~\text{and}~\mathbf{K_2}\in \mathbb{R}^{n \times n}$ are user-defined positive definite matrices and \\
\noindent $\mathbf{D_1} \triangleq \diag\Bigg\{{\ddfrac{1}{k_{b_1}^2-z_{11}^2}, \ddfrac{1}{k_{b_2}^2-z_{12}^2}, \cdot \cdot \cdot, \ddfrac{1}{k_{b_n}^2-z_{1n}^2}}\Bigg\}$.
Similarly, the control input for the asymmetrical constraint scenario is given as:
\begin{subequations}\label{eq:control_input_asymmetrical}
\begin{align}
\boldsymbol{\tau}&=\mathbf{Mu_{asy}+H} \\
\boldsymbol{\alpha_2}&=-\mathbf{D_{2}}^{-1}\mathbf{K_{1}z_{1}} + \Dot{\mathbf{q}}_\mathbf{d} \\
\mathbf{u_{asy}}&= \dot{\boldsymbol{\alpha}}_\mathbf{2} - \mathbf{K_{2}z_2} - \mathbf{D_{2}z_{1}}
\end{align}
\end{subequations}

where  $\mathbf{D_2} \triangleq \diag\Bigg\{{\ddfrac{r_1}{k_{b_1}^2-z_{11}^2}, \ddfrac{r_2}{k_{b_2}^2-z_{12}^2}, \cdot \cdot \cdot, \ddfrac{r_n}{k_{b_n}^2-z_{1n}^2}}\Bigg\}+ \diag\Bigg\{{\ddfrac{1-r_1}{k_{b_1}^2-z_{11}^2}, \ddfrac{1-r_2}{k_{b_2}^2-z_{12}^2}, \cdot \cdot \cdot, \ddfrac{1-r_n}{k_{b_n}^2-z_{1n}^2}}\Bigg\}$.

The value $r_i$ is formulated as:

$r_i = \begin{cases}
1 & z_{1i}>0\\
0 & z_{1i}\leq0
\end{cases}$

In the following stability analysis, it is proven that the trajectories never violate the constraints and they remain bounded $\forall t\geq0$.

\section{Stability analysis for the symmetrical BLF scenario}

\begin{theorem}
Under the initial conditions $|z_{1i}(0)|< k_{b_i}$ and using the  control law \eqref{eq:control_input_symmetrical}, the tracking error trajectories $z_{1i}$ remain bounded by the constraints $k_{b_i}$ as $|z_{1i}|< k_{b_i}$ with $i=1,2,3,...,n$ $\forall t > 0$.
\end{theorem}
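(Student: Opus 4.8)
The plan is to construct a composite Barrier Lyapunov Function built from the individual BLF terms for each tracking error coordinate together with a quadratic term in the auxiliary error, and then show that its derivative along the closed-loop trajectories is negative semidefinite, which keeps $V$ bounded and hence keeps each $z_{1i}$ strictly inside $(-k_{b_i},k_{b_i})$. Concretely, I would take
\begin{align}
V = \frac{1}{2}\sum_{i=1}^{n}\log\!\left(\frac{k_{b_i}^2}{k_{b_i}^2 - z_{1i}^2}\right) + \frac{1}{2}\mathbf{z_2}^{\mathsf{T}}\mathbf{M}(\mathbf{x_1})\mathbf{z_2}. \nonumber
\end{align}
By the hypothesis $|z_{1i}(0)| < k_{b_i}$, $V(0)$ is finite, and by the uniform positive definiteness of $\mathbf{M}$ (a stated property of the EL dynamics) $V$ is a legitimate positive-definite function on the open set $\mathcal{D} = \{\,|z_{1i}| < k_{b_i}\ \forall i\,\}$.

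Next I would differentiate. The derivative of the $i$-th log term is $z_{1i}\dot z_{1i}/(k_{b_i}^2 - z_{1i}^2)$, i.e. the logarithmic terms contribute $\mathbf{z_1}^{\mathsf{T}}\mathbf{D_1}\dot{\mathbf{z}}_1$ with $\mathbf{D_1}$ exactly the diagonal matrix defined in the control law. Using $\dot{\mathbf{z}}_1 = \dot{\mathbf{x}}_1 - \dot{\mathbf{q}}_d = \mathbf{x_2} - \dot{\mathbf{q}}_d = \mathbf{z_2} + \boldsymbol{\alpha}_1 - \dot{\mathbf{q}}_d$ and substituting $\boldsymbol{\alpha}_1 = -\mathbf{D_1}^{-1}\mathbf{K_1}\mathbf{z_1} + \dot{\mathbf{q}}_d$ gives $\dot{\mathbf{z}}_1 = \mathbf{z_2} - \mathbf{D_1}^{-1}\mathbf{K_1}\mathbf{z_1}$, so this block becomes $\mathbf{z_1}^{\mathsf{T}}\mathbf{D_1}\mathbf{z_2} - \mathbf{z_1}^{\mathsf{T}}\mathbf{K_1}\mathbf{z_1}$. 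For the quadratic term, $\frac{d}{dt}(\tfrac12\mathbf{z_2}^{\mathsf{T}}\mathbf{M}\mathbf{z_2}) = \mathbf{z_2}^{\mathsf{T}}\mathbf{M}\dot{\mathbf{z}}_2 + \tfrac12\mathbf{z_2}^{\mathsf{T}}\dot{\mathbf{M}}\mathbf{z_2}$; computing $\mathbf{M}\dot{\mathbf{z}}_2 = \mathbf{M}\dot{\mathbf{x}}_2 - \mathbf{M}\dot{\boldsymbol{\alpha}}_1 = (\boldsymbol{\tau} - \mathbf{H}) - \mathbf{M}\dot{\boldsymbol{\alpha}}_1$ and plugging in $\boldsymbol{\tau} = \mathbf{M}\mathbf{u_{sym}} + \mathbf{H}$ with $\mathbf{u_{sym}} = \dot{\boldsymbol{\alpha}}_1 - \mathbf{K_2}\mathbf{z_2} - \mathbf{D_1}\mathbf{z_1}$ collapses this to $-\mathbf{z_2}^{\mathsf{T}}\mathbf{M}\mathbf{K_2}\mathbf{z_2} - \mathbf{z_2}^{\mathsf{T}}\mathbf{M}\mathbf{D_1}\mathbf{z_1} + \tfrac12\mathbf{z_2}^{\mathsf{T}}\dot{\mathbf{M}}\mathbf{z_2}$. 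Wait — I should be careful here: the clean cancellation of the cross term $\mathbf{z_1}^{\mathsf{T}}\mathbf{D_1}\mathbf{z_2}$ against $-\mathbf{z_2}^{\mathsf{T}}\mathbf{M}\mathbf{D_1}\mathbf{z_1}$ requires $\mathbf{M}$ to act trivially, which it does not in general; so the intended structure must be that the $\mathbf{D_1}\mathbf{z_1}$ feedback in $\mathbf{u_{sym}}$ cancels the $\mathbf{D_1}\mathbf{z_1}$ coming through $\dot{\mathbf{z}}_1$ directly at the level where both are multiplied by the same factor. I would therefore re-derive carefully, treating the term $-\mathbf{z_2}^{\mathsf{T}}\mathbf{M}\mathbf{D_1}\mathbf{z_1}$ as in fact $-\mathbf{z_2}^{\mathsf{T}}\mathbf{D_1}\mathbf{z_1}$ per the control definition (the $\mathbf{M}$ multiplying $\mathbf{u_{sym}}$ is absorbed by $\mathbf{M}^{-1}$ in $\dot{\mathbf{x}}_2$), so that it exactly cancels $+\mathbf{z_1}^{\mathsf{T}}\mathbf{D_1}\mathbf{z_2}$. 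Finally I would invoke the skew-symmetry of $\dot{\mathbf{M}} - 2\mathbf{C}$ to eliminate $\tfrac12\mathbf{z_2}^{\mathsf{T}}\dot{\mathbf{M}}\mathbf{z_2}$ against the Coriolis contribution hidden in $\mathbf{H}$, leaving
\begin{align}
\dot V = -\mathbf{z_1}^{\mathsf{T}}\mathbf{K_1}\mathbf{z_1} - \mathbf{z_2}^{\mathsf{T}}\mathbf{K_2}\mathbf{z_2} \le 0. \nonumber
\end{align}

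From $\dot V \le 0$ I conclude $V(t) \le V(0) < \infty$ for all $t \ge 0$. Since each summand of $V$ is nonnegative, $\tfrac12\log\big(k_{b_i}^2/(k_{b_i}^2 - z_{1i}^2)\big) \le V(0)$, which rearranges to $z_{1i}^2 \le k_{b_i}^2\big(1 - e^{-2V(0)}\big) < k_{b_i}^2$, i.e. $|z_{1i}(t)| < k_{b_i}$ for all $i$ and all $t > 0$; boundedness of $\mathbf{z_2}$ (hence of $\dot{\mathbf{q}}$ and the states) follows likewise from the positive definiteness of $\mathbf{M}$. The main obstacle I anticipate is the bookkeeping in the $\dot V$ computation — specifically, making sure that the factor $\mathbf{M}(\mathbf{x_1})$ in $\boldsymbol{\tau}$ is correctly cancelled by $\mathbf{M}^{-1}$ in the state equation so that the feedback terms $-\mathbf{K_2}\mathbf{z_2}$ and $-\mathbf{D_1}\mathbf{z_1}$ appear undistorted, and that the Coriolis/skew-symmetry cancellation is applied to exactly the right quadratic form; a secondary subtlety is confirming that $\dot{\mathbf{z}}_1$, $\dot{\boldsymbol{\alpha}}_1$, and hence $\mathbf{u_{sym}}$ remain well-defined (no blow-up of $\mathbf{D_1}$) on the interval of existence, which is guaranteed a posteriori precisely because $|z_{1i}|$ stays bounded away from $k_{b_i}$ — a standard forward-invariance argument that I would state explicitly to close the loop.
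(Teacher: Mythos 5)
There is a genuine gap, and it sits exactly where you flagged your own unease. Your choice of the inertia-weighted term $\tfrac12\mathbf{z_2}^{\mathsf{T}}\mathbf{M}(\mathbf{x_1})\mathbf{z_2}$ is incompatible with the control law \eqref{eq:control_input_symmetrical}, which is an \emph{exact feedback linearization}: since $\boldsymbol{\tau}=\mathbf{M}\mathbf{u_{sym}}+\mathbf{H}$, the state equation gives $\dot{\mathbf{x}}_2=\mathbf{M}^{-1}(\boldsymbol{\tau}-\mathbf{H})=\mathbf{u_{sym}}$, hence
\begin{align}
\dot{\mathbf{z}}_2=\mathbf{u_{sym}}-\dot{\boldsymbol{\alpha}}_1=-\mathbf{K_2}\mathbf{z_2}-\mathbf{D_1}\mathbf{z_1}\nonumber
\end{align}
with no $\mathbf{M}$, no $\mathbf{C}$, and no $\mathbf{H}$ surviving. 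Consequently (i) the cross term you need to cancel $\mathbf{z_1}^{\mathsf{T}}\mathbf{D_1}\mathbf{z_2}$ arrives as $-\mathbf{z_2}^{\mathsf{T}}\mathbf{M}\mathbf{D_1}\mathbf{z_1}$ under your weighting, and your proposed repair---declaring that the $\mathbf{M}$ ``is absorbed''---contradicts the Lyapunov function you chose: the $\mathbf{M}$ in $\dot V$ comes from $V$ itself, not from the dynamics, so it cannot be dropped; and (ii) there is no residual Coriolis term in $\mathbf{M}\dot{\mathbf{z}}_2$ against which to play the skew-symmetry of $\dot{\mathbf{M}}-2\mathbf{C}$, so the sign-indefinite term $\tfrac12\mathbf{z_2}^{\mathsf{T}}\dot{\mathbf{M}}\mathbf{z_2}$ survives uncancelled. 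The skew-symmetry trick belongs to passivity-based designs where the controller deliberately leaves $-\mathbf{C}\mathbf{z_2}$ in the error dynamics; it does not apply here. The fix is simply to take the unweighted term $\tfrac12\mathbf{z_2}^{\mathsf{T}}\mathbf{z_2}$, which is what the paper does: then $\dot V=\mathbf{z_1}^{\mathsf{T}}\mathbf{D_1}\dot{\mathbf{z}}_1+\mathbf{z_2}^{\mathsf{T}}\dot{\mathbf{z}}_2=-\mathbf{z_1}^{\mathsf{T}}\mathbf{K_1}\mathbf{z_1}-\mathbf{z_2}^{\mathsf{T}}\mathbf{K_2}\mathbf{z_2}$ after the exact cancellation of $\mathbf{z_1}^{\mathsf{T}}\mathbf{D_1}\mathbf{z_2}$ with $-\mathbf{z_2}^{\mathsf{T}}\mathbf{D_1}\mathbf{z_1}$ ($\mathbf{D_1}$ being diagonal).

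Everything after the $\dot V$ computation is sound and in fact more carefully argued than the paper's own writeup: the deduction $V(t)\le V(0)<\infty$, the explicit rearrangement $z_{1i}^2\le k_{b_i}^2\bigl(1-e^{-2V(0)}\bigr)<k_{b_i}^2$, and the a posteriori forward-invariance remark guaranteeing that $\mathbf{D_1}$, $\dot{\boldsymbol{\alpha}}_1$, and $\mathbf{u_{sym}}$ never blow up are all correct and worth keeping. Once you swap the quadratic term, the proof closes.
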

\begin{proof}
The BLF candidate for this case is designed as follows:

\begin{align}
    V_{1}&=\sum_{i=1}^{n}\frac{1}{2}\log{\left(\frac{k_{b_i}^2}{k_{b_i}^2- z_{1i}^2}\right)} + \frac{1}{2}\mathbf{z_2}^\mathbf{T}\mathbf{z_2}. \label{eq:sym_blf}
\end{align}
\\
\noindent
From the time-derivative of \eqref{eq:sym_blf} and applying the control input equations in \eqref{eq:control_input_symmetrical}, we get: 
\begin{align}
    \dot{V}_1 &= \mathbf{z_{1}^{T}}\mathbf{D_1}\dot{\mathbf{z}}_\mathbf{1} + \mathbf{z_{2}^{T}}\dot{\mathbf{z}}_\mathbf{2} \nonumber \\
    &= \mathbf{z_{1}^{T}}\mathbf{D_1}(\mathbf{z_{2}}+\boldsymbol{\alpha}-\dot{\mathbf{q}}_\mathbf{d}) + \mathbf{z_{2}^{T}}\Dot{\mathbf{z}}_\mathbf{2} \nonumber \\
    &= -\sum_{i=1}^{2} \mathbf{z_{i}^{T}K_{i}z_{i}}<0.\label{eq:sym_BLF_deriv}
\end{align}

As $\mathbf{K_i}$ is user-defined positive definite matrix, the trajectories remain bounded from (\ref{eq:sym_BLF_deriv}) and consequently, the constraints are never violated.
\end{proof}
\section{Stability analysis for the Asymmetrical BLF scenario}
\begin{theorem}
Under the initial conditions $-k_{a_i}<z_{1i}(0)< k_{b_i}$ and using the  control law \eqref{eq:control_input_asymmetrical}, the tracking error trajectories $z_{1i}$ remain bounded by the constraints $k_{a_i}$ and $k_{b_i}$ as $-k_{a_i}<z_{1i}(t)< k_{b_i}$ for $i=1,2,3,...,n$ $\forall t > 0$.
\end{theorem}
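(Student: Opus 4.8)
The plan is to mirror the symmetric proof, replacing the symmetric log‑barrier by an asymmetric one that, for each component $i$, penalises the positive excursion of $z_{1i}$ against $k_{b_i}$ and the negative excursion against $k_{a_i}$; the switch between the two pieces is governed by the indicator $r_i$ already introduced in \eqref{eq:control_input_asymmetrical}. Concretely, I would take the candidate
\begin{align}
V_1 \;=\; \sum_{i=1}^{n}\left[\frac{r_i}{2}\log\!\left(\frac{k_{b_i}^2}{k_{b_i}^2-z_{1i}^2}\right)+\frac{1-r_i}{2}\log\!\left(\frac{k_{a_i}^2}{k_{a_i}^2-z_{1i}^2}\right)\right]+\frac{1}{2}\,\mathbf{z_2}^{\mathbf{T}}\mathbf{z_2}. \nonumber
\end{align}
The first step is to check that this is an admissible BLF on the open box $\{-k_{a_i}<z_{1i}<k_{b_i}\;\forall i\}$: it is positive definite there, it grows without bound as any $z_{1i}\to k_{b_i}^{-}$ or $z_{1i}\to -k_{a_i}^{+}$, and it is continuously differentiable across the switching surface $z_{1i}=0$, since both logarithmic pieces and their first $z_{1i}$-derivatives vanish at $z_{1i}=0$. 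By construction, the $z_{1i}$-gradient of the barrier part is $z_{1i}$ times the $i$-th diagonal entry of $\mathbf{D_2}$, which is precisely why $\mathbf{D_2}$ (rather than $\mathbf{D_1}$) appears in the asymmetric control law.

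The second step is the derivative computation, carried out on the interior of each sign region, where $r_i$ is constant. There $\dot V_1=\mathbf{z_1}^{\mathbf{T}}\mathbf{D_2}\dot{\mathbf{z}}_1+\mathbf{z_2}^{\mathbf{T}}\dot{\mathbf{z}}_2$; using $\boldsymbol{\tau}=\mathbf{M}\mathbf{u_{asy}}+\mathbf{H}$ to cancel the plant dynamics (so $\dot{\mathbf{x}}_2=\mathbf{u_{asy}}$) gives, exactly as in the symmetric case, $\dot{\mathbf{z}}_1=\mathbf{z_2}-\mathbf{D_2}^{-1}\mathbf{K_1}\mathbf{z_1}$ and $\dot{\mathbf{z}}_2=-\mathbf{K_2}\mathbf{z_2}-\mathbf{D_2}\mathbf{z_1}$. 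Since $\mathbf{D_2}$ is diagonal, the cross terms $\mathbf{z_1}^{\mathbf{T}}\mathbf{D_2}\mathbf{z_2}$ and $-\mathbf{z_2}^{\mathbf{T}}\mathbf{D_2}\mathbf{z_1}$ cancel, leaving $\dot V_1=-\mathbf{z_1}^{\mathbf{T}}\mathbf{K_1}\mathbf{z_1}-\mathbf{z_2}^{\mathbf{T}}\mathbf{K_2}\mathbf{z_2}<0$. Hence $V_1$ is non-increasing along solutions, $V_1(t)\le V_1(0)<\infty$, and the barrier property then forces $-k_{a_i}<z_{1i}(t)<k_{b_i}$ for every $i$ and all $t\ge 0$, while $\mathbf{z_2}$ stays bounded.

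The step I expect to be the main obstacle is dealing cleanly with the switching at $z_{1i}=0$, which makes the closed loop a non-smooth (switched) system: $\boldsymbol{\alpha_2}$ depends on $r_i$ through $\mathbf{D_2}^{-1}$, so although $\boldsymbol{\alpha_2}$ stays continuous there (the jump in $\mathbf{D_2}^{-1}$ is multiplied by $z_{1i}\to 0$), $\dot{\boldsymbol{\alpha}}_2$ — and hence the control law — can jump, and the tidy identity for $\dot V_1$ is a priori only valid between switches. I would close this by noting that $V_1$ is $C^1$ and strictly decreasing on each region and continuous across $z_{1i}=0$, hence absolutely continuous and non-increasing along every solution (equivalently, invoke the standard Filippov / non-smooth Lyapunov argument, or observe that each $z_{1i}$ changes sign only at isolated instants), so the sub-level-set bound $V_1(t)\le V_1(0)$ — and therefore constraint satisfaction — propagates for all $t$. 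The remaining ingredients are routine: positive definiteness of $\mathbf{D_2}$ on the admissible box so that $\mathbf{D_2}^{-1}$ is well defined, and the standing smoothness/boundedness of $\mathbf{q}_d$ so that $\dot{\boldsymbol{\alpha}}_2$ makes sense away from the switches.
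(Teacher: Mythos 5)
Your proposal is correct and follows essentially the same route as the paper: the identical switched barrier candidate built from the $r_i$-weighted logarithms plus $\tfrac12\mathbf{z_2}^{\mathbf T}\mathbf{z_2}$, followed by the same cancellation yielding $\dot V_1=-\mathbf{z_1}^{\mathbf T}\mathbf{K_1}\mathbf{z_1}-\mathbf{z_2}^{\mathbf T}\mathbf{K_2}\mathbf{z_2}$ as in the symmetric case. In fact you go further than the paper, which only states that ``similar calculations follow''; your explicit treatment of the $C^1$ matching at $z_{1i}=0$ and of the non-smooth switching (and your implicit correction of the $k_{a_i}$ typo in the printed definition of $\mathbf{D_2}$) supplies details the paper omits.
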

\begin{proof}
In this case, the Lyapunov function candidate is designed as:
\begin{align}
V_{1} = \sum_{i=1}^{n}\frac{1}{2}r_i{\log{\left(\frac{k_{b_i}^2}{k_{b}^2-z_{1i}^2}\right)}} + \sum_{i=1}^{n}\frac{1}{2}(1-r_i)\log{\left(\frac{k_{a_i}^2}{k_{a_i}^2-z_{1i}^2}\right)} + \frac{1}{2}\mathbf{z_2}^{T}\mathbf{z_2}. \label{eq: BLF_asym}
\end{align}

From the time-derivative of \eqref{eq: BLF_asym} and the control input equations \eqref{eq:control_input_asymmetrical}, similar calculations as in the symmetrical constraints case follow, which proves that the trajectories remain bounded and the constraints are not violated.

\end{proof}

\section{Simulation and Results}

\begin{figure}[ht!]%
    \centering
    \includegraphics[width=11.7cm]{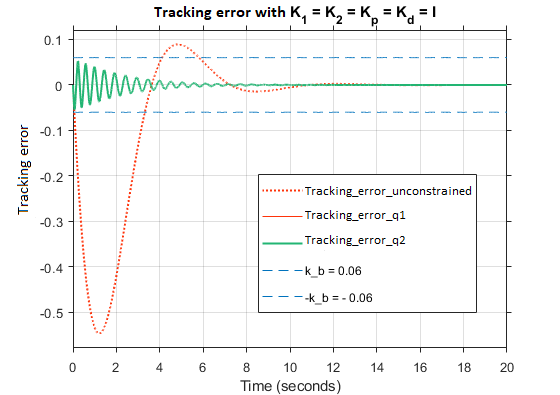}
    \caption{Trajectory tracking error comparison by employing symmetrical constraints with SBLF candidate vs unconstrained trajectory error}%
    \label{fig:SBLF_ABLF1}%
\end{figure}
\begin{figure}[ht!]%
    \centering
    \includegraphics[width=11.7cm]{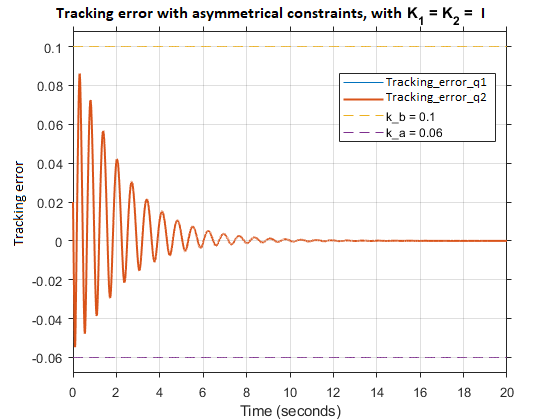}%
    \caption{Trajectory error with asymmetrical constraints}%
    \label{fig:SBLF_ABLF2}%
\end{figure}
\begin{figure}[ht!]%
\centering
\includegraphics[width = 14cm]{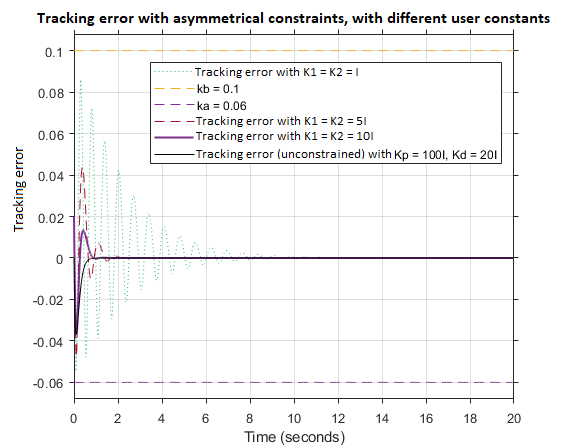}
\caption{Trajectory error with different user defined parameters employed on the asymmetrical constraints case vs unconstrained case.} \label{ABLF_different_values}
\end{figure}

Simulation has been carried out on a 2-DoF system \cite{roy2020new}, having the following parameters that comply with the system dynamics of \eqref{eq:EL_eqn}:\\
\\$
\mathbf{M} =
\begin{bmatrix}
M_{11} & M_{12} \\
M_{21} & M_{22}
\end{bmatrix}
$, and $\mathbf{q}=\begin{bmatrix}
q_{1} \\ q_{2}
\end{bmatrix}$
, where $M_{11} = (m_{1}+m_{2})l_{1}^{2} + m_{2}l_{2}(l{2}+2l_{1}\cos{(q_{2}}))$, $M_{12} =m_{2}l_{2}(l{2}+l_{1}\cos{(q_{2})})$, and $M_{22}=m_{2}l_{2}^2$
\\
\\
$\mathbf{C} =
\begin{bmatrix}
-m{2}l_{1}l_{2}\sin{(q_{2})}\Dot{q_{2}} & -m{2}l_{1}l_{2}\sin{q_{2}}(\Dot{q_{1}}+\Dot{q_{2}}) \\
0 & m{2}l_{1}l_{2}\sin{(q_{2})}\Dot{q_{2}}
\end{bmatrix}
$\\

\noindent
$
\mathbf{g} =
\begin{bmatrix}
m_{1}l_{1}g\cos{(q_{1})} + m_{2}g(l_{2}\cos{(q_{1}+q_{2})} + l_{1}\cos{(q_{1})}) \\
m_{2}gl_{2}\cos{(q_{1}+q_{2})}
\end{bmatrix}
$, 
$\mathbf{f} =
\begin{bmatrix}
f_{v1}\sgn{(\Dot{q_{1}})} \\
f_{v2}\sgn{(\Dot{q_{2}})}
\end{bmatrix}$, 
$\mathbf{d}=
\begin{bmatrix}
0.5\sin{(0.5t)} \\
0.5\sin{(0.5t)}
\end{bmatrix}$,
\\
\\
\\
where the following values are chosen:
$m_{1} =$ 10 kg, $m_{2}=$ 5 kg, $l_{1}$=0.2 m, $l_{2} =$ 0.1 m, $f_{v1} = f_{v2} = $ 0.5, and $g = $ 9.81 m/s\textsuperscript{2}.
For the symmetrical constraint case, the desired trajectory is chosen as $\mathbf{q_{d}} = \left[\sin(t), \sin(t)\right]^\mathbf{T}$, $\mathbf{k_b} = [0.06,0.06]^\mathbf{T}$.
For the asymmetrical constraint case,  $\mathbf{q_{d}} = \left[\\-0.02 + \sin(t),\\-0.02+ \sin(t)\right]^\mathbf{T}$.
The constraints are selected as $\mathbf{k_{b}} = [0.1,0.1]^\mathbf{T}$, and $\mathbf{k_{a}} = [0.06, 0.06]^\mathbf{T}$.

The performance of the BLF controller is compared with the PID controller. Figure \ref{fig:SBLF_ABLF1} shows trajectory tracking error comparison in the symmetrical constraints case. It can be clearly notices that with PID controller (the red dotted line), the error violates the user-specified constraints, whereas the similarly tuned BLF controller (green) never violates them. For comparable maximum deviation with the BLF controller, the PID controller needs to be re-tuned, which becomes a major drawback in uncertain scenarios. Similar performance is observed in Fig. \ref{fig:SBLF_ABLF2} by the BLF controller, where the tracking error obeys asymmetrical constraints. In Fig. \ref{ABLF_different_values}, it is observed that to further reduce oscillations about zero, the gains $\mathbf K_1$ and $\mathbf K_2$ can be tuned, without having any concern for the overshoot as the controlled trajectories obey the bounds regardless.

\section{Conclusion}

In this chapter, a Barrier Lyapunov Function based controller is designed and implemented on a multi-DoF EL system. The performance of the controller is compared with the PID, its `unconstrained' counterpart. The results clearly show the efficacy of the BLF based control design. The next chapter utilizes the Barrier Lyapunov approach to tackle the challenging issues faced by a quadrotor while manoeuvring under spatial constraints with limited knowledge on the system dynamics.


\chapter{Robust Manoeuvring of Quadrotor in Constrained Space}
\label{ch:chap3}
\section{Introduction}

Nowadays, quadrotors are widely researched for their immense versatility. In the first chapter, several challenges faced by a quadrotor in relief operations and ration supply scenarios were discussed, which served as strong motivation for the thesis. In the previous chapter, the superiority of the BLF based controller over the conventional PID controller was successfully demonstrated to emphasize the drawbacks faced in conventional control methods pertaining to tuning and re-tuning to achieve a desired performance limit. However, the challenge of formulating such a controller on a quadrotor is two-fold. Firstly, most of the existing works consider fully actuated system with complete knowledge of the dynamics, whereas, the quadrotor dynamics is underactuated, implying that the desired performance under spatial constraints even with full knowledge of the quadrotor dynamics and under ideal conditions
is difficult. It is shown in \cite{dasgupta2019singularity}, that the quadrotor cannot be commanded to follow any arbitrary trajectory and orientation in space. Secondly, the presence of dynamic uncertainty and external disturbances complicate the difficulty even further. 

To illustrate, a quadrotor passing through a pipeline structure where it cannot deviate more than $0.5$ m from its desired path in vertical and horizontal positions to avoid collision with boundary walls. Now, the presence of disturbances, such as wind gust, could easily perturb the quadrotor beyond the maximum allowed deviation, thereby resulting in operational hazard. Even with conventional robust or adaptive control, as discussed in Chapter 1, it is time-consuming to tune the control gains to achieve guaranteed user-specified accuracy.




In view of the challenges faced in works attempting to address the above issue as described in Chapter 1, and to the best of the author's knowledge, a comprehensive control solution for quadrotor in the presence of space constraints (i.e., under predefined position and attitude accuracy) and under {system uncertainty and external disturbances} is still missing at large. 
Toward this direction, this chapter has majorly contributed in the following ways:

\begin{itemize}
    \item A BLF-based robust controller for quadrotor is formulated, which can tackle uncertain system dynamic parameters, payload and external disturbances, while obeying constraints imposed by the user on position and attitude.
    \item Differently from \cite{dasgupta2019singularity}, these constraints are imposed on all six degrees-of-freedom, i.e., on three position and three attitude angles, to make the control problem more suitable under space constraints. 
    \item The closed-loop stability is analysed analytically, and simulation results compared to the state-of-the-art confirm the effectiveness of the proposed scheme in scenarios such as manoeuvring through pipes and hoops and delivering payloads of different masses.
    
\end{itemize}
It is worth remarking that constraint on tracking accuracy being the main focus, the present work does not consider actuator saturation since it compromises tracking performance as a trade-off \cite{KPTee2009BLF, liu2016barrier,dasgupta2019singularity,
sachan2019output} (cf. Remark 4 for further discussion), and this will be taken up as a future work. 

The rest of the chapter is organised as follows: Section 3.2 describes the dynamics of quadrotor and the control problem; Section 3.3 details the proposed control framework, while corresponding stability analysis is provided in Section 3.4; comparative simulation results are provided in Section 3.5, while Section 3.6 provides concluding remarks.

The following notations are commonly used in this and subsequent chapters: $|| \cdot ||$ and $\lambda_{\min}(\cdot)$ are 2-norm and minimum eigenvalue of $(\cdot)$; $\sat(\mathbf{x}, k)$ denotes a saturation function defined as $\sat(\mathbf{x}, k)= \mathbf{x}/|| \mathbf{x} ||$ if $||\mathbf{x}|| \geq k$ and $\sat(\mathbf{x}, k)= \mathbf{x}/k$ if $||\mathbf{x}|| < k$; $\diag{\{\cdot \cdot \cdot\}}$ and $\mathbf{I}$ represent a diagonal and an identity matrix respectively. 

\section{Quadrotor System Dynamics and Problem Formulation}

\begin{figure}[t]
    \includegraphics[width=4in, height=3.5in]{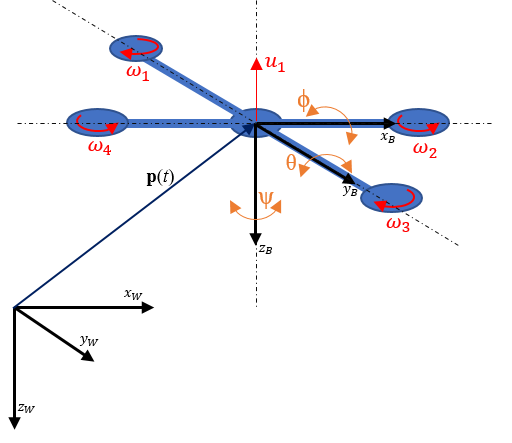}
    \centering
    \caption{The quadrotor model showing the quadrotor frame (\textit{B}); the world frame (\textit{W}); the position vector $\mathbf{p}(t)$; roll($\phi$), pitch($\theta$) and yaw($\psi$) angles; the angular velocities of the rotors ($\omega_i$); and the control input $u_1$}.
     \label{fig:quadrotor_dynamics}
\end{figure}

The dynamic model of a quadrotor system via Euler-Lagrange dynamics can be written as follows \cite{Bialy2013RobustAdaptive_UAV}:
\begin{align}
    m\Ddot{\mathbf{p}}+\mathbf{g+d_{p}} &= \mathbf{\tau_{p}} \label{eq:tau_p_chap_3} \\
    \mathbf{J(q)}\Ddot{\mathbf{q}}+\mathbf{C(q,}\Dot{\mathbf{q}})\Dot{\mathbf{q}}+\mathbf{d_{q}} &= \mathbf{\tau_{q}} \label{eq:tau_q_chap_3}\\
    \mathbf{\tau_p} &= \mathbf{R}_B^{W}\mathbf U \label{comb_chap_3}
\end{align}

\noindent where $m$ is the total mass of the system; $\mathbf{p}(t) \triangleq  \begin{bmatrix}
    x(t) & y(t) & z(t)
\end{bmatrix}^T \in \mathbb{R}^{3}$ is the position of the centre of mass of the quadrotor in the Earth-fixed frame; $\mathbf{q}(t) \triangleq  \begin{bmatrix}
    \phi(t) & \theta(t) & \psi(t)
\end{bmatrix}^T \in \mathbb{R}^{3}$ is the attitude vector consisting of the roll ($\phi$), pitch ($\theta$) and yaw ($\psi$) angles; $\mathbf{g} \triangleq  \begin{bmatrix}
    0 & 0 & mg
\end{bmatrix}^T \in \mathbb{R}^{3}$, where $g$ is the acceleration due to gravity in the z-direction; $\mathbf{J(q)} \in \mathbb{R}^{3 \times 3}$ is the inertia matrix; $\mathbf{C(q,\dot{q})} \in \mathbb{R}^{3 \times 3}$ is the Coriolis matrix and the vectors $\mathbf{d_p}, \mathbf{d_q} \in \mathbb{R}^{3}$ represent effect of external disturbances (e.g., wind, gust); $\boldsymbol \tau_{\mathbf q} \triangleq 
	\begin{bmatrix}
	u_2(t) & u_3(t) & u_4(t)
	\end{bmatrix}^T\in\mathbb{R}^3$ denotes the control inputs for roll, pitch and yaw; $\boldsymbol \tau_{\mathbf p}(t) \in \mathbb{R}^3$ is the generalized control input for position tracking in Earth-fixed frame, with $\mathbf U(t)\triangleq
	\begin{bmatrix}
	0 & 0 & u_1(t)
	\end{bmatrix}^T\in \mathbb{R}^3$ being the force vector in body-fixed frame and $\mathbf R_B^W \in\mathbb{R}^{3\times3}$ being the $Z-Y-X$ Euler angle rotation matrix describing the rotation from the body-fixed coordinate frame to the Earth-fixed frame (see Fig. \ref{fig:quadrotor_dynamics}), given by
	\begin{align}
	\mathbf R_B^W =
	\begin{bmatrix}
	c_{\psi}c_{\theta} & c_{\psi}s_{\theta}s_{\phi} - s_{\psi}c_{\phi} & c_{\psi}s_{\theta}c_{\phi} + s_{\psi}s_{\phi} \\
	s_{\psi}c_{\theta} & s_{\psi}s_{\theta}s_{\phi} + c_{\psi}c_{\phi} & s_{\psi}s_{\theta}c_{\phi} - c_{\psi}s_{\phi} \\
	-s_{\theta} & s_{\phi}c_{\theta} & c_{\theta}c_{\phi}
	\end{bmatrix}, \label{rot_matrix}
	\end{align}
where $c_{(\cdot)} , s_{(\cdot)}$ and denote $\cos{(\cdot)} , \sin{(\cdot)}$ respectively.

\subsection{Standard assumptions for the control design}

Before proceeding towards defining the problem and proceeding towards the controller design, the following assumptions highlight the amount of uncertainties in the system dynamics (\ref{eq:tau_p_chap_3})-(\ref{eq:tau_q_chap_3}), the choice of nominal values for the robust control design, and the nature of the trajectories to be tracked:

\begin{assum}[Uncertainties] The terms $m, \mathbf{J}$ and $\mathbf{C}$ can be segregated as $m=\bar{m}+\Delta m,~ \mathbf{J= \bar{J}}+ \Delta \mathbf J$ and $\mathbf{C}=\bar{\mathbf C}+\Delta \mathbf C$ where the segments $\bar{(\cdot)}$ and $\Delta(\cdot)$ represent known nominal part and uncertain part of the dynamics, respectively. $\Delta(\cdot)$ and external disturbances $\mathbf{d_{p}}$ and $\mathbf{d_q}$ are not known instantaneously, but their maximum possible variations are known. 
\end{assum}
\begin{assum}[Choice of nominal values]
The nominal values $\bar{m}$ and $\bar{\mathbf{J}}$ satisfy the following relations
\begin{align}
|(m^{-1}\bar{m}-1)| \leq E_p <1,~ ||\mathbf{J}^{-1}(\mathbf{q})\bar{\mathbf J}(\mathbf{q})-\mathbf{I}|| \leq E_q <1. \label{mass}
\end{align}
\end{assum}

\begin{remark}[Validity of the assumptions]
Assumptions 1-2 are discussed in Chapter 1, and are quite standard in the literature of robotics \cite{roy2019overcoming,roy2020new, spong2008robot, roy2015robust, roy2013robust}. Specifically, Assumption 2 implies that the perturbations cannot be more than $100\%$ of the respective nominal values, thus giving a guideline to select the nominal values for some allowable range of perturbation. In practice, information on the maximum allowable payload/ overall system mass $m$ of a quadrotor is always available. With this, one can design $\bar{m}$ from (\ref{mass}) and consequently design $\bar{\mathbf{J}}$ (to satisfy (\ref{mass})) and $\bar{\mathbf{C}}$ from their structures by following \cite{tang2015mixed}.
\end{remark}

\begin{remark}[Position and attitude tracking co-design]\label{rem_co}
The a priori boundedness assumption of the non-actuated $(x,y)$ dynamics stemming from the reduced-order model (a.k.a collocated) based design (cf. \cite{nicol2011robust, dydek2012adaptive, sankaranarayanan2020introducing, roy2020towards, roy2021artificial}) results in a conservative approach towards the control design, as part of the challenge in tackling an underactuated system. Therefore, in this chapter, the position and attitude tracking co-design method (cf. \cite{bialy2013lyapunov, mellinger2011minimum, sankaranarayanan2020aerial}) has been followed: this allows to design tracking controller for all six degrees-of-freedom (DoF) in dynamics (\ref{eq:tau_p_chap_3})-(\ref{eq:tau_q_chap_3}), rather than tracking only the actuated DoFs $z$ and $(\phi, \theta ,\psi)$ as in the collocated design approach. Note that such design is not decoupled, rather simultaneous as the dynamics (\ref{eq:tau_p_chap_3}) and (\ref{eq:tau_q_chap_3}) are connected via (\ref{comb_chap_3}). 
\end{remark}

The following standard assumption is considered:
\begin{assum}[\cite{bialy2013lyapunov, mellinger2011minimum, tang2015mixed}]\label{assum_des}
Let $\mathbf{p_d} (t) \triangleq
	\begin{bmatrix}
	x_d (t) & y_d (t) & z_d (t)
	\end{bmatrix}^T$ and $\psi _d(t) $ be the desired position and yaw trajectories to be tracked, which are designed to be sufficiently smooth and bounded. 
\end{assum}
\begin{remark}[Desired roll and pitch]
In position and attitude tracking co-design, the desired roll ($\phi_d$) and pitch $(\theta_d$) trajectories are derived based on the computed position control input $\tau_p$ and the desired yaw trajectory $\psi_d$ (cf. \cite{mellinger2011minimum}) and it is discussed in detail in Section 3.3.2.
\end{remark}

\subsection{Control Problem}

Manoeuvring under space constraints implies that the quadrotor cannot violate a specified performance limit, which implies that the desired control objective becomes \textit{tracking under predefined accuracy}. Let $\mathbf{z_{p}}=\lbrace z_{p1},~z_{p2},~z_{p3} \rbrace$ and $\mathbf{z_{q}}=\lbrace z_{q1},~z_{q2},~z_{q3} \rbrace$ be the tracking errors in position $(x,y,z)$ and attitude $(\phi
, \theta, \psi)$ respectively and, $k_{pi}, k_{qi}$ $i=1,2,3$ be six user-defined scalars. Then, the control problem is defined as follows:

Under Assumptions 1-2, to design a robust controller to track a desired trajectory (cf. Assumption 3) such that the tracking accuracy remains within a user-defined region as $|z_{pi}|< k_{pi}, |z_{qi}|< k_{qi}$ with $i=1,2,3$ and $|z_{pi}(0)|< k_{pi}, |z_{qi}(0)|< k_{qi}$ for all time $t > 0$.

Initial conditions lying within the constraints ($|z_{pi}(0)|< k_{pi}, |z_{qi}(0)|< k_{qi}$) is standard in any control literature handling constraints, otherwise the control problem becomes ill-posed and infeasible (cf. \cite{KPTee2009BLF, liu2016barrier,dasgupta2019singularity,sachan2019output}). The following section provides a solution to this control problem.

\section{Proposed Control Solution}

\begin{figure*}
    \includegraphics[width=6in, height=2.6in]{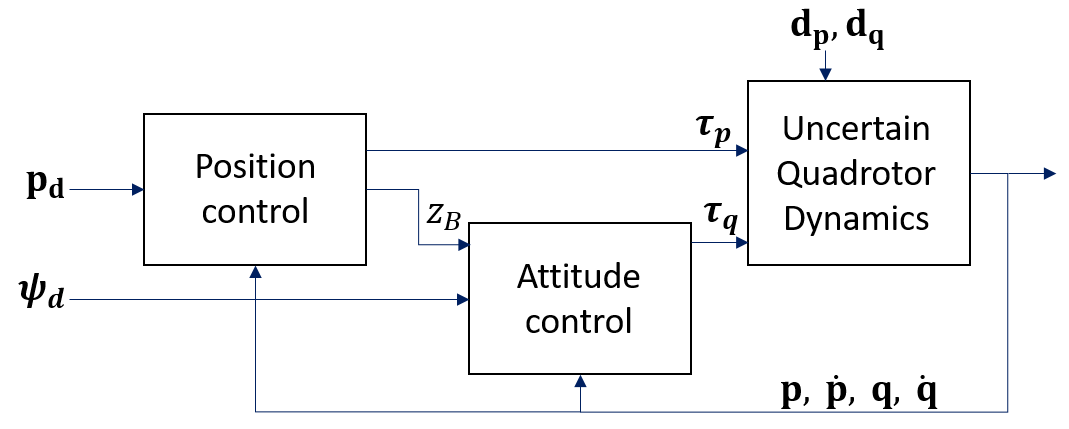}
    \centering
    \caption{Block diagram of the control system}
     \label{fig:BD_Controller}
\end{figure*}

The co-design approach relies on simultaneous design of an outer loop controller for position dynamics (\ref{eq:tau_p_chap_3}) and of an inner loop controller for attitude dynamics (\ref{eq:tau_q_chap_3}). Figure \ref{fig:BD_Controller} shows the overall control framework, while the detailed design is elaborated in the following subsections.

\subsection{Outer Loop Controller Design}
The position tracking error  $\mathbf{z_{p}}=\lbrace z_{p1},~z_{p2},~z_{p3} \rbrace$ and an auxiliary error variable $\mathbf{z_{2p}}$ are defined as
\begin{align}
    \mathbf{z_{p}} &= \mathbf{p-p_{d}}, \label{eq:z_p_chap_3} \\
    \mathbf{z_{2p}} &= \Dot{\mathbf{p}}-\mathbf{\alpha_{p}}, \label{eq:z_2p_chap_3}
\end{align}
where $\mathbf{\alpha_{p}}$ is an auxiliary control variable. The position control law is designed as
\begin{subequations}\label{ctr_p_chap_3}
\begin{align}
    \mathbf{\tau_{p}}&=\Bar{m}\mathbf{\nu_{p}}+\bar{\mathbf{g}},~~\mathbf{\nu_{p}} = \bar{\mathbf{\nu}}_{\mathbf p}+\Delta\mathbf{\nu_{p}}, \label{eq:tau_p_chap_3_expression} \\
    \mathbf{\alpha_{p}} &=-\Lambda_{1p}\mathbf{z_{p}}+\dot{\mathbf p}_{\mathbf d}, \label{alpha_p_chap_3}\\
    \bar{\mathbf{\nu}}_{\mathbf p} &= \dot{\mathbf \alpha}_{\mathbf p}-\mathbf{D_{p}}\mathbf{z_{p}}-\Lambda_{2p}\mathbf{z_{2p}},\label{eq:nu_p_chap_3 expression} \\
    \Delta \mathbf{\nu_{p}}&=-\rho_{p}\sat(\mathbf{z_{2p}}, \epsilon_p), \label{eq:delta_nu_p_chap_3}
\end{align}
\end{subequations}
where $\bar{\mathbf{g}}=[0~0~\bar{m}g]^T$; $\Lambda_{1p} = \diag{\{\gamma_{p_{1}}, \gamma_{p_{2}},\gamma_{p_{3}}\}}$, where $\gamma_{p_{i}}>0$ are user-defined constants; $\mathbf{D_{p}}=\diag{\bigg\{\frac{1}{k_{p_{1}}^{2}-z_{p_{1}}^{2}}, \ddfrac{1}{k_{p_{2}}^{2}-z_{p_{2}}^{2}}, \ddfrac{1}{k_{p_{3}}^{2}-z_{p_{3}}^{2}}\bigg\}}$; $\Lambda_{2p} $ is a user-defined positive definite matrix; $\Delta\mathbf{\nu_{p}}$ is the robust control input, which tackles the uncertainties via the gain $\rho_{p}$ to be defined later and $\epsilon_p>0$ is a gain used to avoid chattering.

%
%
%
Using (\ref{eq:tau_p_chap_3}) and \eqref{eq:tau_p_chap_3_expression}, the time derivative of \eqref{eq:z_2p_chap_3} yields
\begin{align}
    \Dot{\mathbf z}_{2\mathbf p} &= \Ddot{\mathbf{p}}-\Dot{\mathbf \alpha}_{\mathbf p} = m^{-1}(\mathbf{\tau_{p}}-\mathbf{g}-\mathbf{d_{p}})-\Dot{\mathbf \alpha}_{\mathbf p} \nonumber \\ 
    &= \mathbf{\nu_{p}}+E_{p}\mathbf{\nu_{p}}+m^{-1}(\bar{\mathbf{g}}-\mathbf{g}-\mathbf{d_p})-\Dot{\mathbf \alpha}_{\mathbf p} \nonumber \\ 
    &= \Bar{\nu}_{\mathbf p}+\Delta \mathbf{\nu_{p}}+\mathbf{\eta_{p}}-\Dot{\mathbf \alpha}_{\mathbf p} \label{ddot_z2p_chap_3}\\
\text{where} ~~~ \mathbf{\eta_{p}} &\triangleq (m^{-1}\bar{m}-1)\mathbf{\nu_{p}}+m^{-1}(\bar{\mathbf{g}}-\mathbf{g}-\mathbf{d_p}) \label{un_p_chap_3}
\end{align}  
is the \textit{overall uncertainty} in the position dynamics. Having specified the uncertainty structure, $\rho_p$ is to be designed such that $\rho_p \geq ||\mathbf{\eta_{p}}|| $. Then, using relation (\ref{mass}), from (\ref{un_p_chap_3}) we have 
\begin{align}
    \rho_{p} &\geq E_{p} \|\Bar{\mathbf{\nu}}_{\mathbf p}\|+ E_{p} \rho_{p} +m^{-1} ||(\bar{\mathbf{g}}-\mathbf{g}-\mathbf{d_p}) || \nonumber\\
   \Rightarrow \rho_{p} &\geq \ddfrac{1}{1-E_{p}}(E_{p}\|\Bar{\mathbf{\nu}}_{\mathbf p}\|+m^{-1}(||\bar{\mathbf{g}}-\mathbf{g}||+ ||\mathbf{d_p}||). \label{eq:rho_p_chap_3}
\end{align}
Eventually, $\mathbf U$ is applied via relation (\ref{comb_chap_3}).
\subsection{Inner Loop Controller Design}
To design inner loop controller, the desired roll ($\phi_d$) and pitch ($\theta_d$) angles are to be generated: first, an intermediate coordinate frame is defined as (cf. \cite{mellinger2011minimum}):
\begin{subequations}\label{int_co_chap_3}
\begin{align}
    z_B &= \frac{\mathbf{\tau_{p}}}{||\mathbf{\tau_{p}}||} ,~~y_A = \begin{bmatrix}
    -s_{\psi_d} & c_{\psi_d} & 0
\end{bmatrix}^T \\
    x_B &= \frac{y_A \times z_B}{||y_A \times z_B||} ,~~y_B = z_B \times x_B
\end{align}
\end{subequations}
where $y_A$ is the $y$-axis of the intermediate coordinate frame $A$, $x_B$, $y_B$ and $z_B$ are the desired $x$-axis, $y$-axis and $z$-axis of the body-fixed coordinate frame. The desired yaw angle $\psi_d (t)$ computes the axis $y_A$ and the the computed intermediate axes in (\ref{int_co_chap_3}) finally determine $\phi_d (t)$ and $\theta_d (t)$ \cite{mellinger2011minimum}. These desired orientation angles finally completely define $\mathbf{R_d}$ is the rotation matrix as in (\ref{rot_matrix}) evaluated at ($\phi_d, \theta_d, \psi_d$).

Further, the attitude error can be defined as \cite{mellinger2011minimum}:
\begin{align}
     \mathbf{z_{q}}= \lbrace z_{q1}, z_{q2}, z_{q3} \rbrace &= {\mathbf{((R_d)^T R_B^W - (R_B^W)^T R_d)}}^{v} \label{eq:z_q_chap_3}
\end{align}
where $(.)^v$ represents \textit{vee} map, which converts elements of $SO(3)$ to $\in{\mathbb{R}^3}$ \cite{mellinger2011minimum}. The auxiliary error variable $\mathbf{z_{2q}}$ is defined as:
\begin{align}
    \mathbf{z_{2q}} &= \dot{\mathbf{q}}-\mathbf{\alpha_{q}} \label{eq:z_2q_chap_3}
\end{align}
with $\mathbf{\alpha_{q}}$ being an auxiliary control variable defined subsequently.
The inner loop control law is designed as
\begin{subequations}\label{ctr_q_chap_3}
\begin{align}
    \mathbf{\tau_{q}}&=\Bar{\mathbf{J}}\mathbf{\nu_{q}}+\bar{\mathbf{C}}\dot{\mathbf{q}},~\mathbf{\nu_{q}} = \Bar{\mathbf{\nu}}_{\mathbf q}+\Delta\mathbf{\nu_{q}}, \label{eq:tau_q_chap_3_expression}\\
    \mathbf{\alpha_{q}} &=-\Lambda_{1q}\mathbf{z_{q}}+\dot{\mathbf{q}}_\mathbf{d}, \label{alpha_q_chap_3}\\
    \bar{\mathbf{\nu}}_{\mathbf q} &= \mathbf{\Dot{\alpha}}_{\mathbf q}-\mathbf{D_{q}}\mathbf{z_{q}}-\Lambda_{2q}\mathbf{z_{2q}},\label{nu_q expression} \\
    \Delta \mathbf{\nu_{q}}&=-\rho_{q}\sat(\mathbf{z_{2q}}, \epsilon_q), \label{eq:delta_nu_q_chap_3}
\end{align}
\end{subequations}
where $\mathbf{D_{q}}=\diag{\bigg\{\ddfrac{1}{k_{q_{1}}^{2}-z_{q_{1}}^{2}}, \ddfrac{1}{k_{q_{2}}^{2}-z_{q_{2}}^{2}}, \ddfrac{1}{k_{q_{3}}^{2}-z_{q_{3}}^{2}}\bigg\}}$; $\Lambda_{1q} = \diag{\{\gamma_{q_{1}}, \gamma_{q_2},\gamma_{q_3}\}}$, with $\gamma_{q_i}>0$ being user-defined constants; $\Lambda_{2q}$ is a user-defined positive definite matrix; $\Delta\mathbf{\nu_{q}}$ is the robust control input, which tackles the uncertainties in the attitude dynamics via the gain $\rho_{q}$ to be defined later and $\epsilon_q>0$ is a gain used to avoid chattering. 

Following similar lines to derive (\ref{ddot_z2p_chap_3}) for the outer loop controller, the following is achieved using (\ref{eq:tau_q_chap_3}), \eqref{eq:z_2q} and \eqref{eq:tau_q_chap_3_expression}
\begin{align}
    \Dot{\mathbf z}_{2\mathbf q} &= \Bar{\nu}_{\mathbf q}+\Delta \mathbf{\nu_{q}}+\mathbf{\eta_{q}}-\Dot{\mathbf \alpha}_{\mathbf q}\label{eq:ddot_z2q_chap_3}\\
\text{where} ~~~ \mathbf{\eta_{q}} &\triangleq (\mathbf{J}^{-1}\bar{\mathbf J}-\mathbf{I})\mathbf{\nu_{q}}+\mathbf{J}^{-1}(-\mathbf{d_q}-\Delta \mathbf{C}\dot{\mathbf{q}}) \label{uncer_att_chap_3}
\end{align}
represents the \textit{overall uncertainty} in the attitude dynamics. Using the above expression and relation (\ref{mass}), the robust control gain is designed to be $\rho_q \geq || \mathbf{\eta_{q}} || $, i.e.,
\begin{align}
\rho_{q} &\geq E_q || \Bar{\mathbf{\nu}}_{\mathbf q} || + E_q \rho_q +  || \mathbf{J}^{-1}(-\mathbf{d_q}-\Delta \mathbf{C}\dot{\mathbf{q}})|| \nonumber \\
    \rho_{q} &\geq \ddfrac{1}{1-E_q}(E_q || \Bar{\mathbf{\nu}}_{\mathbf q} || + || \mathbf{J}^{-1} || ( ||\Delta \mathbf{C}||||\dot{\mathbf{q}}|| + || \mathbf{d_q}||   ).  \label{eq:rho_q_chap_3}
\end{align}
\begin{remark}[Choice of gains and feasibility]
It may appear from (\ref{eq:nu_p expression}) and (\ref{nu_q expression}) that the gains $\mathbf{D_p,D_q}$ will become infeasible if ${z_{pi}}=k_{pi}$ and ${z}_{qi}=k_{qi}$ for any $i=1,2,3$ and $t>0$. However, the subsequent closed-loop stability analysis will show that ${z_{pi}}<k_{pi}$ and ${z}_{qi}<k_{qi}$ $\forall t>0$ and infeasibility is avoided under the proposed design. Nevertheless, very small $k_{pi},k_{qi}$ will render better accuracy albeit with higher control gains and control input demand. Therefore, these choices should be made according to application requirements. 
\end{remark}

\section{Stability Analysis}
\begin{theorem}
Under Assumptions 1-3 and initial conditions $|z_{pi}(0)|< k_{pi}, |z_{qi}(0)|< k_{qi}$ and using the proposed robust control laws (\ref{ctr_p_chap_3}) and (\ref{ctr_q_chap_3}), the tracking error trajectories $z_{pi},z_{qi}$ remain bounded by the constraints $k_{pi}, k_{qi}$ as $|z_{pi}|< k_{pi}, |z_{qi}|< k_{qi}$ with $i=1,2,3$ $\forall t > 0$.
\end{theorem}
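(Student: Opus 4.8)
The plan is to build one composite Barrier Lyapunov Function that couples the outer-loop position errors with the inner-loop attitude errors, and then to reproduce the cancellation pattern of the proof of Theorem~1 while carrying the robust-control residue through the estimate. Concretely I would take
\[
V = \sum_{i=1}^{3}\tfrac12\log\!\Big(\tfrac{k_{pi}^{2}}{k_{pi}^{2}-z_{pi}^{2}}\Big)+\tfrac12\mathbf{z_{2p}}^{T}\mathbf{z_{2p}}+\sum_{i=1}^{3}\tfrac12\log\!\Big(\tfrac{k_{qi}^{2}}{k_{qi}^{2}-z_{qi}^{2}}\Big)+\tfrac12\mathbf{z_{2q}}^{T}\mathbf{z_{2q}} .
\]
This is where the initial-condition hypothesis enters: because $|z_{pi}(0)|<k_{pi}$ and $|z_{qi}(0)|<k_{qi}$, $V(0)$ is finite; $V$ is positive definite on the open constraint box and $V\to\infty$ whenever any $z_{pi}\to k_{pi}$ or $z_{qi}\to k_{qi}$. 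The whole argument then reduces to the standard BLF conclusion: if $V(t)$ stays bounded for all $t>0$, the barrier terms force $|z_{pi}(t)|<k_{pi}$, $|z_{qi}(t)|<k_{qi}$ for all $t$ (so the box is never reached and $\mathbf{D_p},\mathbf{D_q}$ never blow up, settling the feasibility concern), and the quadratic terms force $\mathbf{z_{2p}},\mathbf{z_{2q}}$ bounded.

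First I would differentiate $V$. The barrier blocks contribute $\mathbf{z_p}^{T}\mathbf{D_p}\dot{\mathbf z}_{\mathbf p}+\mathbf{z_q}^{T}\mathbf{D_q}\dot{\mathbf z}_{\mathbf q}$ with exactly the $\mathbf{D_p},\mathbf{D_q}$ of the control laws \eqref{ctr_p_chap_3}--\eqref{ctr_q_chap_3}. Using the error kinematics $\dot{\mathbf z}_{\mathbf p}=\mathbf{z_{2p}}-\Lambda_{1p}\mathbf{z_p}$ (from \eqref{eq:z_2p_chap_3} and \eqref{alpha_p_chap_3}) and the analogous $\dot{\mathbf z}_{\mathbf q}=\mathbf{z_{2q}}-\Lambda_{1q}\mathbf{z_q}$ (from \eqref{eq:z_2q_chap_3} and \eqref{alpha_q_chap_3}), together with $\dot{\mathbf z}_{2\mathbf p}$ in \eqref{ddot_z2p_chap_3}, $\dot{\mathbf z}_{2\mathbf q}$ in \eqref{eq:ddot_z2q_chap_3}, and the feedback parts $\bar{\boldsymbol\nu}_{\mathbf p}=\dot{\boldsymbol\alpha}_{\mathbf p}-\mathbf{D_p}\mathbf{z_p}-\Lambda_{2p}\mathbf{z_{2p}}$, $\bar{\boldsymbol\nu}_{\mathbf q}=\dot{\boldsymbol\alpha}_{\mathbf q}-\mathbf{D_q}\mathbf{z_q}-\Lambda_{2q}\mathbf{z_{2q}}$, the indefinite cross terms $\pm\mathbf{z_p}^{T}\mathbf{D_p}\mathbf{z_{2p}}$ and $\pm\mathbf{z_q}^{T}\mathbf{D_q}\mathbf{z_{2q}}$ cancel by symmetry of the diagonal gains, leaving
\[
\dot V = -\mathbf{z_p}^{T}\mathbf{D_p}\Lambda_{1p}\mathbf{z_p}-\mathbf{z_{2p}}^{T}\Lambda_{2p}\mathbf{z_{2p}}-\mathbf{z_q}^{T}\mathbf{D_q}\Lambda_{1q}\mathbf{z_q}-\mathbf{z_{2q}}^{T}\Lambda_{2q}\mathbf{z_{2q}}+\mathbf{z_{2p}}^{T}(\Delta\boldsymbol\nu_{\mathbf p}+\boldsymbol\eta_{\mathbf p})+\mathbf{z_{2q}}^{T}(\Delta\boldsymbol\nu_{\mathbf q}+\boldsymbol\eta_{\mathbf q}).
\]

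The next step is to dominate the two indefinite groups with the robust inputs \eqref{eq:delta_nu_p_chap_3}--\eqref{eq:delta_nu_q_chap_3} and the gain conditions $\rho_p\ge\|\boldsymbol\eta_{\mathbf p}\|$, $\rho_q\ge\|\boldsymbol\eta_{\mathbf q}\|$ secured by \eqref{eq:rho_p_chap_3}--\eqref{eq:rho_q_chap_3}. I would split on the saturation function: if $\|\mathbf{z_{2p}}\|\ge\epsilon_p$ then $\mathbf{z_{2p}}^{T}\Delta\boldsymbol\nu_{\mathbf p}=-\rho_p\|\mathbf{z_{2p}}\|$, so $\mathbf{z_{2p}}^{T}(\Delta\boldsymbol\nu_{\mathbf p}+\boldsymbol\eta_{\mathbf p})\le-\rho_p\|\mathbf{z_{2p}}\|+\|\mathbf{z_{2p}}\|\,\|\boldsymbol\eta_{\mathbf p}\|\le0$; if $\|\mathbf{z_{2p}}\|<\epsilon_p$ then $\mathbf{z_{2p}}^{T}\Delta\boldsymbol\nu_{\mathbf p}=-\rho_p\|\mathbf{z_{2p}}\|^{2}/\epsilon_p$ and maximising $-\rho_p s^{2}/\epsilon_p+\rho_p s$ over $s\in[0,\epsilon_p)$ gives $\mathbf{z_{2p}}^{T}(\Delta\boldsymbol\nu_{\mathbf p}+\boldsymbol\eta_{\mathbf p})\le\rho_p\epsilon_p/4$; hence uniformly $\mathbf{z_{2p}}^{T}(\Delta\boldsymbol\nu_{\mathbf p}+\boldsymbol\eta_{\mathbf p})\le\rho_p\epsilon_p/4$, and identically $\mathbf{z_{2q}}^{T}(\Delta\boldsymbol\nu_{\mathbf q}+\boldsymbol\eta_{\mathbf q})\le\rho_q\epsilon_q/4$. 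Combining with the elementary bound $\log\!\big(k^{2}/(k^{2}-z^{2})\big)\le z^{2}/(k^{2}-z^{2})$ for $|z|<k$, which lets each negative barrier term be split into a part proportional to the barrier portion of $V$ plus a part retained to absorb the residue, I would arrive at an estimate of the form $\dot V\le-cV+\varpi$ and conclude $V$, hence all errors, hence the constraint satisfaction $|z_{pi}|<k_{pi}$, $|z_{qi}|<k_{qi}$, bounded for all $t>0$ — which is the claim.

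The hard part is that $\rho_p,\rho_q$ are \emph{not} constants: through $\bar{\boldsymbol\nu}_{\mathbf p}=\dot{\boldsymbol\alpha}_{\mathbf p}-\mathbf{D_p}\mathbf{z_p}-\Lambda_{2p}\mathbf{z_{2p}}$ (and likewise for $q$) they contain the very term $\mathbf{D_p}\mathbf{z_p}$ that diverges as $z_{pi}\to k_{pi}$, so the residue $\tfrac14(\rho_p\epsilon_p+\rho_q\epsilon_q)$ also grows near the constraint boundary and is not a priori small. The estimate only closes because $\|\mathbf{D_p}\mathbf{z_p}\|$ and the dominant negative term $\mathbf{z_p}^{T}\mathbf{D_p}\Lambda_{1p}\mathbf{z_p}$ blow up at the same rate $(k_{pi}^{2}-z_{pi}^{2})^{-1}$, so for $\epsilon_p,\epsilon_q$ small enough relative to $\gamma_{p_i},\gamma_{q_i}$ and $k_{pi},k_{qi}$ the negative barrier term still dominates the chattering residue up to a genuine constant; making this domination explicit — turning the informal ``$\dot V\le-cV+\varpi$'' into a rigorous bound valid uniformly up to the boundary — is where the real work lies, and it effectively imposes a smallness requirement on the chattering gains. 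A secondary technical point is justifying the attitude-error kinematics $\dot{\mathbf z}_{\mathbf q}=\mathbf{z_{2q}}-\Lambda_{1q}\mathbf{z_q}$, since $\mathbf{z_q}$ in \eqref{eq:z_q_chap_3} is the $SO(3)$ vee-map error rather than $\mathbf q-\mathbf{q_d}$, and tying this to the co-design generation of $\phi_d,\theta_d$ so that $\mathbf{z_q}\to\mathbf 0$ indeed drives $\mathbf R_B^{W}\to\mathbf R_d$.
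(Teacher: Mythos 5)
Your proposal is correct and follows essentially the same route as the paper: the identical composite barrier Lyapunov function $V=V_p+V_q$ with log-barrier plus quadratic terms, the same cancellation of the cross terms $\mathbf{z_p}^{T}\mathbf{D_p}\mathbf{z_{2p}}$ via $\bar{\boldsymbol\nu}_{\mathbf p}$, the same case split on the saturation deadzones, the same inequality $\log\bigl(k^{2}/(k^{2}-z^{2})\bigr)<z^{2}/(k^{2}-z^{2})$, and the same ultimate-boundedness conclusion $\dot V\le-\varsigma V+c$. The one point where you go beyond the paper --- worrying that $\rho_p,\rho_q$ contain $\mathbf{D_p}\mathbf{z_p}$ and hence that the residue is not a priori constant --- is dispatched in the paper by an a posteriori signal-chasing argument ($\mathbf{z_{2p}},\mathbf{z_{2q}}\in\mathcal{L}_\infty\Rightarrow\boldsymbol\alpha_{\mathbf p},\boldsymbol\nu_{\mathbf p},\boldsymbol\eta_{\mathbf p}\in\mathcal{L}_\infty$, so a finite $c$ exists) rather than by the explicit boundary-rate comparison you sketch, so your extra care is compatible with, and arguably tightens, the published argument.
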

\begin{proof}
Closed-loop stability analysis is carried via the following barrier Lyapunov function (cf. \cite{KPTee2009BLF} for definition and properties) candidate:
\begin{align}
    V&=V_p + V_q \label{eq:lyap_chap_3} \\
\text{where}~~    V_p &= \ddfrac{1}{2}\sum_{i=1}^{3}\log{\left(\ddfrac{k_{p_{i}}^2}{k_{p_{i}}^{2}-z_{p_i}^{2}}\right)}+\ddfrac{1}{2}\mathbf{z}^T_{\mathbf{2p}}\mathbf{z_{2p}}\nonumber  \\
    V_q &= \ddfrac{1}{2}\sum_{i=1}^{3}\log{\left(\ddfrac{k_{q_{i}}^2}{k_{q_{i}}^{2}-z_{q_i}^{2}}\right)}+\frac{1}{2}\mathbf{z}^{T}_{\mathbf{2q}}\mathbf{z_{2q}}  \nonumber
\end{align}

\noindent Using (\ref{eq:z_2p_chap_3})-\eqref{ddot_z2p_chap_3} and (\ref{eq:z_2q_chap_3})-\eqref{eq:ddot_z2q_chap_3}, the time derivative of \eqref{eq:lyap_chap_3} yields
\begin{align}
    \dot{V}_p &= \sum_{i=1}^{3}\ddfrac{z_{p_i}^{T}\dot{z}_{p_i}}{k_{p_i}^{2}-z_{p_i}^{2}} + \mathbf{z}^{T}_{\mathbf{2p}}\dot{\mathbf z}_{\mathbf{2p}} = \mathbf{z}^{T}_{\mathbf p} \mathbf{D_p}\dot{\mathbf z}_{\mathbf p} + \mathbf{z}^{T}_{\mathbf{2p}}\dot{\mathbf z}_{\mathbf{2p}} \nonumber\\
    &= \mathbf{z}^{T}_{\mathbf p} \mathbf{D_p(z_{2p}}+\alpha_{\mathbf p}-\dot{\mathbf p}_{\mathbf d}) + \mathbf{z}^{T}_{\mathbf{2p}}(\bar{\nu}_{\mathbf p}+\Delta \nu_{\mathbf p}+\eta_{\mathbf p}-\dot{\alpha}_{\mathbf p}) \nonumber\\
    &= -\mathbf{z}^{T}_{\mathbf p} \Lambda_{1p}\mathbf{D_p {z}_p}-\mathbf{z}^{T}_{\mathbf{2p}}\Lambda_{2p}\mathbf{{z}_{2p}}+\mathbf{z}^{T}_{\mathbf{2p}}(\Delta\mathbf{{\mathbf{\nu_p}+\eta_p})}, \label{eq:lyap_chap_3_de_p} \\
    \dot{V}_q &= \sum_{i=1}^{3}\frac{z_{q_i}^{T}\dot{z}_{q_i}}{k_{q_i}^{2}-z_{q_i}^{2}} + \mathbf{z}^T_{\mathbf{2q}} \dot{\mathbf z}_{\mathbf{2q}}= \mathbf{z}^{T}_{\mathbf{q}}\mathbf{D_q}\dot{\mathbf z}_{\mathbf q} + \mathbf{z}^{T}_{\mathbf{2q}} \dot{\mathbf z}_{\mathbf {2q}} \nonumber\\
    &= \mathbf{z}^{T}_{\mathbf{q}}\mathbf{D_q}(\mathbf{z_{2q}}+\mathbf{\alpha_q}-\dot{\mathbf{q}}_\mathbf{d}) + \mathbf{z}^{T}_{\mathbf{2q}}(\bar{\nu}_{\mathbf q}+\Delta\mathbf{\nu_q+\mathbf{\eta_q}}-\dot{\alpha}_{\mathbf q}) \nonumber\\
    &= -\mathbf{z}^{T}_{\mathbf{q}}\Lambda_{1q}\mathbf{D_q}{\mathbf z}_{\mathbf q}-\mathbf{z}^{T}_{\mathbf{2q}}\Lambda_{2q}{\mathbf z}_{\mathbf{2q}}+\mathbf{z}^{T}_{\mathbf{2q}}(\Delta\mathbf{{\mathbf{\nu_q}+\eta_q})}. \label{eq:lyap_chap_3_de_q}
\end{align}
Further simplifications from (\ref{eq:lyap_chap_3_de_p})-(\ref{eq:lyap_chap_3_de_q}) yield
\begin{align}
    \dot{V}\leq & -\sum_{i=1}^{3}\gamma_{p_i}\log{\left(\ddfrac{k_{p_i}^{2}}{k_{p_i}^2-z_{p_i}^2}\right)} -\gamma_{q_i}\log{\left(\ddfrac{k_{q_i}^{2}}{k_{q_i}^2-z_{q_i}^2}\right)} \nonumber \\
    &- \lambda_{\min}(\Lambda_{2p})\|\mathbf{z_{2p}}\|^{2} - \lambda_{\min}(\Lambda_{2q})\|\mathbf{z_{2q}}\|^{2} \nonumber \\
    & +\mathbf{z}^{T}_{\mathbf{2p}}(\Delta\mathbf{\nu_p+\eta_p})
    +\mathbf{z}^{T}_{\mathbf{2q}}(\Delta\mathbf{\nu_q+\eta_q}). \label{eq:lyap_chap_3_der_inequality}
\end{align}
The first two terms in the aforementioned inequality are derived from the fact that $\log{\left(\ddfrac{k_x^{2}}{k_x^{2}-x^{2}}\right)} <   \ddfrac{x^{2}}{k_x^{2}-x^{2}}$ within any compact set $\Omega: |x(t)|<k_x$ $\forall t \geq 0$ and any $k_x \in \mathbb{R}^{+}$ \cite{Liu2016Robust_BLF}.

The definition of $V$ yields
\begin{align}
    V &\leq \sum_{i=1}^{3}\log{\left(\ddfrac{k_{p_i}^{2}}{k_{p_i}^{2}-z_{p_i}^{2}}\right)} + \log{\left(\ddfrac{k_{q_i}^{2}}{k_{q_i}^{2}-z_{q_i}^{2}}\right)} +\|\mathbf{z_{2p}}\|^{2} + \|\mathbf{z_{2q}}\|^{2}. \label{eq:lyap_chap_3_ineq}
\end{align}
Using \eqref{eq:lyap_chap_3_ineq}, from \eqref{eq:lyap_chap_3_der_inequality} we have
\begin{align}
    \dot{V} \leq - \varsigma V + \mathbf{z}^{T}_{\mathbf{2p}} (\Delta \mathbf{\nu_p+\eta_p})  + \mathbf{z}^{T}_{\mathbf{2q}}(\Delta \mathbf{\nu_q+\eta_q}) \label{eq:lyap_chap_3_der_varsigma}
\end{align}
where $\varsigma \triangleq \min \lbrace \min\lbrace \gamma_{p_i} \rbrace, \min\lbrace \gamma_{q_i} \rbrace,\lambda_{\min}(\Lambda_{2p}), \lambda_{\min}(\Lambda_{2q}) \rbrace$.
\\
\noindent Based on the structures of $\Delta \mathbf{\nu_{p}}$ and $\Delta \mathbf{\nu_{q}}$ as in (\ref{eq:delta_nu_p_chap_3}) and (\ref{eq:delta_nu_q_chap_3}), the following four possible cases can be identified.
\\
\noindent \textbf{Case (i)} $\|\mathbf{z_{2p}}\|\geq \epsilon_p$ and $\|\mathbf{z_{2q}}\|\geq \epsilon_q$:
\\
Since $\rho_p\geq\|\mathbf{\eta_p}\|$ and $\rho_q\geq\|\mathbf{\eta_q}\|$ by design, \eqref{eq:lyap_chap_3_der_varsigma} yields
\begin{align}
    \dot{V} &\leq - \varsigma V - \rho_p\|\mathbf{z_{2p}}\|  + \rho_p\|\mathbf{z_{2p}}\| - \rho_q\|\mathbf{z_{2q}}\| + \rho_q\|\mathbf{z_{2q}}\| \nonumber \\
    \implies \dot{V} &\leq -\varsigma V \label{eq:case1_chap_3}
\end{align}
\noindent \textbf{Case (ii)} $\|\mathbf{z_{2p}}\|\geq \epsilon_p$, $\|\mathbf{z_{2q}}\| < \epsilon_q$:
\begin{align}
    \dot{V} &\leq - \varsigma V + \|\mathbf{\eta_q}\|\|\mathbf{z_{2q}}\|
    \leq - \varsigma V + \|\mathbf{\eta_q}\|\epsilon_q
\end{align}
\noindent \textbf{Case (iii)} $\|\mathbf{z_{2p}}\| < \epsilon_p$, $\|\mathbf{z_{2q}}\| \geq \epsilon_q$:
\begin{align}
    \dot{V} &\leq - \varsigma V + \|\mathbf{\eta_p}\|\|\mathbf{z_{2p}}\| 
    \leq - \varsigma V + \|\mathbf{\eta_p}\|\epsilon_p
\end{align}
\noindent \textbf{Case (iv)} $\|\mathbf{z_{2p}}\| < \epsilon_p$, $\|\mathbf{z_{2q}}\| < \epsilon_q$:
\begin{align}
    \dot{V} & \leq - \varsigma V + \|\mathbf{\eta_m}\|\epsilon_m
\end{align}
where $|| \mathbf{\eta_m}||\triangleq \max \lbrace ||\mathbf{\eta_p}||, || \mathbf{\eta_q} || \rbrace$, and $\epsilon_m \triangleq \max \lbrace \epsilon_p, \epsilon_q \rbrace$. Replacing (\ref{alpha_p_chap_3}) into (\ref{eq:z_2p_chap_3}) and (\ref{alpha_q_chap_3}) into (\ref{eq:z_2q_chap_3}) one can verify that $\mathbf{z_{2p}}, \mathbf{z_{2q}} \in \mathcal{L}_{\infty} \Rightarrow \mathbf{p}, \dot{\mathbf{p}}, \mathbf{q}, \dot{\mathbf{q}} \in \mathcal{L}_{\infty}$ as desired trajectories are bounded via Assumption 3. Now, boundedness of $\mathbf{p}, \dot{\mathbf{p}}, \mathbf{q}, \dot{\mathbf{q}}$ imply $\alpha_{\mathbf{p}}, \alpha_{\mathbf{q}} \in \mathcal{L}_{\infty} \Rightarrow\mathbf{\nu_{p}}, \mathbf{\nu_{q}}\in \mathcal{L}_{\infty}$ from (\ref{ctr_p_chap_3}), (\ref{ctr_q_chap_3}) and $\mathbf{C}(\mathbf{q},\dot{\mathbf{q}}) \in \mathcal{L}_{\infty}$ by property of Euler-Lagrange mechanics \cite{spong2008robot}; these cumulative boundedness conditions imply $\mathbf{\eta_{p}},\mathbf{\eta_{q}} \in \mathcal{L}_{\infty}$ from (\ref{un_p_chap_3}) and (\ref{uncer_att_chap_3}). Therefore, there exists finite constant $c$ such that $\|\mathbf{\eta_m}\|\epsilon_m \leq c$. Then, observing the four stability cases it can be inferred that
\begin{align}
    \dot{V} \leq - \varsigma V + c
\end{align}
implying that the closed-loop system remains ultimately bounded (cf. \cite{khalil2002nonlinear} for the definition) via the barrier Lyapunov function $V$, implying $z_{pi}, z_{qi}$ never violates the constraints and  $|z_{pi}| < k_{pi}, |z_{qi}| < k_{qi}$ $i=1,2,3$ $\forall t>0$. 
\end{proof}
\begin{figure}[htb]
    \includegraphics[width=1\textwidth]{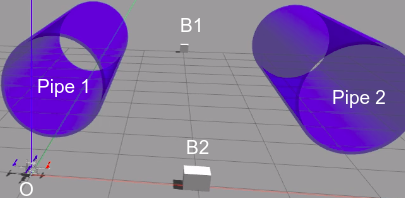}
    \centering
    \caption{Scenario 1 (Pipe) setup}
     \label{fig:pipe_setup}
\end{figure}
\section{Simulation Verification}
In this section, the effectiveness of the proposed RSB (Robust Symmetrical Barrier-lyapunov) controller is compared to the Sliding Mode Control (SMC) \cite{xu2008sliding} during maneuvering in constrained space. The following subsections discuss the simulation setup and results.

\FloatBarrier
\begin{figure}[ht!]
    \includegraphics[width=1\textwidth, height=3in]{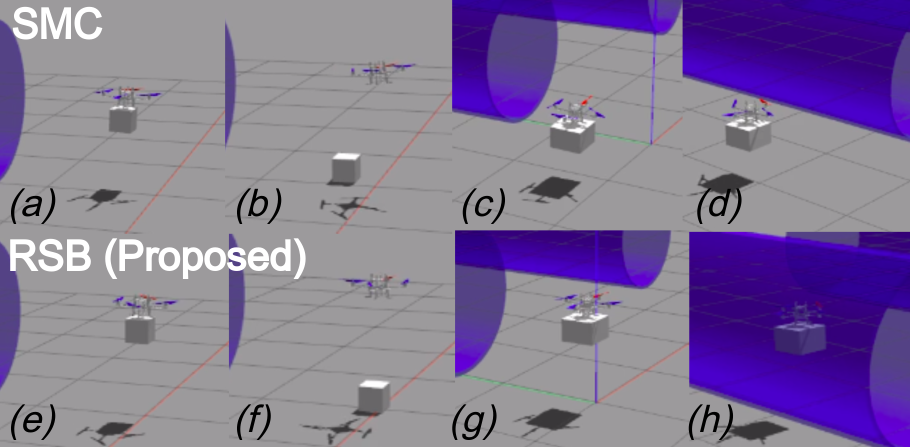}
    \centering
    \caption{Snapshots of the simulation for scenario 1: (a) Before dropping B1 on drop-point (t=34s) (b) After dropping B1 (t=35s) (c) Attempting to enter pipe 1 (t=47s) (d) Failing to enter pipe 1 (t=50s) (e) Before dropping B1 on drop-point (t=34s) (f) After dropping B1 (t=35s) (g) Attempting to enter pipe 1 with B2 (t=47s) (h) After entering pipe 1 (t=51s)}
     \label{fig:pipe_snaps}
\end{figure}

\begin{figure}[ht!]
    \includegraphics[width=1\textwidth]{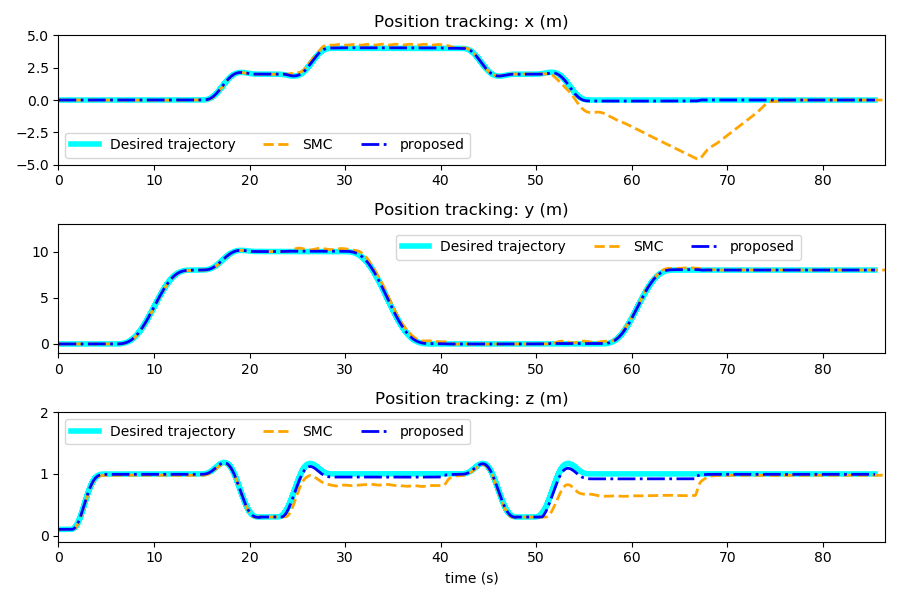}
    \centering
    \caption{Scenario 1 (Pipe): Trajectory tracking}
     \label{fig:pipe_traj}
\end{figure}
\begin{figure}[ht!]
    \includegraphics[width=1\textwidth]{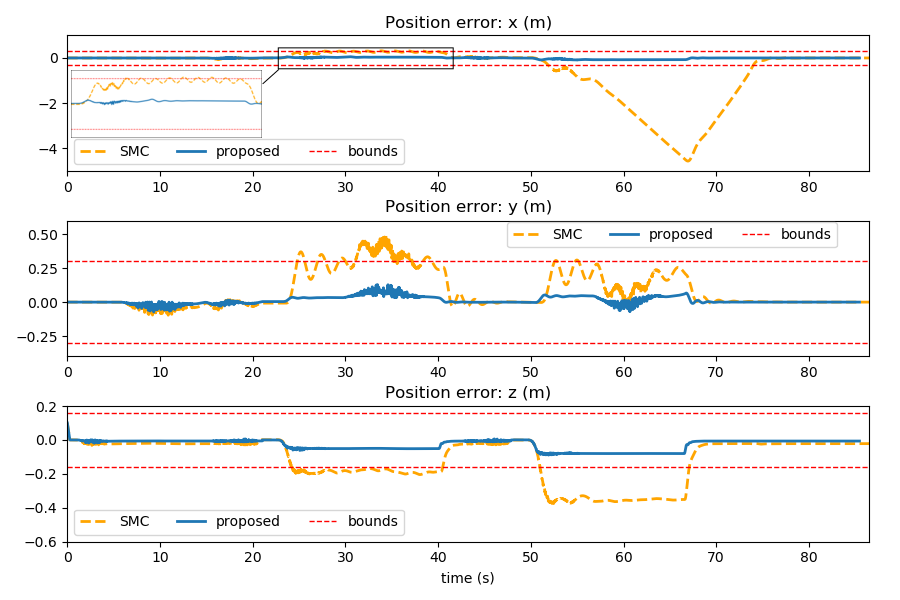}
    \centering
    \caption{Scenario 1 (Pipe): Position error comparison}
    \label{fig:pipe_pos_error}
\end{figure}
\begin{figure}[ht!]
    \includegraphics[width=1\textwidth]{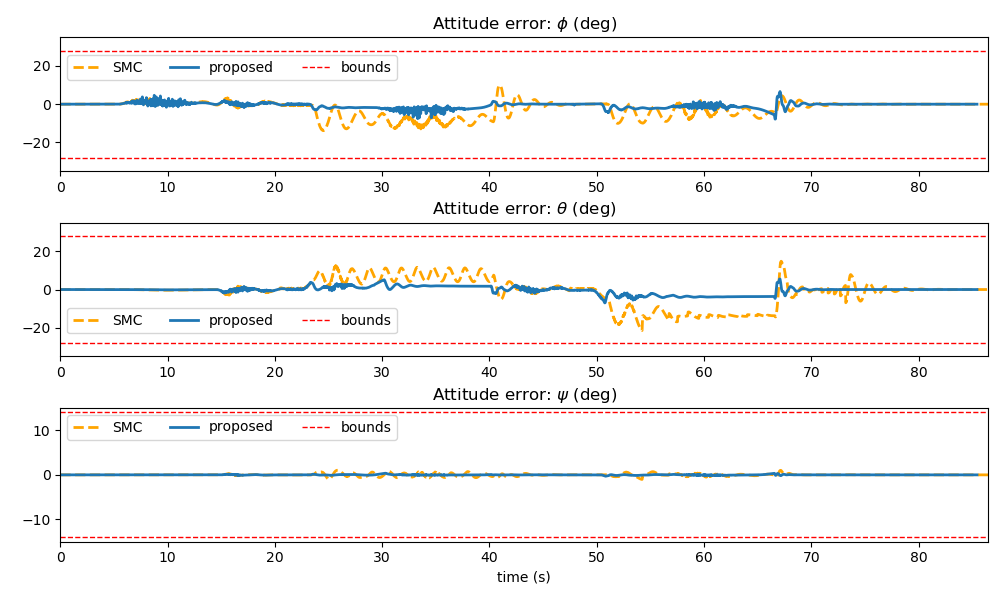}
    \centering
    \caption{Scenario 1 (Pipe): Attitude error comparison}
     \label{fig:pipe_att_error}
\end{figure}
\begin{figure}[ht!]
    \includegraphics[width=1\textwidth]{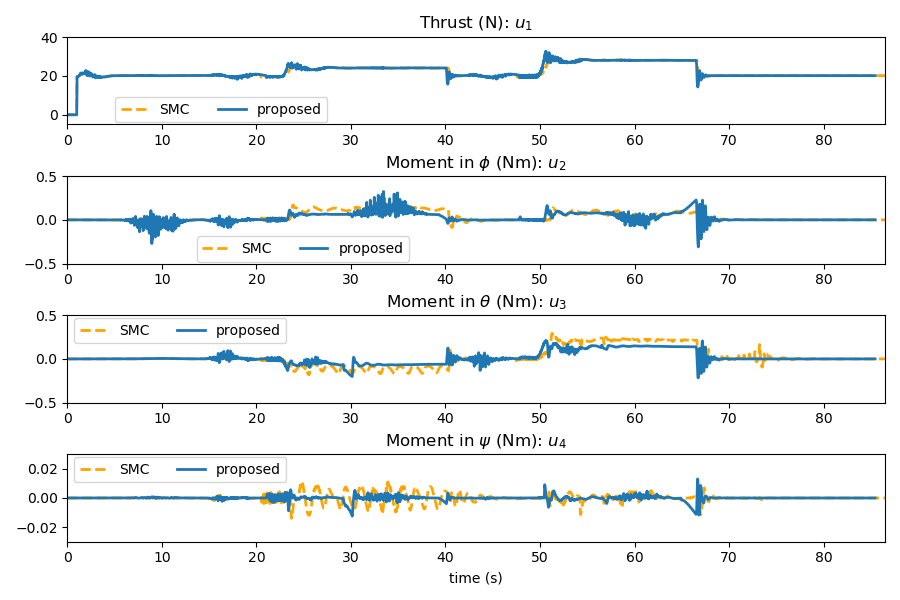}
    \centering
    \caption{Scenario 1 (Pipe): Control inputs}
     \label{fig:pipe_control_input}
\end{figure}


The performance of the proposed controller is tested on a Gazebo simulation platform using the RotorS Simulator framework \cite{rotors_sim_2016} for ROS with the Pelican quadrotor model, which is actuated by commanding the angular velocities, in accordance with their relation with the applied thrusts and moments (cf. \cite{mellinger2011minimum}). The objective is to determine the effectiveness of the proposed controller on precise path tracking and payload delivery of the quadrotor. The following parameters for the model and the proposed controller are used: $[k_{p_1}, k_{p_2}, k_{p_3}, k_{q_1}, k_{q_2}, k_{q_3}]^{T} = [0.3, 0.3, 0.16, 0.5, 0.5, 0.25]^{T}$;  $\Lambda_{1p} = [0.1,~ 0.1,~ 0.1]$; $\Lambda_{2p} = \diag{\{6, 6, 6\}}$; $\Lambda_{1q}= [24,~24,~8]$; $\Lambda_{2q} = \diag{\{15, 15, 15\}}$; $\bar{m}=2$ kg; $\bar{\mathbf{J}}=\diag{\{0.02, 0.02, 0.04\}}$; $E_q = 0.3$;  $E_p = 0.3$; $\epsilon_p = 0.1$; $\epsilon_q = 1$; mean wind-speed disturbance = 2 m/s; $\mathbf{p(0)=q(0)=0}$; disturbance bounds for the outer-loop controller: $[0,0,0.5]^{T}$; disturbance bounds for the inner-loop controller: $[0.01,0.01,0.02]^{T}$. In real-life dynamic payload delivery scenario, the centre-of-mass of the payload might not exactly align with that of the quadrotor, and so, small offsets (valued mentioned in the scenarios) are provided to each payload in the x and y directions for more generalization. For SMC, sliding surfaces are taken as $\mathbf{s_p}=\dot{\mathbf{z}}_{\mathbf{p}} + \Lambda_{1p} \mathbf{z_p}$, $\mathbf{s_q}=\dot{\mathbf{z}}_{\mathbf{q}} + \Lambda_{1q} \mathbf{z_q}$, respectively, for parity.

\textit{Scenario 1 (Pipe):} In this simulation, there are 2 payloads, B1 (0.4 kg) and B2 (0.8 kg); and 2 pipes of 1.5 m diameter each, as shown in Fig. \ref{fig:pipe_setup}. The quadrotor starts from the initial position (O), travels through pipe 1, picks up B1, travels through the pipe 2, drops B1 at its drop point, picks up B2, travels through pipe 1, and finally drops B2 at its drop point. Figures \ref{fig:pipe_pos_error} and \ref{fig:pipe_att_error} show the results on the position and attitude errors are obtained after simulating on SMC and the proposed controller. It is readily observable that, at $t \approx 24$s and at $t \approx 51$s, all three position errors for SMC violate the intended performance-bound imposed (shown by the red horizontal lines), while the proposed controller keeps the position errors within the bounds. While the SMC does not violate the intended attitude error limits, it should be noted that the magnitudes of error for SMC are significantly larger with relatively larger oscillations, which is detrimental to precise payload delivery operations, as seen later. Table \ref{table 1} compares the Root-mean-squared (RMS) and peak errors of the proposed controller and SMC, which emphasises on the points stated earlier. Additionally, to further demonstrate the effectiveness of the proposed controller, Fig. \ref{fig:pipe_snaps} shows 4 essential snapshots, of the simulation with timestamps. 

From Fig. \ref{fig:pipe_snaps}((a),(b) and (e),(f)), it is clearly seen that the quadrotor successfully drops B1 at approximately its designated drop-point (which falls on the red line) for the proposed controller, while it fails to do the same with SMC. Additionally, from Fig. \ref{fig:pipe_snaps}((c),(d) and (g),(h)), it is observed that the quadrotor successfully manoeuvres through the pipe for the proposed controller, unlike with the SMC.


\begin{figure}[htb]
    \includegraphics[width=1\textwidth, height=3.5in]{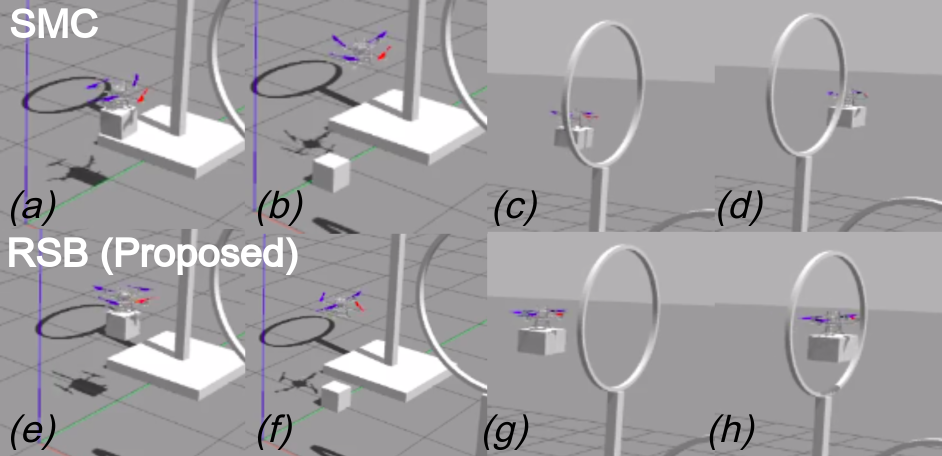}
    \centering
    \caption{Scenario 2 (Ring) simulation run snapshots: (a) Before dropping B1 on drop-point (t=59s) (b) After dropping B1 (t=60s) (c) Attempting to enter ring 2 (t=84s) (d) Failing to enter ring 2 (t=85s) (e) Before dropping B1 on drop-point (t=59s) (f) After dropping B1 (t=60s) (g) Attempting to enter ring 2 with B2 (t=84s) (h) After entering ring 2 (t=85s) }
     \label{fig:ring_snaps}
\end{figure}

\textit{Scenario 2 (Ring): } In this simulation Scenario, (cf. Fig. \ref{fig:ring_setup}), the quadrotor is tasked to pick up payload B1 ($0.4$ kg) and manoeuvre through four ring-like structures of different diameters ($1.2$ m for rings 1-2 and $1.5$ m for rings 3-4) for $t=60$ sec before dropping at its original place. Then, the similar task is to be repeated from $t >60$ sec with a higher payload B2 ($0.8$ kg). 
For realistic representation, (i) wind disturbance of speed $2$ m/sec blowing in $45^{0}$ direction to $x-y$ plane is added; (ii) a $2$ cm offset in centre-of-mass between the payload and Pelican in $x$ direction is provided, as such misalignment cannot be ruled out in practice. The tracking performance of the two controllers are shown via Figs. \ref{fig:ring_snaps}-\ref{fig: ctr_ip_ring} and Table \ref{table 1}. 
It can observed from Fig. \ref{fig:ring_pos_error} that, while carrying payload B1, $z$ position error violates the intended performance-bound with SMC and such violations are quite significant in $y$ and $z$ directions while carrying the heavier payload B2 (cf. $88 \leq t \leq 115$ sec (approx)): these violations imply that the quadrotor did not manoeuvre through the rings with SMC while carrying payload B2 (cf. Fig. \ref{fig:ring_snaps}, peak error in Table \ref{table 2}). Whereas, the proposed controller keeps the position and attitude errors within the designated bounds. While the SMC does not violate the intended attitude error limits, Table \ref{table 2} reveals performance gaps in terms of root-mean-squared (RMS) and peak absolute errors.  While the SMC does not violate the intended attitude error limits, it should be noted that the magnitudes of error are significantly larger with SMC, with relatively larger oscillations, which is detrimental to precise and fast payload delivery operations. In fact, to further demonstrate the effectiveness of the proposed controller for payload operation scenarios, Fig. \ref{fig:ring_snaps} shows four essential snapshots for each controller method. It is clearly seen in Fig. \ref{fig:ring_snaps} (a), (b), (e) and (f) that the quadrotor successfully drops B1 at approximately its designated drop-point (which is situated at the intersection of the green line and the relevant grey grid-line) for the proposed controller, while it fails to do the same with SMC.

\begin{figure}[ht!]
{\includegraphics[width=6in, height=3.5in]{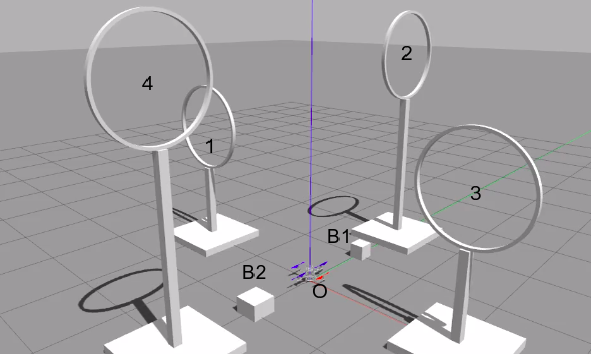}}
        \centering
        \caption{Simulation scenario 2: manoeuvring through ring structures with different payloads.}
        \label{fig:ring_setup}
    \end{figure}
\begin{figure}[ht!]
    \includegraphics[width=6in, height=4in]{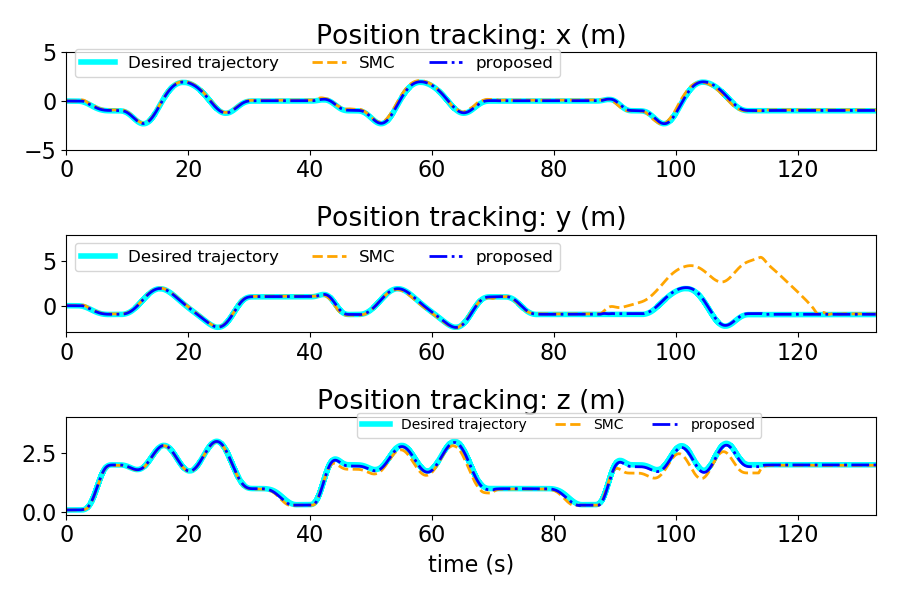}
    \centering
    \caption{Scenario 2 (Ring): Position tracking performance comparison}
    \label{fig:ring_pos}
\end{figure}
\begin{figure}[ht!]
    \includegraphics[width=6in, height=3.8in]{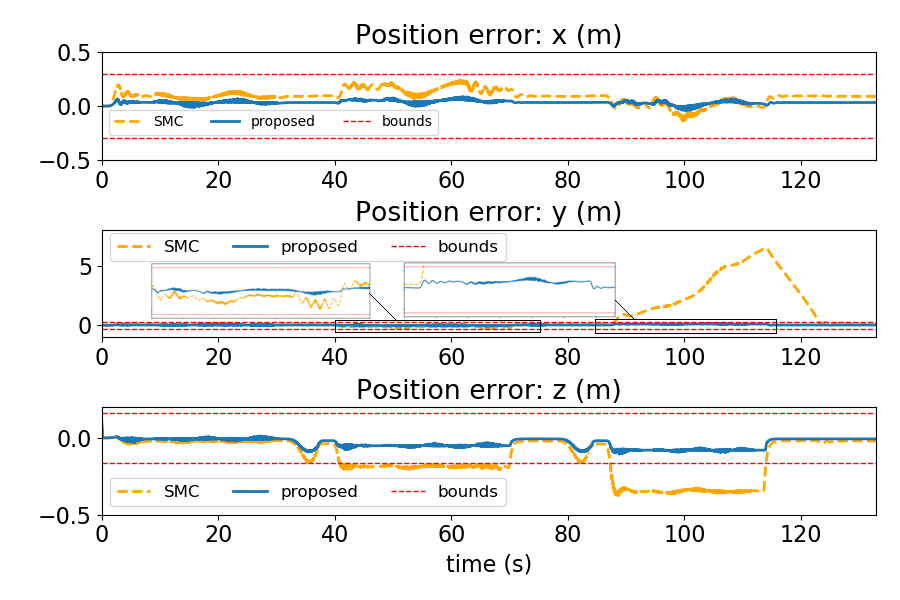}
    \centering
    \caption{Scenario 2 (Ring): Position tracking error comparison for the ring scenario}
    \label{fig:ring_pos_error}
\end{figure}
\clearpage
\FloatBarrier
\begin{figure}[ht!]
    \includegraphics[width=6in, height=3.8in]{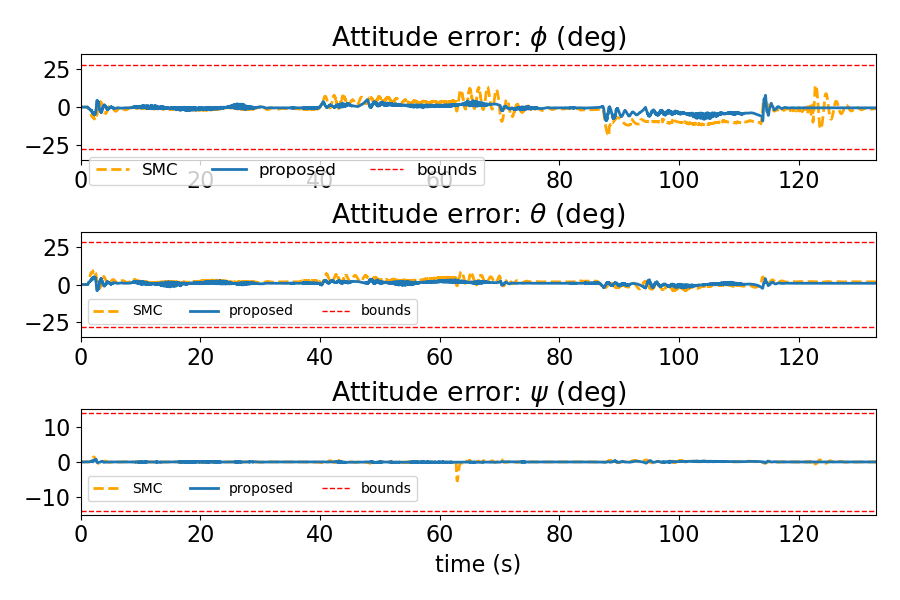}
    \centering
    \caption{Scenario 2 (Ring): Attitude tracking error comparison}
    \label{fig:ring_att_error}
\end{figure}
\begin{figure}[ht!]
    \includegraphics[width=6in, height=3.8in]{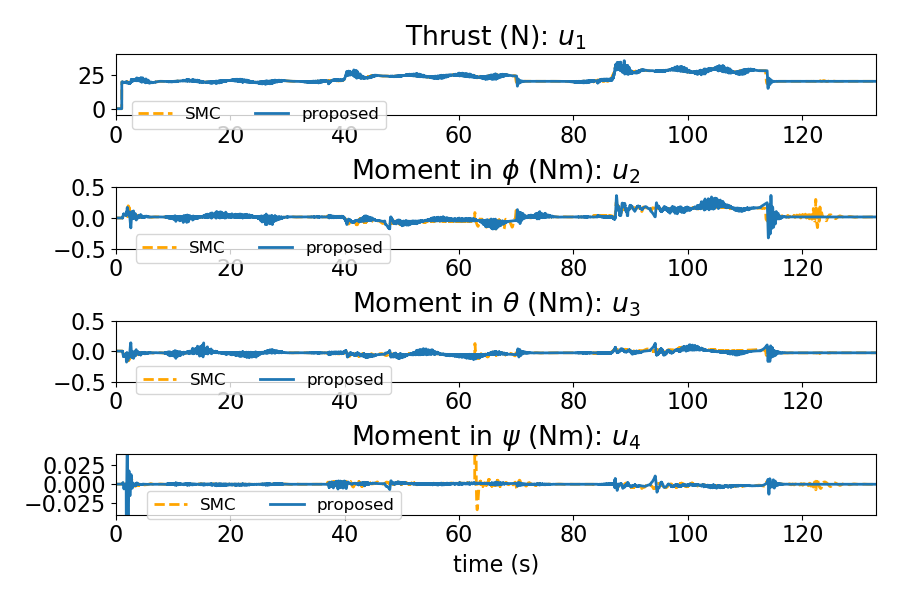}
    \centering
    \caption{Scenario 2 (Ring): Control input comparison}
    \label{fig: ctr_ip_ring}
\end{figure}

\begin{table}[!t]
\renewcommand{\arraystretch}{1.6}
		\centering
{
{	\begin{tabular}{|c| c c c c c c|}
		\hline
		& \multicolumn{3}{c}{RMS error (m)} & \multicolumn{3}{c|}{RMS error (degree)} \\ \cline{1-7}
		 & $x$ & $y$  & $z$  & $\phi$ & $\theta$  & $\psi$  \\
		 \hline
		SMC & 1.28 & 0.16  & 0.17 & 4.71 & 7.09  & 0.25 \\
		\hline
		Proposed & {0.03} & {0.03}  & {0.04} & 1.55 & 1.94  & 0.06 \\
		\hline
		& \multicolumn{3}{c}{Peak absolute error (m)} & \multicolumn{3}{c|}{Peak absolute error (degree)}  \\ \cline{1-7}
		 & $x$ & $y$  & $z$  & $\phi$ & $\theta$  & $\psi$  \\
		 \hline
		SMC & 4.61 & 0.49  & 0.38 & 10.61 & 21.25  & 0.81 \\
		\hline
		Proposed & {0.0} & {0.12}  & {0.10} & 7.11 & 6.08  & 0.46 \\
		\hline
\end{tabular}}}
\caption{{Scenario 1 (Pipe): Performance comparison}} \label{table 1}
\end{table}
\clearpage
\begin{table}[t]
\renewcommand{\arraystretch}{1.6}
		\centering
{
{	\begin{tabular}{|c| c c c c c c|}
		\hline
		Controller & \multicolumn{3}{c}{RMS error (m)} & \multicolumn{3}{c|}{RMS error (degree)}  \\ \cline{1-7}
		 & $x$ & $y$  & $z$  & $\phi$ & $\theta$  & $\psi$  \\
		 \hline
		SMC & 0.11 & 1.89  & 0.18 & 5.58 & 2.55  & 0.28  \\
		\hline
		Proposed & {0.04} & {0.05}  & {0.04} & 2.16 & 1.13  & 0.07 \\
		\hline
		& \multicolumn{3}{c}{Peak absolute error (m)} & \multicolumn{3}{c|}{Peak absolute error (degree)}  \\ \cline{1-7}
		 & $x$ & $y$  & $z$  & $\phi$ & $\theta$  & $\psi$  \\
		 \hline
		SMC & 0.24 & 6.47  & 0.38 & 18.69 & 9.53  & 0.72 \\
		\hline
		Proposed & {0.09} & {0.16}  & {0.11} & 9.24 & 4.99  & 0.72 \\
		\hline
\end{tabular}}}
\caption{{Scenario 2 (Ring): Performance comparison}}\label{table 2}
\end{table}

\section{Conclusion}
A robust controller with Barrier Lyapunov approach for quadrotors was proposed, which can ensure tracking performance within a predefined limit to allow a quadrotor navigate through constrained space under parametric uncertainties and external disturbances. Closed-loop system stability was established via Barrier Lyapunov function. Performance of the proposed controller is extensively verified via simulated scenarios in comparison with the state of the art. However, the six states, i.e., the linear velocities and the angular velocities are unconstrained in this work. The next chapter of the thesis implements robust control with full state-constraints (i.e., constraints on position, orientation, linear velocity and angular velocity) on a quadrotor and a payload delivery scenario under tight constraints is extensively investigated via experiments on a real quadrotor.


\chapter{Robust Manoeuvring of Quadrotor under Full State Constraints}
\label{ch:chap4}
\section{Introduction}

In the previous chapter, the problem of robust manoeuvring and payload carrying by a quadrotor under spatial constraints was addressed. However, six out of the twelve controllable states of the quadrotor, i.e., the three linear velocity and the three angular velocity states were left unconstrained. In real-life scenarios, such as relief operation during flooding, the quadrotor needs to not only manoeuvre under tight spatial constraints, but also needs to drop the relief package with high precision to avoid wastage of material. On the other hand (further observed in the experimental section of this chapter), a quadrotor has the tendency to undergo heavy transients immediately after deploying its payload while tracking the desired trajectory. In conventional robust or adaptive control schemes, these transients are controlled via tuning of the user-specified control gains. A drawback of this method is that tuning of a real quadrotor requires several trial-and-error runs in real-life scenarios (while keeping in mind the effect of parametric uncertainties and external disturbances) before ensuring any predefined accuracy and successful performance under implied state constraints. This defeats the purpose of deploying the UAV in emergency scenarios which requires quick and effective response. Therefore, a control design enforcing user-defined constraints on the time-derivatives of position and orientation implies that the quadrotor would perform better at mitigating transients during payload drop operations. The additional challenge of the quadrotor being an underactuated system adds to the challenge of accurate trajectory tracking.

The challenges faced by related works are explored in Chapter 1, where a comprehensive control solution to tackle the above issues, to the best of the author's knowledge, was found missing. 
Toward this direction, this chapter provides the following major contributions:

\begin{itemize}
    \item A BLF-based robust controller for quadrotor is formulated which obeys full state-constraints on all six degrees-of-freedom (DoFs) (i.e., position, orientation, linear and angular velocity), and can tackle parametric uncertainties and external disturbances. Compared to \cite{dasgupta2019singularity}, constraints on all DoFs make the control problem more practical. 
    \item The closed-loop stability is verified analytically, and the performance of the proposed design is experimentally compared to the state-of-the-art designs.
    
The rest of the chapter is organised as follows: Section 4.2 describes the quadrotor dynamics, Section 4.3 states the control problem; Section 4.4 details the proposed control framework, and the stability analysis is provided in Section 4.5; experimental results are provided in Section 4.6, while Section 4.7 provides the concluding remarks.

\end{itemize}

\section{Quadrotor Dynamics}
Similar to Chapter 3, the dynamics of a quadrotor is modelled on the Euler-Lagrange dynamics as follows \cite{Bialy2013RobustAdaptive_UAV}:
\begin{align}
    m\Ddot{\mathbf{p}}+\mathbf{g+d_{p}} &= \mathbf{\tau_{p}} \label{eq:tau_p_chap_4} \\
    \mathbf{J(q)}\Ddot{\mathbf{q}}+\mathbf{C(q,}\Dot{\mathbf{q}})\Dot{\mathbf{q}}+\mathbf{d_{q}} &= \mathbf{\tau_{q}} \label{eq:tau_q_chap_4}\\
    \mathbf{\tau_p} &= \mathbf{R}_B^{W}\mathbf U \label{comb_chap_4}
\end{align}

\noindent where $m$ is the total mass of the system; $\mathbf{p}(t) \triangleq  \begin{bmatrix}
    x(t) & y(t) & z(t)
\end{bmatrix}^T \in \mathbb{R}^{3}$ is the position of the centre of mass of the quadrotor in the Earth-fixed frame; $\mathbf{q}(t) \triangleq  \begin{bmatrix}
    \phi(t) & \theta(t) & \psi(t)
\end{bmatrix}^T \in \mathbb{R}^{3}$ is the attitude vector consisting of the roll ($\phi$), pitch ($\theta$) and yaw ($\psi$) angles; $\mathbf{g} \triangleq  \begin{bmatrix}
    0 & 0 & mg
\end{bmatrix}^T \in \mathbb{R}^{3}$, where $g$ is the acceleration due to gravity in the z-direction; $\mathbf{J(q)} \in \mathbb{R}^{3 \times 3}$ is the inertia matrix; $\mathbf{C(q,\dot{q})} \in \mathbb{R}^{3 \times 3}$ is the Coriolis matrix and the vectors $\mathbf{d_p}, \mathbf{d_q} \in \mathbb{R}^{3}$ represent effect of external disturbances (e.g., wind, gust); $\boldsymbol \tau_{\mathbf q} \triangleq 
	\begin{bmatrix}
	u_2(t) & u_3(t) & u_4(t)
	\end{bmatrix}^T\in\mathbb{R}^3$ denotes the control inputs for roll, pitch and yaw; $\boldsymbol \tau_{\mathbf p}(t) \in \mathbb{R}^3$ is the generalized control input for position tracking in Earth-fixed frame, with $\mathbf U(t)\triangleq
	\begin{bmatrix}
	0 & 0 & u_1(t)
	\end{bmatrix}^T\in \mathbb{R}^3$ being the force vector in body-fixed frame and $\mathbf R_B^W \in\mathbb{R}^{3\times3}$ being the $Z-Y-X$ Euler angle rotation matrix describing the rotation from the body-fixed coordinate frame to the Earth-fixed frame, given by
	\begin{align}
	\mathbf R_B^W =
	\begin{bmatrix}
	c_{\psi}c_{\theta} & c_{\psi}s_{\theta}s_{\phi} - s_{\psi}c_{\phi} & c_{\psi}s_{\theta}c_{\phi} + s_{\psi}s_{\phi} \\
	s_{\psi}c_{\theta} & s_{\psi}s_{\theta}s_{\phi} + c_{\psi}c_{\phi} & s_{\psi}s_{\theta}c_{\phi} - c_{\psi}s_{\phi} \\
	-s_{\theta} & s_{\phi}c_{\theta} & c_{\theta}c_{\phi}
	\end{bmatrix}, \label{rot_matrix_chap_4}
	\end{align}
where $c_{(\cdot)} , s_{(\cdot)}$ and denote $\cos{(\cdot)} , \sin{(\cdot)}$ respectively.

\section{Control Problem}

Manoeuvring under full-state constraints implies the quadrotor cannot go beyond some performance limits over the desired trajectories, transforming the control objective as tracking with predefined accuracy.

Let $\mathbf{z_{p}}=\lbrace z_{p1},~z_{p2},~z_{p3} \rbrace$ and $\mathbf{z_{q}}=\lbrace z_{q1},~z_{q2},~z_{q3} \rbrace$ be the position $(x,y,z)$ and attitude $(\phi
, \theta, \psi)$ tracking errors respectively; $\dot{\mathbf{z}}_\mathbf{p}=\lbrace \dot{z}_{p1},~\dot{z}_{p2},~\dot{z}_{p3} \rbrace$ and $\dot{\mathbf{z}}_\mathbf{q}=\lbrace \dot{z}_{q1},~\dot{z}_{q2},~\dot{z}_{q3} \rbrace$ be the tracking errors in velocity $(\dot{x},\dot{y},\dot{z})$ and attitude rate $(\dot{\phi}
, \dot{\theta}, \dot{\psi})$ respectively. Let, $k_{pi}, k_{qi}, \dot{k}_{pi}, \dot{k}_{qi} \in \mathbb{R}^{+}$ for $i=1,~2,~3$ be twelve user-defined scalars. Then, the control problem is defined as follows:

Under Assumptions 1-2 of Section 3.2.1, to design a robust controller to track a desired trajectory such that the tracking accuracy for position and velocity remains within a user-defined region as $|z_{pi}|< k_{pi}, |z_{qi}|< k_{qi}$, $|\dot{z}_{pi}|< \dot{k}_{pi}, |\dot{z}_{qi}|< \dot{k}_{qi}$, with $i=1,2,3$ and $|z_{pi}(0)|< k_{pi},~|\dot{z}_{pi}(0)|< \dot{k}_{pi},~ |z_{qi}(0)|< k_{qi},~   |\dot{z}_{qi}(0)|< \dot{k}_{qi}$, $\forall t > 0$. 


Initial conditions lying within the constraints as above is standard in literature, otherwise the control problem becomes ill-posed and infeasible (cf. \cite{KPTee2009BLF, liu2016barrier,dasgupta2019singularity,sachan2019output}). The following section provides a solution to this control problem.

\section{Proposed Control Solution}
 The we follow the same co-design approach of simultaneous controller design of an outer loop for position dynamics and of an inner loop for attitude dynamics as in the previous chapter.

\subsection{Outer Loop Controller Design}
The position tracking error  $\mathbf{z_{p}}=\lbrace z_{p1},~z_{p2},~z_{p3} \rbrace$ and an auxiliary error variable $\mathbf{z_{2p}}$ are defined as
\begin{align}
    \mathbf{z_{p}} &= \mathbf{p-p_{d}}, \label{eq:z_p} \\
    \mathbf{z_{2p}} &= \Dot{\mathbf{p}}-\mathbf{\alpha_{p}}, \label{eq:z_2p}
\end{align}
where $\mathbf{\alpha_{p}}$ is an auxiliary control variable. The position control law is designed as
\begin{subequations}\label{ctr_p}
\begin{align}
    \mathbf{\tau_{p}}&=\Bar{m}\mathbf{\nu_{p}}+\bar{\mathbf{g}},~~\mathbf{\nu_{p}} = \bar{\mathbf{\nu}}_{\mathbf p}+\Delta\mathbf{\nu_{p}}, \label{eq:tau_p_expression} \\
    \mathbf{\alpha_{p}} &=-\Lambda_{1p}\mathbf{z_{p}}+\dot{\mathbf p}_{\mathbf d}, \label{alpha_p}\\
    \bar{\mathbf{\nu}}_{\mathbf p} &= \dot{\mathbf \alpha}_{\mathbf p}-\mathbf{D_{3p}}\mathbf{z_{p}}-\Lambda_{2p}\mathbf{z_{2p}},\label{eq:nu_p expression} \\
    \Delta \mathbf{\nu_{p}}&=-\rho_{p}\sat(\mathbf{z_{2p}}, \epsilon_p), \label{eq:delta_nu_p}
\end{align}
\end{subequations}
where $\bar{\mathbf{g}}=[0~0~\bar{m}g]^T$; $\Lambda_{1p} = \diag{\{\gamma_{p_{1}}, \gamma_{p_{2}},\gamma_{p_{3}}\}}$, and $\Lambda_{2p} = \diag{\{\gamma_{2p_{1}}, \gamma_{2p_{2}},\gamma_{2p_{3}}\}}$, where $\gamma_{p_{i}}>0$ and  $\gamma_{2p_{i}}>0$ are user-defined constants; $\mathbf{D_{3p}} \triangleq \mathbf D_{\mathbf{2p}}^{-1}\mathbf{D_{p}}$ with \\ $\mathbf{D_{p}}=\diag{\bigg\{\frac{1}{k_{p_{1}}^{2}- z_{p_{1}}^{2}}, \frac{1}{k_{p_{2}}^{2}- z_{p_{2}}^{2}}, \frac{1}{k_{p_{3}}^{2}- z_{p_{3}}^{2}}\bigg\}}$, \\ $\mathbf{D_{2p}}=\diag{\bigg\{\frac{1}{k_{2p_{1}}^{2}- z_{2p_{1}}^{2}}, \frac{1}{k_{2p_{2}}^{2}- z_{2p_{2}}^{2}}, \frac{1}{k_{2p_{3}}^{2}- z_{2p_{3}}^{2}}\bigg\}}$ and 
\begin{align}
    k_{2p_i} &\triangleq \dot{k}_{p_i} + \gamma_{p_i}k_{p_i},~i=1,2,3. \label{eq: k_2p def}
\end{align}
The gain $\rho_{p}$ (defined later) provides robustness against uncertainties and $\epsilon_p>0$ is used to avoid chattering.

\noindent From \eqref{eq:z_2p} and \eqref{alpha_p}, $z_{2p_i}$ can be rewritten as
\begin{align*}
    z_{2p_i} = \dot{z}_{p_i}+\gamma_{p_i} z_{p_i} .
\end{align*}
Then, based on the desired constraints $|z_{pi}|< k_{pi}, |\dot{z}_{pi}|< \dot{k}_{pi}$ and using (\ref{eq: k_2p def}), the desired constraint on $z_{2p_i}$ turns out to be 
\begin{align}
    |z_{2p_i}| &\leq |\dot{z}_{p_i}|+|\gamma_{p_i}||z_{p_i}|  < \dot{k}_{p_i} + \gamma_{p_i}k_{p_i} = k_{2p_i}. \label{new}
\end{align}
%
%
%
Using (\ref{eq:tau_p_chap_4}) and \eqref{eq:tau_p_expression}, the time derivative of \eqref{eq:z_2p} yields
\begin{align}
    \Dot{\mathbf z}_{2\mathbf p} &= \Ddot{\mathbf{p}}-\Dot{\mathbf \alpha}_{\mathbf p} = m^{-1}(\mathbf{\tau_{p}}-\mathbf{g}-\mathbf{d_{p}})-\Dot{\mathbf \alpha}_{\mathbf p} \nonumber \\ 
    &= \Bar{\nu}_{\mathbf p}+\Delta \mathbf{\nu_{p}}+\mathbf{\eta_{p}}-\Dot{\mathbf \alpha}_{\mathbf p} \label{ddot_z2p}\\
\text{where} ~~~ \mathbf{\eta_{p}} &\triangleq (m^{-1}\bar{m}-1)\mathbf{\nu_{p}}+m^{-1}(\bar{\mathbf{g}}-\mathbf{g}-\mathbf{d_p}) \label{un_p}
\end{align}  
is the \textit{overall uncertainty} in the position dynamics and $\rho_p$ is designed using (\ref{mass}) and (\ref{un_p}) such that $\rho_p \geq ||\mathbf{\eta_{p}}|| $ as 
\begin{align}
    \rho_{p} &\geq E_{p} \|\Bar{\mathbf{\nu}}_{\mathbf p}\|+ E_{p} \rho_{p} +m^{-1} ||(\bar{\mathbf{g}}-\mathbf{g}-\mathbf{d_p}) || \nonumber\\
   \Rightarrow \rho_{p} &\geq \ddfrac{1}{1-E_{p}}(E_{p}\|\Bar{\mathbf{\nu}}_{\mathbf p}\|+m^{-1}(||\bar{\mathbf{g}}-\mathbf{g}||+ ||\mathbf{d_p}||). \label{eq:rho_p}
\end{align}
Eventually, $\mathbf{U}$ is applied to the system via (\ref{comb_chap_4}) as $R_B^W$ is an invertible rotation matrix.  
\subsection{Inner Loop Controller Design}
In a similar approach to the previous chapter, the desired roll ($\phi_d$) and pitch ($\theta_d$) angles are defined by first defining an intermediate coordinate frame as (cf. \cite{mellinger2011minimum}):
\begin{subequations}\label{int_co}
\begin{align}
    z_B &= \frac{\mathbf{\tau_{p}}}{||\mathbf{\tau_{p}}||} ,~~y_A = \begin{bmatrix}
    -s_{\psi_d} & c_{\psi_d} & 0
\end{bmatrix}^T \\
    x_B &= \frac{y_A \times z_B}{||y_A \times z_B||} ,~~y_B = z_B \times x_B
\end{align}
\end{subequations}
where $(x_B,y_B,z_B)$ are the desired $(x,y,z)$ axis in the body-fixed coordinate frame. Given the desired yaw angle $\psi_d (t)$ and based on the computed intermediate axes as in (\ref{int_co}), $\phi_d (t)$ and $\theta_d (t)$ can be determined using the desired body frame axes as described in \cite{mellinger2011minimum}.

Further, the attitude error can be defined as \cite{mellinger2011minimum}:
\begin{align}
     \mathbf{z_{q}}= \lbrace z_{q1}, z_{q2}, z_{q3} \rbrace &= {\mathbf{((R_d)^T R_B^W - (R_B^W)^T R_d)}}^{v} \label{eq: z_q}
\end{align}
where $(.)^v$ is \textit{vee} map converting elements of $SO(3)$ to $\in{\mathbb{R}^3}$ (cf. \cite{mellinger2011minimum}) and $\mathbf{R_d}$ is the rotation matrix as in (\ref{rot_matrix_chap_4}) evaluated at ($\phi_d, \theta_d, \psi_d$). The  auxiliary error variable $\mathbf{z_{2q}}$ is defined as
\begin{align}
    \mathbf{z_{2q}} &= \Dot{\mathbf{q}}-\mathbf{\alpha_{q}} \label{eq:z_2q}
\end{align}
with $\mathbf{\alpha_{q}}$ being an auxiliary control variable defined subsequently.

The inner loop control law is designed as
\begin{subequations}\label{ctr_q}
\begin{align}
    \mathbf{\tau_{q}}&=\Bar{\mathbf{J}}\mathbf{\nu_{q}}+\bar{\mathbf{C}}\dot{\mathbf{q}},~\mathbf{\nu_{q}} = \Bar{\mathbf{\nu}}_{\mathbf q}+\Delta\mathbf{\nu_{q}}, \label{eq:tau_q_expression}\\
    \mathbf{\alpha_{q}} &=-\Lambda_{1q}\mathbf{z_{q}}+\dot{\mathbf{q}}_\mathbf{d}, \label{alpha_q}\\
    \bar{\mathbf{\nu}}_{\mathbf q} &= \dot{\mathbf{\alpha}}_{\mathbf q}-\mathbf{D_{3q}}\mathbf{z_{q}}-\Lambda_{2q}\mathbf{z_{2q}},\label{eq: nu_q expression} \\
    \Delta \mathbf{\nu_{q}}&=-\rho_{q}\sat(\mathbf{z_{2q}}, \epsilon_q), \label{eq:delta_nu_q}
\end{align}
\end{subequations}
where $\Lambda_{1q} = \diag{\{\gamma_{q_{1}}, \gamma_{q_2},\gamma_{q_3}\}}$, $\Lambda_{2q} = \diag{\{\gamma_{2q_{1}}, \gamma_{2q_2},\gamma_{2q_3}\}}$, with $\gamma_{q_i}>0$ and $\gamma_{2q_i}>0$ being user-defined constants; $\mathbf{D_{3q}} \triangleq \mathbf{D_{2q}}^{-1}\mathbf{D_{q}}$ with $\mathbf{D_{q}}=\diag{\bigg\{\frac{1}{k_{q_{1}}^{2}-z_{q_{1}}^{2}}, \frac{1}{k_{q_{2}}^{2}-z_{q_{2}}^{2}}, \frac{1}{k_{q_{3}}^{2}-z_{q_{3}}^{2}}\bigg\}}$, $\mathbf{D_{2q}}=\diag{\bigg\{\frac{1}{k_{2q_{1}}^{2}-z_{2q_{1}}^{2}}, \frac{1}{k_{2q_{2}}^{2}-z_{2q_{2}}^{2}}, \frac{1}{k_{2q_{3}}^{2}-z_{2q_{3}}^{2}}\bigg\}}$ and
\begin{align}
    k_{2q_i} \triangleq \dot{k}_{q_i} +\gamma_{q_i}k_{q_i} ~i=1,2,3.
    \label{eq: k_2q def}
\end{align}
The gain $\rho_{q}$ (defined later) robustness against uncertainties in the attitude dynamics and $\epsilon_q>0$ avoids chattering. 

Similar to the desired constraint on $z_{2p_i}$ as in (\ref{new}), a desired constraint $|z_{2q_i}| < k_{2q_i}$ is obtained from (\ref{eq: k_2q def}). Following similar lines to derive (\ref{ddot_z2p}) for the outer loop controller, the following is achieved using (\ref{eq:tau_q_chap_4}), \eqref{eq:z_2q} and \eqref{eq:tau_q_expression}
\begin{align}
    \Dot{\mathbf z}_{2\mathbf q} &= \Bar{\nu}_{\mathbf q}+\Delta \mathbf{\nu_{q}}+\mathbf{\eta_{q}}-\Dot{\mathbf \alpha}_{\mathbf q}\label{eq:ddot_z2q}\\
\text{where} ~~~ \mathbf{\eta_{q}} &\triangleq (\mathbf{J}^{-1}\bar{\mathbf J}-\mathbf{I})\mathbf{\nu_{q}}+\mathbf{J}^{-1}(-\mathbf{d_q}-\Delta \mathbf{C}\dot{\mathbf{q}}) \label{uncer_att}
\end{align}
represents the \textit{overall uncertainty} in the attitude dynamics and the robust control gain is designed to be $\rho_q \geq || \mathbf{\eta_{q}} || $, i.e.,
\begin{align}
\rho_{q} &\geq E_q || \Bar{\mathbf{\nu}}_{\mathbf q} || + E_q \rho_q +  || \mathbf{J}^{-1}(-\mathbf{d_q}-\Delta \mathbf{C}\dot{\mathbf{q}})|| \nonumber \\
    \rho_{q} &\geq \ddfrac{1}{1-E_q}(E_q || \Bar{\mathbf{\nu}}_{\mathbf q} || + || \mathbf{J}^{-1} || ( ||\Delta \mathbf{C}||||\dot{\mathbf{q}}|| + || \mathbf{d_q}||   ).  \label{eq:rho_q}
\end{align}
\begin{remark}[Choice of gains and feasibility]
It may appear from (\ref{eq:nu_p expression}) and (\ref{eq: nu_q expression}) that the gains $\mathbf{D_p,~D_q,~D_{2p}}$ and  $\mathbf{D_{2q}}$ will become infeasible if ${z_{pi}}=k_{pi}$, ${z}_{qi}=k_{qi}$, ${z_{2p_i}}=k_{2p_i}$, and ${z}_{2q_i}=k_{2q_i}$ for any $i=1,2,3$ and $t>0$. However, the subsequent closed-loop stability analysis will show that $|{z_{pi}}|<k_{pi}$ and $|{z}_{qi}|<k_{qi}$, $|{z_{2p_i}}|<k_{2p_i}$ and $|{z}_{2q_i}|<k_{2q_i}$ $\forall t>0$ and infeasibility is avoided under the proposed design. Further, smaller $k_{pi},k_{qi},\dot{k}_{pi},\dot{k}_{qi}$ will result in better accuracy, albeit at the cost of higher control input demand. Therefore, such choices should be made as per application requirements.
\end{remark}

\section{Stability Analysis}
\begin{theorem}
Under Assumptions 1-3 of Section 3.2.1 (in Chapter 3) and initial conditions $|z_{pi}(0)|< k_{pi},~|\dot{z}_{pi}(0)|< \dot{k}_{pi},~ |z_{qi}(0)|< k_{qi},~   |\dot{z}_{qi}(0)|< \dot{k}_{qi}$ and using the robust control laws (\ref{ctr_p}) and (\ref{ctr_q}), the tracking error trajectories $z_{pi},~z_{qi},~\dot{z}_{pi},~\dot{z}_{qi}$ remain bounded as $|z_{pi}|< k_{pi}, |z_{qi}|< k_{qi}$, $|\dot{z}_{pi}|< \dot{k}_{pi}, |\dot{z}_{qi}|< \dot{k}_{qi}$ with $i=1,2,3$ $\forall t > 0$.
\end{theorem}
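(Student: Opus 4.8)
The plan is to reproduce, in an augmented form, the barrier--Lyapunov argument used for the Chapter~3 result. The new ingredient is that the Lyapunov function must now also penalise the auxiliary errors $\mathbf{z_{2p}},\mathbf{z_{2q}}$ through barriers at the levels $k_{2p_i},k_{2q_i}$ of \eqref{eq: k_2p def}, \eqref{eq: k_2q def}, so that confining $\mathbf{z_{2p}},\mathbf{z_{2q}}$ together with the coupling identities $z_{2p_i}=\dot z_{p_i}+\gamma_{p_i}z_{p_i}$, $z_{2q_i}=\dot z_{q_i}+\gamma_{q_i}z_{q_i}$ delivers the velocity and attitude-rate constraints. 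I would therefore take $V=V_p+V_q$ with
\begin{align*}
V_p=\tfrac12\sum_{i=1}^{3}\log\!\left(\frac{k_{p_i}^2}{k_{p_i}^2-z_{p_i}^2}\right)+\tfrac12\sum_{i=1}^{3}\log\!\left(\frac{k_{2p_i}^2}{k_{2p_i}^2-z_{2p_i}^2}\right),
\end{align*}
and $V_q$ analogous in $z_{q_i},z_{2q_i},k_{q_i},k_{2q_i}$. One checks $V(0)<\infty$: the hypotheses place $z_{p_i}(0),z_{q_i}(0)$ strictly inside their bounds, and \eqref{new} (with its attitude counterpart) then forces $|z_{2p_i}(0)|<k_{2p_i}$, $|z_{2q_i}(0)|<k_{2q_i}$.

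The core computation is the time derivative. Using the barrier structure, $\dot V_p=\mathbf{z_{p}}^{T}\mathbf{D_{p}}\dot{\mathbf{z}}_{\mathbf{p}}+\mathbf{z_{2p}}^{T}\mathbf{D_{2p}}\dot{\mathbf{z}}_{\mathbf{2p}}$; I would substitute $\dot{\mathbf{z}}_{\mathbf{p}}=\mathbf{z_{2p}}-\Lambda_{1p}\mathbf{z_{p}}$ (from \eqref{eq:z_2p}, \eqref{alpha_p}) and $\dot{\mathbf{z}}_{\mathbf{2p}}=-\mathbf{D_{3p}}\mathbf{z_{p}}-\Lambda_{2p}\mathbf{z_{2p}}+\Delta\mathbf{\nu_{p}}+\mathbf{\eta_{p}}$ (from \eqref{ddot_z2p}, \eqref{eq:nu_p expression}). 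The decisive design feature is $\mathbf{D_{3p}}\triangleq\mathbf{D_{2p}}^{-1}\mathbf{D_{p}}$, so that $\mathbf{D_{2p}}\mathbf{D_{3p}}=\mathbf{D_{p}}$ and the indefinite cross terms $\mathbf{z_{p}}^{T}\mathbf{D_{p}}\mathbf{z_{2p}}$ and $-\mathbf{z_{2p}}^{T}\mathbf{D_{p}}\mathbf{z_{p}}$ cancel identically, leaving
\begin{align*}
\dot V_p=-\mathbf{z_{p}}^{T}\Lambda_{1p}\mathbf{D_{p}}\mathbf{z_{p}}-\mathbf{z_{2p}}^{T}\Lambda_{2p}\mathbf{D_{2p}}\mathbf{z_{2p}}+\mathbf{z_{2p}}^{T}\mathbf{D_{2p}}(\Delta\mathbf{\nu_{p}}+\mathbf{\eta_{p}}),
\end{align*}
and $\dot V_q$ identically in the $q$-quantities via $\mathbf{D_{3q}}=\mathbf{D_{2q}}^{-1}\mathbf{D_{q}}$.

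Next I would dominate the residuals $\mathbf{z_{2p}}^{T}\mathbf{D_{2p}}(\Delta\mathbf{\nu_{p}}+\mathbf{\eta_{p}})$ and $\mathbf{z_{2q}}^{T}\mathbf{D_{2q}}(\Delta\mathbf{\nu_{q}}+\mathbf{\eta_{q}})$ via the four-case split on $\|\mathbf{z_{2p}}\|$ versus $\epsilon_p$ and $\|\mathbf{z_{2q}}\|$ versus $\epsilon_q$ exactly as in the preceding chapter, using $\rho_p\ge\|\mathbf{\eta_{p}}\|$, $\rho_q\ge\|\mathbf{\eta_{q}}\|$ from \eqref{eq:rho_p}, \eqref{eq:rho_q}: outside the boundary layer this term is made nonpositive, inside it is bounded by a multiple of $\epsilon_p,\epsilon_q$. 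Applying $\log(k^2/(k^2-x^2))\le x^2/(k^2-x^2)$ to each barrier, the two negative quadratic forms are lower-bounded by $-\varsigma$ times the logarithmic terms, and everything collects into $\dot V\le-\varsigma V+c$ for a suitable $\varsigma>0$ built from $\gamma_{p_i},\gamma_{q_i},\Lambda_{2p},\Lambda_{2q}$ and a finite $c$. Finiteness of $c$ is the usual $\mathcal L_\infty$ bootstrap: while $V$ is finite each of $z_{p_i},z_{2p_i},z_{q_i},z_{2q_i}$ stays strictly inside its barrier, hence $\mathbf{p},\dot{\mathbf{p}},\mathbf{q},\dot{\mathbf{q}}\in\mathcal L_\infty$ (Assumption~3), then $\mathbf{\alpha_{p}},\mathbf{\alpha_{q}},\mathbf{\nu_{p}},\mathbf{\nu_{q}}\in\mathcal L_\infty$ and, with $\mathbf{C}(\mathbf{q},\dot{\mathbf{q}})\in\mathcal L_\infty$, finally $\mathbf{\eta_{p}},\mathbf{\eta_{q}}\in\mathcal L_\infty$, so the inner-layer term is bounded by a finite $c$. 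Ultimate boundedness of $V$ follows, and since $V\to\infty$ whenever any error variable approaches its limit, none ever reaches it: $|z_{p_i}|<k_{p_i}$, $|z_{2p_i}|<k_{2p_i}$, $|z_{q_i}|<k_{q_i}$, $|z_{2q_i}|<k_{2q_i}$ for all $t>0$. The velocity and attitude-rate constraints then follow by reading off $z_{2p_i}=\dot z_{p_i}+\gamma_{p_i}z_{p_i}$, $z_{2q_i}=\dot z_{q_i}+\gamma_{q_i}z_{q_i}$ together with the definitions \eqref{eq: k_2p def}, \eqref{eq: k_2q def}, which yields $|\dot z_{p_i}|<\dot k_{p_i}$ and $|\dot z_{q_i}|<\dot k_{q_i}$.

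I expect the delicate step to be the $\mathbf{D_{2p}}$-weighted robust term $\mathbf{z_{2p}}^{T}\mathbf{D_{2p}}(\Delta\mathbf{\nu_{p}}+\mathbf{\eta_{p}})$ (and its $q$-analogue): because $\Delta\mathbf{\nu_{p}}=-\rho_p\sat(\mathbf{z_{2p}},\epsilon_p)$ is aligned with $\mathbf{z_{2p}}$ but not with the weighted vector $\mathbf{D_{2p}}\mathbf{z_{2p}}$, the clean term-by-term cancellation that held in Chapter~3 no longer holds, so care is needed to show this contribution still fits inside the $-\varsigma V+c$ estimate and, in particular, does not invalidate the self-referential $\mathcal L_\infty$ bootstrap by permitting finite-time escape of $V$. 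The remaining novelty — turning the $\mathbf{z_{2p}},\mathbf{z_{2q}}$ bounds into genuine velocity/rate bounds through \eqref{eq: k_2p def}, \eqref{eq: k_2q def} — is conceptually new relative to Chapter~3 but routine once the barrier confinement is in hand.
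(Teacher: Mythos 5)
Your proposal follows essentially the same route as the paper's own proof: the same augmented barrier Lyapunov function with barriers on $z_{p_i},z_{q_i}$ at $k_{p_i},k_{q_i}$ and on $z_{2p_i},z_{2q_i}$ at $k_{2p_i},k_{2q_i}$, the same cancellation of cross terms via $\mathbf{D_{3p}}=\mathbf{D_{2p}}^{-1}\mathbf{D_{p}}$ (and its $q$-analogue), the same four-case boundary-layer analysis leading to $\dot V\le-\varsigma V+c$, and the same $\mathcal L_\infty$ bootstrap and final read-off of the velocity and rate bounds from \eqref{eq: k_2p def} and \eqref{eq: k_2q def}. The ``delicate step'' you flag --- the $\mathbf{D_{2p}}$-weighted robust term --- is handled in the paper simply by inserting $\|\mathbf{D_{2p}}\|$, $\|\mathbf{D_{2q}}\|$ factors into the case bounds and absorbing them into $c$ via the boundedness of $\mathbf{D_{2p}},\mathbf{D_{2q}}$, exactly as your bootstrap anticipates.
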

\begin{proof}
Closed-loop stability analysis is carried via the following barrier Lyapunov function candidate:
\begin{align}
    V&=V_p + V_q \label{eq:lyap} \\
\text{where}~~    V_p &= \ddfrac{1}{2}\sum_{i=1}^{3}\log{\left(\ddfrac{k_{p_{i}}^2}{k_{p_{i}}^{2}-z_{p_i}^{2}}\right)}+\log{\left(\ddfrac{k_{2p_{i}}^2}{k_{2p_{i}}^{2}-z_{2p_i}^{2}}\right)} \nonumber \\
    V_q &= \ddfrac{1}{2}\sum_{i=1}^{3}\log{\left(\ddfrac{k_{q_{i}}^2}{k_{q_{i}}^{2}-z_{q_i}^{2}}\right)}+ \log{\left(\ddfrac{k_{2q_{i}}^2}{k_{2q_{i}}^{2}-z_{2q_i}^{2}}\right)}. \nonumber
\end{align}

\noindent Using (\ref{eq:z_2p})-\eqref{ddot_z2p} and (\ref{eq:z_2q})-\eqref{eq:ddot_z2q}, with gains $\mathbf{D_p}, \mathbf{D_{2p}}$ as in (\ref{eq:nu_p expression}) and $\mathbf{D_q}, \mathbf{D_{2q}}$ as in (\ref{eq: nu_q expression}), the time derivative of \eqref{eq:lyap} yields
\begin{align}
    &\dot{V}_p = \sum_{i=1}^{3}\ddfrac{z_{p_i}^{T}\dot{z}_{p_i}}{k_{p_i}^{2}-z_{p_i}^{2}} +  \ddfrac{z_{2p_i}^{T}\dot{z}_{2p_i}}{k_{2p_i}^{2}-z_{2p_i}^{2}}= \mathbf{z}^{T}_{\mathbf p} \mathbf{D_p}\dot{\mathbf z}_{\mathbf p} + \mathbf{z}^{T}_{\mathbf{2p}}\mathbf{D_{2p}}\dot{\mathbf z}_{\mathbf{2p}} \nonumber\\
    &= \mathbf{z}^{T}_{\mathbf p} \mathbf{D_p(z_{2p}}+\alpha_{\mathbf p}-\dot{\mathbf p}_{\mathbf d}) + \mathbf{z}^{T}_{\mathbf{2p}}\mathbf{D_{2p}}(\bar{\nu}_{\mathbf p}+\Delta \nu_{\mathbf p}+\eta_{\mathbf p}-\dot{\alpha}_{\mathbf p}) \nonumber\\
    &= -\mathbf{z}^{T}_{\mathbf p} \Lambda_{1p}\mathbf{D_p {z}_p}-\mathbf{z}^{T}_{\mathbf{2p}}\Lambda_{2p}\mathbf{D_{2p}}\mathbf{{z}_{2p}}+\mathbf{z}^{T}_{\mathbf{2p}}\mathbf{D_{2p}}(\Delta\mathbf{{\mathbf{\nu_p}+\eta_p})}, \label{eq:lyap_de_p} \\
    &\dot{V}_q = \sum_{i=1}^{3}\frac{z_{q_i}^{T}\dot{z}_{q_i}}{k_{q_i}^{2}-z_{q_i}^{2}} + \frac{z_{2q_i}^{T}\dot{z}_{2q_i}}{k_{2q_i}^{2}-z_{2q_i}^{2}}= \mathbf{z}^{T}_{\mathbf{q}}\mathbf{D_q}\dot{\mathbf z}_{\mathbf q} + \mathbf{z}^{T}_{\mathbf{2q}} \mathbf{D_{2q}}\dot{\mathbf z}_{\mathbf {2q}} \nonumber\\
    &= \mathbf{z}^{T}_{\mathbf{q}}\mathbf{D_q(z_{2q}+\alpha_q}-\dot{\mathbf{q}}_\mathbf{d}) + \mathbf{z}^{T}_{\mathbf{2q}}\mathbf{D_{2q}}(\bar{\nu}_{\mathbf q}+\Delta\mathbf{\nu_q+\mathbf{\eta_q}}-\dot{\alpha}_{\mathbf q}) \nonumber\\
    &= -\mathbf{z}^{T}_{\mathbf{q}}\Lambda_{1q}\mathbf{D_q}{\mathbf z}_{\mathbf q}-\mathbf{z}^{T}_{\mathbf{2q}}\Lambda_{2q}\mathbf{D_{2q}}{\mathbf z}_{\mathbf{2q}}+\mathbf{z}^{T}_{\mathbf{2q}}\mathbf{D_{2q}}(\Delta\mathbf{{\mathbf{\nu_q}+\eta_q})}. \label{eq:lyap_de_q}
\end{align}
Further simplifications from (\ref{eq:lyap_de_p})-(\ref{eq:lyap_de_q}) yield
\begin{align}
    \dot{V}\leq & -\sum_{i=1}^{3}\gamma_{p_i}\log{\left(\ddfrac{k_{p_i}^{2}}{k_{p_i}^2-z_{p_i}^2}\right)} -\gamma_{q_i}\log{\left(\ddfrac{k_{q_i}^{2}}{k_{q_i}^2-z_{q_i}^2}\right)} \nonumber \\
    &-\gamma_{2q_i}\log{\left(\ddfrac{k_{2q_i}^{2}}{k_{2q_i}^2-z_{2q_i}^2}\right)} -\gamma_{2q_i}\log{\left(\ddfrac{k_{2q_i}^{2}}{k_{2q_i}^2-z_{2q_i}^2}\right)} \\
    & +\mathbf{z}^{T}_{\mathbf{2p}}\mathbf{D_{2p}}(\Delta\mathbf{\nu_p+\eta_p})
    +\mathbf{z}^{T}_{\mathbf{2q}}\mathbf{D_{2q}}(\Delta\mathbf{\nu_q+\eta_q}). \label{eq:lyap_der_inequality}
\end{align}
The first four terms in the aforementioned inequality are derived from the fact that $\log{\left(\ddfrac{k_x^{2}}{k_x^{2}-x^{2}}\right)} <   \ddfrac{x^{2}}{k_x^{2}-x^{2}}$ within any compact set $\Omega: |x(t)|<k_x$ $\forall t \geq 0$ and any $k_x \in \mathbb{R}^{+}$ \cite{Liu2016Robust_BLF}. 

The definition of $V$ yields
\begin{align}
    V &\leq \sum_{i=1}^{3}\log{\left(\ddfrac{k_{p_i}^{2}}{k_{p_i}^{2}-z_{p_i}^{2}}\right)} + \log{\left(\ddfrac{k_{q_i}^{2}}{k_{q_i}^{2}-z_{q_i}^{2}}\right)} + \log{\left(\ddfrac{k_{2p_i}^{2}}{k_{2p_i}^{2}-z_{2p_i}^{2}}\right)} \nonumber\\
    &+ \log{\left(\ddfrac{k_{2q_i}^{2}}{k_{2q_i}^{2}-z_{2q_i}^{2}}\right)}. \label{eq:lyap_ineq}
\end{align}
Using \eqref{eq:lyap_ineq}, from \eqref{eq:lyap_der_inequality} we have
\begin{align}
    \dot{V} \leq - \varsigma V + \mathbf{z}^{T}_{\mathbf{2p}}\mathbf{D_{2p}} (\Delta \mathbf{\nu_p+\eta_p})  + \mathbf{z}^{T}_{\mathbf{2q}}\mathbf{D_{2q}}(\Delta \mathbf{\nu_q+\eta_q}) \label{eq:lyap_der_varsigma}
\end{align}
where $\varsigma \triangleq \min \lbrace \min\lbrace \gamma_{p_i} \rbrace, \min\lbrace \gamma_{q_i} \rbrace,\min \lbrace \gamma_{2p_i} \rbrace, \min\{\gamma_{2q_i}\}\rbrace$.

Based on the structures of $\Delta \mathbf{\nu_{p}}$ and $\Delta \mathbf{\nu_{q}}$ as in (\ref{eq:delta_nu_p}) and (\ref{eq:delta_nu_q}), the following four possible cases can be identified.\\
\noindent \textbf{Case (i)} $\|\mathbf{z_{2p}}\|\geq \epsilon_p$ and $\|\mathbf{z_{2q}}\|\geq \epsilon_q$:
\\
Since $\rho_p\geq\|\mathbf{\eta_p}\|$ and $\rho_q\geq\|\mathbf{\eta_q}\|$ by design, \eqref{eq:lyap_der_varsigma} yields
\begin{align}
    \dot{V} \leq &- \varsigma V - \rho_p\|\mathbf{D_{2p}}\|\|\mathbf{z_{2p}}\|  + \rho_p\|\mathbf{D_{2p}}\|\|\mathbf{z_{2p}}\| \nonumber\\
    &-\rho_q\|\mathbf{D_{2q}}\|\|\mathbf{z_{2q}}\| +\rho_q\|\mathbf{D_{2q}}\|\|\mathbf{z_{2q}}\| \nonumber \\
    \implies \dot{V} \leq & -\varsigma V \label{eq:case1}
\end{align}
\noindent \textbf{Case (ii)} $\|\mathbf{z_{2p}}\|\geq \epsilon_p$, $\|\mathbf{z_{2q}}\| < \epsilon_q$:
\begin{align}
    \dot{V} &\leq - \varsigma V + \|\mathbf{\eta_q}\|\|\mathbf{D_{2q}}\|\|\mathbf{z_{2q}}\|
    \leq - \varsigma V + \|\mathbf{\eta_q}\|\|\mathbf{D_{2q}}\|\epsilon_q
\end{align}
\noindent \textbf{Case (iii)} $\|\mathbf{z_{2p}}\| < \epsilon_p$, $\|\mathbf{z_{2q}}\| \geq \epsilon_q$:
\begin{align}
    \dot{V} &\leq - \varsigma V + \|\mathbf{\eta_p}\|\|\mathbf{D_{2p}}\|\|\mathbf{z_{2p}}\| 
    \leq - \varsigma V + \|\mathbf{\eta_p}\|\|\mathbf{D_{2p}}\|\epsilon_p
\end{align}
\noindent \textbf{Case (iv)} $\|\mathbf{z_{2p}}\| < \epsilon_p$, $\|\mathbf{z_{2q}}\| < \epsilon_q$:
\begin{align}
    \dot{V} & \leq - \varsigma V + \|\mathbf{\eta_m}\|\|\mathbf{D_{2m}}\|\epsilon_m
\end{align}
where $|| \mathbf{\eta_m}||\triangleq \max \lbrace ||\mathbf{\eta_p}||, ||\mathbf{\eta_q}|| \rbrace$,\\ $|| \mathbf{D_{2m}}||\triangleq \max \lbrace ||\mathbf{D_{2p}}||, || \mathbf{D_{2q}} || \rbrace$, and $\epsilon_m \triangleq \max \lbrace \epsilon_p, \epsilon_q \rbrace$. Replacing (\ref{alpha_p}) into (\ref{eq:z_2p}) and (\ref{alpha_q}) into (\ref{eq:z_2q}) one can verify that $\mathbf{z_{2p}}, \mathbf{z_{2q}} \in \mathcal{L}_{\infty} \Rightarrow \mathbf{p}, \dot{\mathbf{p}}, \mathbf{q}, \dot{\mathbf{q}} \in \mathcal{L}_{\infty}$ as desired trajectories are bounded via Assumption 3. Now, boundedness of $\mathbf{p}, \dot{\mathbf{p}}, \mathbf{q}, \dot{\mathbf{q}}$ imply $\alpha_{\mathbf{p}}, \alpha_{\mathbf{q}} \in \mathcal{L}_{\infty} \Rightarrow\mathbf{\nu_{p}}, \mathbf{\nu_{q}}\in \mathcal{L}_{\infty}$ from (\ref{ctr_p}), (\ref{ctr_q}) and $\mathbf{C(q,\dot{q})} \in \mathcal{L}_{\infty}$ by property of EL mechanics \cite{spong2008robot}; these cumulative boundedness conditions imply $\mathbf{\eta_{p}},\mathbf{\eta_{q}}, \mathbf{D_{2p}}, \mathbf{D_{2q}}\in \mathcal{L}_{\infty}$ from (\ref{un_p}) and (\ref{uncer_att}). Therefore, there exists a finite constant $c$ such that $\|\mathbf{\eta_m}\|\|\mathbf{D_{2m}}\|\epsilon_m \leq c$. Then, observing the four stability cases it can be inferred that
\begin{align}
    \dot{V} \leq - \varsigma V + c
\end{align}
implying that the closed-loop system remains bounded via the barrier Lyapunov function $V$, implying $z_{pi}, z_{qi}, \dot{z}_{pi}, \dot{z}_{qi}$ never violates the constraints and  $|z_{pi}| < k_{pi}, |z_{qi}| < k_{qi},|\dot{z}_{pi}| < \dot{k}_{pi}, |\dot{z}_{qi}| < \dot{k}_{qi} $ $i=1,2,3$ $\forall t>0$. 
\end{proof}
\section{Experimental Verification}

\begin{figure*}[ht!]
   \includegraphics[width=6in, height=3.7in]{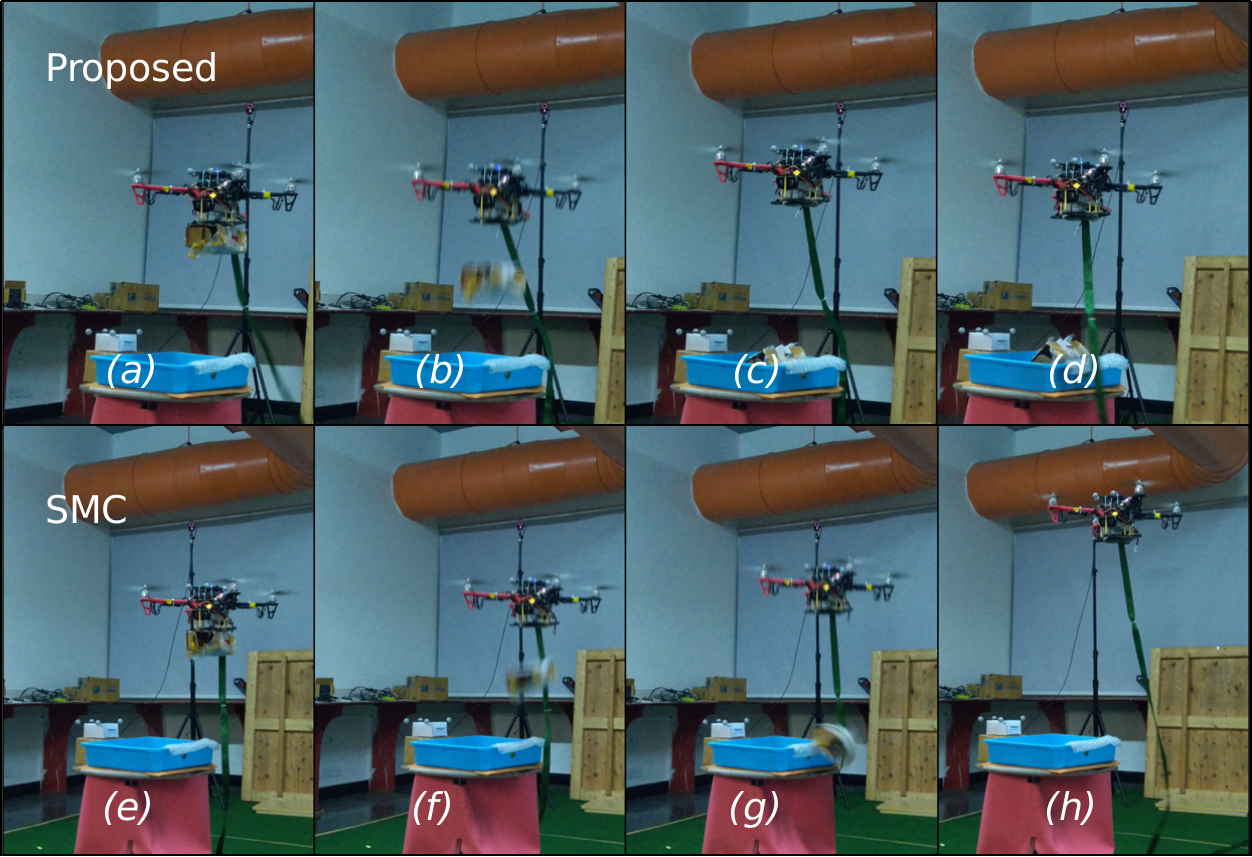}
    \centering
   \caption{Snapshots from the experiment: (a), (e) Before dropping payload on drop-point (b), (f) After dropping payload (t=37s) (c), (g) Payload lands on the tray for proposed controller, while it falls on the edge for SMC (d), (h) Proposed controller makes quadrotor hold its position, while SMC shoots the quadrotor up}
     \label{fig:exp_snaps}
\end{figure*}

This section experimentally verifies the effectiveness of the proposed controller for precision maneuvering in comparison with SMC (\cite{xu2008sliding})\footnote{Note that the standard BLF-based controller for quadrotors \cite{dasgupta2019singularity, kumar2020barrier} are not robust to uncertainties; therefore, in the face of parametric uncertainty (attachment and release of payload) and external disturbances, tracking error might approach the constraints leading to high control input and potential operational hazard by crashing the quadrotor. Therefore, real-time experiments with standard non-robust BLF methods were avoided.}.   
\newpage
\subsection{Experimental setup and scenario}
A quadrotor (1.4 kg) is tasked follow a ground robot (Pioneer 3-DX) and drop a payload on a tray placed on top of it (cf. Fig. \ref{fig:exp_snaps}). 
The quadrotor setup includes one raspberry pi-4 processing unit and one electromagnetic gripper ($0.03$ kg). Optitrack motion-capture system (at 60 fps) is used to obtain quadrotor's pose. Note that the gripper operations are carried out via a remote signal, which is separate from the control design.
\begin{figure}[hbt]
    \includegraphics[width=5.5in,height=4in]{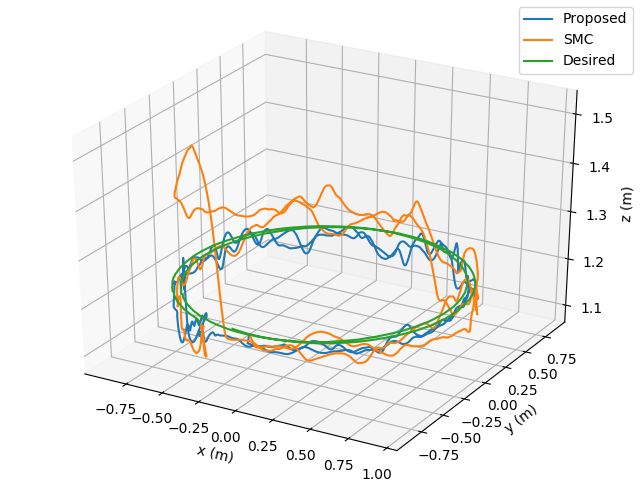}
   \centering
   \caption{Circular path tracking performance comparison.}
     \label{fig:3d_plot}   
 \end{figure}
\begin{figure}[ht!]
    \includegraphics[width=5in,height=3.5in]{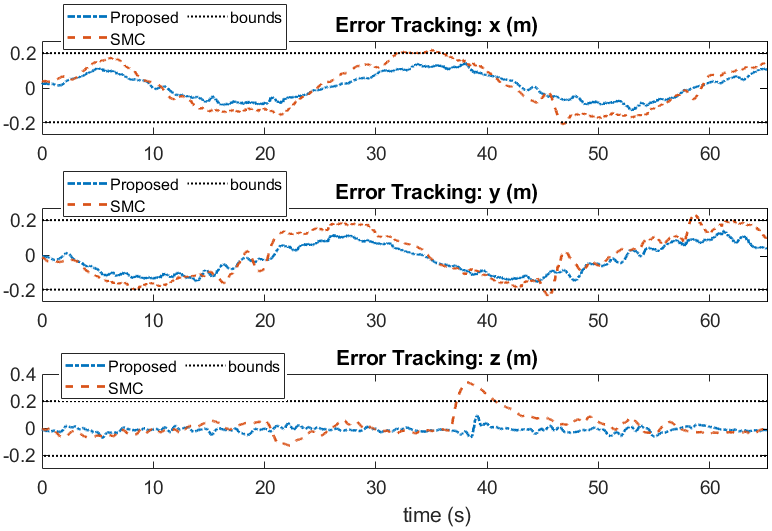}
    \centering
    \caption{Position tracking error comparison.}
    \label{fig:exp_pos_error}   
\end{figure}
\begin{figure}[ht!]
    \includegraphics[width=5in,height=3.5in]{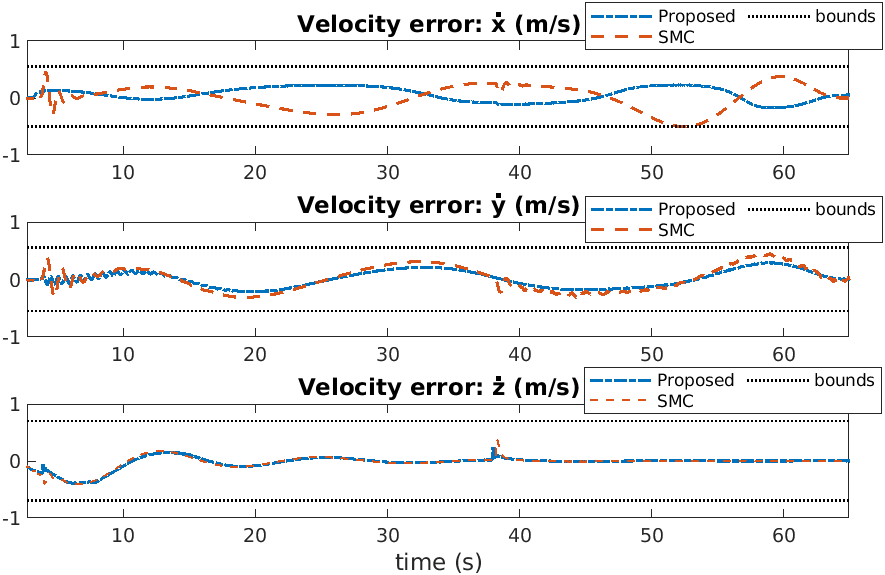}
    \centering
    \caption{Velocity tracking error comparison.}
    \label{fig:exp_vel_error}   
\end{figure}
\begin{figure}[ht!]
    \includegraphics[width=5in,height=3.5in]{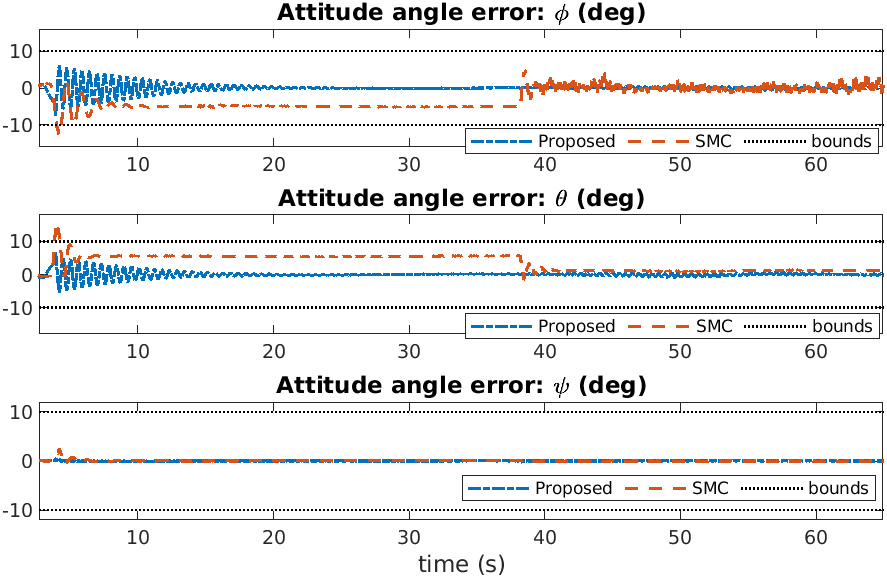}
    \centering
    \caption{Attitude tracking error comparison.}
    \label{fig:exp_att_error}
\end{figure}
\begin{figure}[ht!]
    \includegraphics[width=5in,height=3.5in]{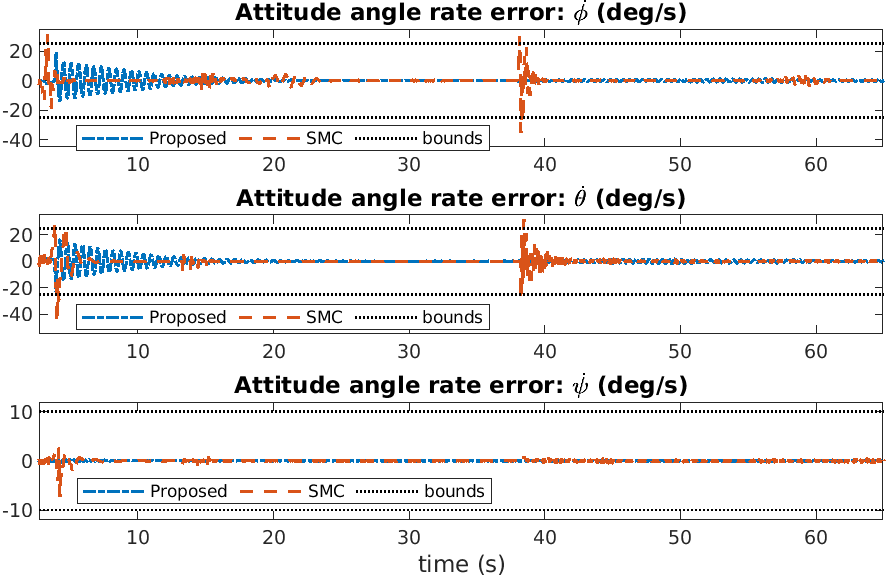}
    \centering
    \caption{Attitude-rate tracking error comparison.}
    \label{fig:exp_attrate_error}
\end{figure}
\FloatBarrier
\begin{figure}[ht!]
    \includegraphics[width=5in,height=3.5in]{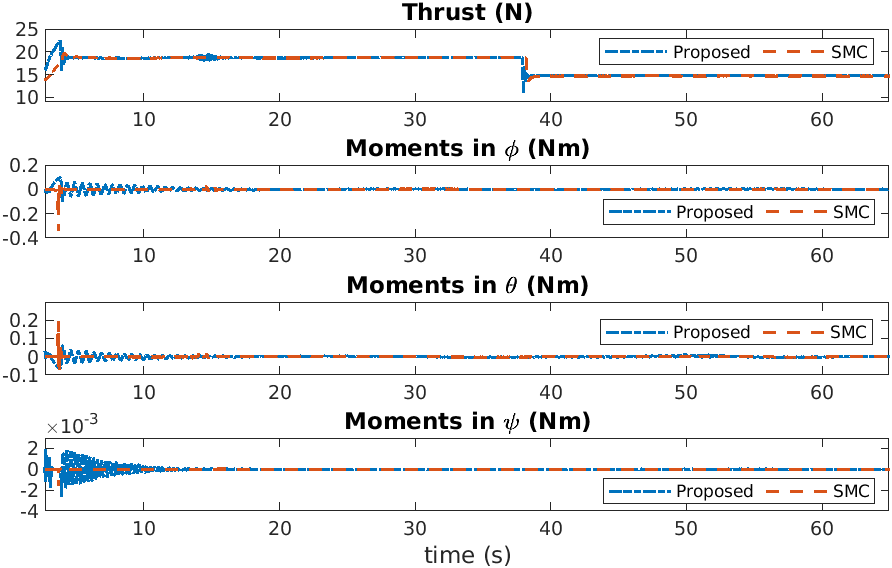}
    \centering
    \caption{Control signals comparison.}
    \label{fig:exp_control}   
\end{figure}
The experimental scenario is described as follows: (i) Pioneer 3-DX  is commanded to make two rounds of a circular path of radius $1$m; (ii) the $x-y$ coordinates of Pioneer 3-DX from Optitrack system are used as the desired setpoints for the quadrotor with a desired constant altitude of $z_d(t)=1.2$m, so that it flies over the ground robot all time ($\psi_d=0$, and $\phi_d, \theta_d$ follow Remark 3 of Chapter 3); (iii) the quadrotor initially carries a payload of $0.3$ kg for one complete round and drops it at $t=37$s; (iv) then, it tracks the ground robot for another turn without the payload. 

Control parameters are selected as: constraint parameters $[k_{p_1}, k_{p_2}, k_{p_3}]= [0.2, 0.2, 0.2]$~m (according to the size of the tray), $[\dot{k}_{p_1}, \dot{k}_{p_2}, \dot{k}_{p_3}]= [0.55, 0.55, 0.7]$~m/s, $ [k_{q_1}, k_{q_2}, k_{q_3}]=[ 0.174, 0.174, 0.174]$~rad, and $[\dot{k}_{q_1}, \dot{k}_{q_2}, \dot{k}_{q_3}]=[ 0.43, 0.43, 0.174]$ rad/s; $\Lambda_{1p}=\diag \lbrace 0.5, 0.5, 0.5 \rbrace $, $\Lambda_{2p} = \diag{\{10, 10, 10\}}$, $\Lambda_{1q}=\diag \lbrace 20, 20, 12 \rbrace$, $\Lambda_{2q} = \diag{\{15, 15, 15\}}$, $\bar{m}=1.5$, \\
$\bar{\mathbf{J}}=\diag{\{0.02, 0.02, 0.04\}}$, $E_p=E_q = 0.3$, $\epsilon_p = 0.1$, $\epsilon_q = 1$, and the upper-bounds of $\|\mathbf{d_p}\|$ and $\|\mathbf{d_q}\|$ as 1.73 and 0.173, respectively. For SMC, sliding surfaces are selected as:  $\mathbf{s_p}=\dot{\mathbf{z}}_{\mathbf{p}} + \Lambda_{1p} \mathbf{z_p}$, $\mathbf{s_q}=\dot{\mathbf{z}}_{\mathbf{q}} + \Lambda_{1q} \mathbf{z_q}$.
\newpage
\subsection{Experimental results and analysis}
\begin{table}[htb!]
\begin{center}
\renewcommand{\arraystretch}{1.6}
		\centering
{
{	\begin{tabular}{|c| c c c c c c|}
		
	\hline
	Controller	& \multicolumn{3}{c}{RMS error (m)} & \multicolumn{3}{c|}{RMS error (degree)}  \\ \cline{1-7}
		  & $x$ & $y$  & $z$  & $\phi$ & $\theta$  & $\psi$  \\
		\hline
		SMC & 0.12 & 0.12  & 0.09 & 3.84 & 4.22 & 0.17 \\
		\hline
		Proposed & {0.08} & {0.08}  & {0.02} & 1.08 & 0.90 & 0.03 \\
		\hline
& \multicolumn{3}{c}{Peak absolute error (m)} & \multicolumn{3}{c|}{Peak absolute error (degree)}  \\ \cline{1-7}
		 & $x$ & $y$  & $z$  & $\phi$ & $\theta$  & $\psi$  \\
		\hline
		SMC & 0.22 & 0.24  & 0.34 & 12.75 & 15.72 & 2.53 \\
		\hline
		Proposed & {0.15} & {0.16}  & {0.09} & 7.59 & 6.66  & 0.46 \\
		\hline
\end{tabular}}}
\caption{{Position tracking performance comparison}}
\label{table 3}
\end{center}
\end{table}
\begin{table}[htb!]
\renewcommand{\arraystretch}{1.6}
		\centering
{
{	\begin{tabular}{|c| c c c c c c|}
    \hline
	Controller	& \multicolumn{3}{c}{RMS error (m/s)} & \multicolumn{3}{c|}{RMS error (degree/s)}  \\ \cline{1-7}
		  & $\dot{x}$ & $\dot{y}$  & $\dot{z}$  & $\dot{\phi}$ & $\dot{\theta}$  & $\dot{\psi}$  \\
		\hline
		SMC & 0.21 & 0.20  & 0.11 & 2.61 & 3.57  & 0.40 \\
		\hline
		Proposed & {0.19} & {0.13}  & {0.08} & {2.48} & 2.69 & 0.05 \\
		\hline
& \multicolumn{3}{c}{Peak absolute error (m/s)} & \multicolumn{3}{c|}{Peak absolute error (degree/s)}  \\ \cline{1-7}
		 & $\dot{x}$ & $\dot{y}$  & $\dot{z}$  & $\dot{\phi}$ & $\dot{\theta}$  & $\dot{\psi}$  \\
		\hline
		SMC & 0.50 & 0.46  & 0.37 & 35.51 & 46.61  & 7.31 \\
		\hline
		Proposed & {0.26} & {0.29}  & {0.36} & 19.04 & 21.12  & 0.54 \\
		\hline
\end{tabular}}}
\caption{{Velocity tracking performance comparison}}
\label{table 2_chap_4}
\end{table}
\FloatBarrier
The performance comparison of the proposed controller and SMC are highlighted via Figs. \ref{fig:3d_plot} - \ref{fig:exp_control} and via Tables \ref{table 3}-\ref{table 2_chap_4} (in terms of root-mean-squared (RMS) and absolute peak error). Figures \ref{fig:exp_pos_error}-\ref{fig:exp_attrate_error} reveal that SMC breached the bounds in $(x, y, z, \phi, \theta)$ and in ($\dot{\phi}$, $\dot{\theta}$), while it was too close to the bounds in $\dot{x}, \dot{y}$: these resulted in the quadrotor dropping the payload on the edge of the tray and eventually falling outside the tray (cf. snapshots (g) and (h) in Fig. \ref{fig:exp_snaps}) in the case of SMC;
whereas, the proposed controller could maintain the errors of all controlled states (position, orientation, linear and angular velocity) well within the bounds and the payload was dropped within the tray (cf. snapshot (c) and (d) in Fig. \ref{fig:exp_snaps}). A sudden deviation in the altitude can be observed for SMC at $t=37$s in snapshot (h) of Fig. \ref{fig:exp_snaps} and in Figs. \ref{fig:3d_plot} and \ref{fig:exp_pos_error} when the payload is dropped. These transients happen because higher thrust was required while carrying the payload to maintain the altitude compared to the no payload condition (cf. Fig. \ref{fig:exp_control}); just after releasing the payload, this additional thrust tends to push the quadrotor upwards. Left unattended, this would be hazardous in confined spaces. Whereas, the proposed controller could maintain the altitude within the bound (cf. snapshot (d) of Fig. \ref{fig:exp_snaps} and Figs. \ref{fig:3d_plot} and \ref{fig:exp_pos_error}).

\section{Conclusion}
A robust controller with Barrier Lyapunov approach for quadrotors was formulated to ensure tracking performance with full state-constraints under parametric uncertainties and external disturbances. Closed-loop system stability was established analytically and the performance of the proposed controller was experimentally verified. 


\chapter{Conclusions}
\label{ch:conc}
In this thesis, comprehensive robust control solutions for a quadrotor manoeuvring under multiple state-constraints was provided. For the constraint handling, the Barrier Lyapunov Function (BLF) method was followed. Chapter 1 detailed the challenges faced by quadrotor tasked with manoeuvring operations in tight spaces which acted as strong motivation behind the thesis construction, and introduced fundamental principles that were utilized in the chapters that followed. Chapter 2 successfully demonstrated the effectiveness of the BLF based control method over unconstrained methods such as the PID controller, by designing a BLF controller on the generalized n-DoF Euler-Lagrange dynamics. Chapter 3 utilized the potential of the BLF control method and successfully designed a robust controller with user-specified constraints on the position and orientation of a quadrotor under parametric uncertainties and external disturbances; the efficacy of the proposed control method was validated via extensive realistic simulation scenarios on the Gazebo platform where the results obtained were compared with Sliding Mode Control (SMC). Lastly, Chapter 4 designed a robust controller with full state-constraints (i.e., constraints on the position, attitude, linear velocity and angular velocity) on a quadrotor under uncertainties and disturbances. The performance of the proposed controller was validated via experiments on a real quadrotor carrying payload, where the proposed controller outperformed SMC. 

Throughout the thesis, the initial trajectory errors must lie within the imposed bounds for the feasibility of the control method. This opens up opportunities to explore the implementation of variable constraints on a quadrotor, where the initial values of the bounds could be made greater than their steady state values, increasing the window of feasibility. Additionally, the possibility of implementing constraints on the actuators in BLF based robust control could also be explored. Finally, a challenging future work could be to implement state-constraints on the quadrotor in an adaptive setting with unknown uncertainty bounds.


\chapter*{Related Publications}
\label{ch:relatedPubs}
\section{Main Publications}

\begin{itemize}
    \item S. Ganguly, V. N. Sankaranarayanan, B. V. S. G. Suraj, R. D. Yadav and S. Roy, “Efficient Manoeuvring of Quadrotor under Constrained Space and Predefined Accuracy”, \textit{IEEE/RSJ International Conference on Intelligent Robots and Systems, 2021}.
    
    \item S. Ganguly, V. N. Sankaranarayanan, B. V. S. G. Suraj, R. D. Yadav and S. Roy, “Robust Manoeuvring of Quadrotor under Full State Constraints”, \textit{Advances in Control and Optimization of Dynamical Systems, 2022}. 
\end{itemize}

\section{Other Publication(s)}

\begin{itemize}
    \item V. N. Sankaranarayanan, R. D. Yadav, R. K. Swayampakula, S. Ganguly and S. Roy, “Robustifying Payload Carrying Operations for Quadrotors under Time-varying State Constraints and Uncertainty”, \textit{IEEE Robotics and Automation Letters (RA-L)}.  (Submitted)

\end{itemize}


\bibliographystyle{IEEEtran}
\bibliography{sampleBib} 

\end{document}